\newcommand{\hol}{{\rm Hol}}  
\def \s {\sigma}
\def \t {\tau}
\def \M {{\cal M}}
\newcommand{\hlgt}{higher lattice gauge theory }
\newcommand*{\da@rightarrow}{\mathchar"0\hexnumber@\symAMSa 4B }
\newcommand*{\da@leftarrow}{\mathchar"0\hexnumber@\symAMSa 4C }
\newcommand*{\xdashrightarrow}[2][]{%
  \mathrel{%
    \mathpalette{\da@xarrow{#1}{#2}{}\da@rightarrow{\,}{}}{}%
  }%
}
\newcommand{\xdashleftarrow}[2][]{%
  \mathrel{%
    \mathpalette{\da@xarrow{#1}{#2}\da@leftarrow{}{}{\,}}{}%
  }%
}
\newcommand*{\da@xarrow}[7]{%
  \sbox0{$\ifx#7\scriptstyle\scriptscriptstyle\else\scriptstyle\fi#5#1#6\m@th$}%
  \sbox2{$\ifx#7\scriptstyle\scriptscriptstyle\else\scriptstyle\fi#5#2#6\m@th$}%
  \sbox4{$#7\dabar@\m@th$}%
  \dimen@=\wd0 %
  \ifdim\wd2 >\dimen@
    \dimen@=\wd2 %
  \fi
  \count@=2 %
  \def\da@bars{\dabar@\dabar@}%
  \@whiledim\count@\wd4<\dimen@\do{%
    \advance\count@\@ne
    \expandafter\def\expandafter\da@bars\expandafter{%
      \da@bars
      \dabar@
    }%
  }%
  \mathrel{#3}%
  \mathrel{%
    \mathop{\da@bars}\limits
    \ifx\\#1\\%
    \else
      _{\copy0}%
    \fi
    \ifx\\#2\\%
    \else
      ^{\copy2}%
    \fi
  }%
  \mathrel{#4}%
}
\def \N {\mathbb{N}}
\def \sgn {\mathrm{sgn}}
\def \trrr {\bullet}  
\newcommand{\dvQ}{\d_v^Q}  
\newcommand{\dv}{\d_v}  
\newcommand{\iv}{\iota_v}  
\newcommand{\ivb}{\overline{\iota_v}}  
\newcommand{\ivx}{\overline{\iota_x}}
\newcommand{\ivbp}{\overline{\iota_{v'}}}  
\newcommand{\dvp}{\d_{v'}}  
\newcommand{\ivp}{\iota_{v'}}  
\newcommand{\htwo}{ {g_{\gamma_1} \trr e_{P_1}^{\theta_1}\,\,g_{\gamma_2} \trr  e_{P_2}^{\theta_2}\,\,
                       \dots\,\, g_{\gamma_N} \trr  e_{P_N}^{\theta_N} }}
\newcommand{\be}{\begin{equation}}
\newcommand{\ee}{\end{equation}}
\newcommand{\bq}{\begin{eqnarray}}
\newcommand{\eq}{\end{eqnarray}}
\newcommand{\trr}{\triangleright}
\newcommand{\Z}{{\mathbb Z}}
\newtheorem{Theorem}{Theorem}
\newtheorem{lemma}[Theorem]{Lemma}
\newtheorem{proposition}[Theorem]{Proposition}
\theoremstyle{definition}  
\newtheorem{definition}[Theorem]{Definition}
\newtheorem{example}[Theorem]{Example}
\newtheorem{remark}[Theorem]{Remark}
\newcommand{\bluex}[1]{}
\newcommand{\redx}[1]{}
\newcommand{\dd}{d}  
\def \d {\partial}
\def \bound {\mathrm{bd}}
\newcommand{\dL}{\d_L}  
\newcommand{\iL}{\iota_L}
\def \ra {\xrightarrow}
\def \Dna {{\rm int}({D^n})}
\def \D {{\cal D}}
\def \A {{\cal A}}
\def \B {{\cal B}}
\def \g {\gamma}
\def \Gc {{\cal G}}
\def \C {{\cal C}}
\def \H {{\cal H}}
\def \fPhi {\Theta_{\rm 2flat}}
\def \U {{\cal U}}
\def \F {{\cal F}}
\def \id {{\mathrm{id}}}
\def \l {\lambda}
\def \aut {{\rm Aut}}
\newcommand{\freegl}{\fg{ L^0,L^1 }}   
\newcommand{\catfun}{(\freegl,G)}  
\newcommand{\CC}{{\mathbb C}}
\newcommand{\RR}{{\mathbb R}}
\newcommand{\vac}{\Omega_1}   
\newcommand{\dG}{\partial_{\Gc}}  
\newcommand{\PhiM}{\Theta(M,L,\Gc ) }  
\newcommand{\ignorex}[1]{}
\newcommand{\E}{{\mathsf E}}  
\begin{document}
\title{Higher lattices, discrete two-dimensional holonomy and topological phases in (3+1)D with higher gauge symmetry}
\author[1,2]{Alex Bullivant\footnote{E-mail address: 
py09alb@leeds.ac.uk}}
\affil[1]{School of Physics and Astronomy, University of Leeds, Leeds, LS2 9JT, United Kingdom}
\author[2,3]{Marcos Cal\c cada\footnote{E-mail address: mcalcada@uepg.br}}
\author[2]{Zolt\'an K\'ad\'ar\footnote{E-mail address: Zoltan.Kadar@morganstanley.com}}
\affil{Department of Pure Mathematics, University of Leeds, Leeds, LS2 9JT, United Kingdom}
\author[2]{Jo\~ao Faria Martins\footnote{E-mail address: j.fariamartins@leeds.ac.uk}}
\author[2]{Paul Martin\footnote{E-mail address: 
p.p.martin@leeds.ac.uk}}
\affil{Departamento de Matem\'atica e Estat\'{\i}stica,
  Universidade Estadual de Ponta Grossa, Ponta Grossa, PR,
  Brazil}
\date{\today}
\maketitle
\begin{abstract}Higher gauge theory is a higher order version of gauge theory that makes possible the definition of 2-dimensional holonomy along surfaces embedded in a manifold where a gauge 2-connection is present. In this paper, we will continue the study of Hamiltonian models for discrete higher gauge theory on a lattice decomposition of a manifold.
 In particular, we show that a previously proposed construction for higher lattice gauge theory is well-defined, including in particular a Hamiltonian for topological phases of matter in 3+1 dimensions. Our construction builds upon the Kitaev quantum double model, replacing the finite gauge connection with a finite gauge 2-group 2-connection. Our Hamiltonian higher lattice gauge theory model is defined on spatial manifolds of arbitrary dimension presented by slightly {\it combinatorialised} CW-decompositions (2-lattice decompositions), whose 1-cells and 2-cells carry discrete 1-dimensional and 2-dimensional holonomy data.  We prove that the ground-state degeneracy of Hamiltonian higher lattice gauge theory is a topological invariant of manifolds, coinciding with the number of homotopy classes of maps from the manifold to the classifying space of the underlying gauge 2-group.

The operators of our Hamiltonian model 
are closely related to discrete 2-dimensional holonomy
operators for discretised 2-connections on manifolds with a 2-lattice
decomposition.  
We therefore address the
definition of discrete 2-dimensional holonomy for surfaces embedded
 in 2-lattices.
 Several 
 results concerning the
 well-definedness of discrete 2-dimensional holonomy, and its
 construction in a combinatorial and algebraic topological setting
 are presented.
 \end{abstract}

\noindent{\bf Keywords:} {Kitaev Model; topological phases in 3+1D; topological quantum computing; topological quantum field theory; higher gauge theory; surface holonomy; crossed module; lattice gauge theory.}

\section{{Introduction}}

In the absence of external symmetries, a topological phase of matter 
is characterised by a local, gapped, quantum many-body Hamiltonian
whose effective (infra-red) field theory is described by a topological
quantum field theory
(TQFT) \cite{Fradkin,Nayak,williamson2016hamiltonian,Wangtqc}.
A {topological phase} is therefore  diffeomorphism invariant,
and thus insensitive to local perturbations in the sense
that the amplitudes of physical processes are global topological
invariants.
It is this latter property which makes
{topological phases of matter}   
candidates for the implementation of fault
tolerant quantum
computing \cite{Wangtqc,pachos2012introduction,kitaev2003fault,Nayak}. 

Due to a lack of local observables, experimentally distinguishing
different topological phases can be a difficult task from a
microscopic point of view. Instead the characterising properties of
topological phases are most efficiently described by their emergent
behaviours. Signatures for the presence of topological order include
ground state degeneracies which depend on the spatial topology of the
material in question \cite{wen1990ground,kapustin2014ground},
universal negative corrections to the entanglement
entropy \cite{KitaevPreskill,HammaZanardi,LevinWen} and the presence
of stable topological excitations which provide non-trivial
representations of their respective motion
groups \cite{kong2014braided,kadar2016local}.  
This means the braid group for
point particles (anyons) in 2+1D  \cite{Nayak} and the  loop braid
group for loop excitations in 3+1D  \cite{WalkerWang}. 

In 2+1D there exist 
several constructions for TQFTs
(see for instance  \cite{Turaev}).
Path-integral models 
arise
from Chern-Simons-Witten theory  \cite{witten1989quantum} and from BF
theory  \cite{baezBF}, {while the} discrete realisation of BF-theory
coincides with the
Turaev-Viro \cite{Turaev}/Barrett-Westbury \cite{BW1996} state-sum
(see  \cite{Baez_BF2}).
{In contrast, in 3+1D 
a framework  general enough to capture all features of 4D
topology is still lacking.}
Nevertheless, we have the Crane-Yetter
TQFT \cite{CraneYetterTQFT,Turaev} and its
generalisations  \cite{williamson2016hamiltonian,cui2016higher}; and
the Yetter homotopy 2-type
TQFT  \cite{yetter_tqft,porter_tqft,martins_porter},
derived from a strict finite 2-group  \cite{baez_lauda}.
All of these 3+1D TQFTs give rise
to topological invariants which  at most depend on the homotopy
2-type, signature and spin-structure of
space-time  \cite{Turaev,martins_porter}, or are conjectured to do
so. 

{One successful approach to understanding candidate models for
$(d+1)$D topological phases has been to define Hamiltonian
realisations of $(d+1)$D TQFTs
\cite{Rowell_Wang,williamson2016hamiltonian}. This  means that a
finite dimensional Hilbert space $V(M,L)$, and an exactly solvable
(the sum of mutually commuting projectors)  Hamiltonian $H_L\colon
V(M,L) \to V(M,L)$ is assigned to each $d$-manifold $M$, with a given
{\em lattice decomposition} $L$  (e.g. $L$ can be a triangulation or a
CW-decomposition of $M$).
The constructions of both $V(M,L)$ and
$H_L$ should be local on 
$L$.
To say that
such a Hamitonian schema \cite{Rowell_Wang} is a realisation of the
TQFT ${\cal Z}$ roughly means that given a $d$-manifold $M$ each
ground state vector space $GS(M,L)$ of
$H_L$
is canonically isomorphic to ${\cal Z}(M)$.
(In particular this implies that the ground state degeneracy
$\dim(GS(M,L))$ does not depend on  
$L$  
and it is a topological invariant of $M$.) 
In 2+1D, this {Hamiltonian realisation} approach has been
successfully achieved in the case of Dijkgraaf-Witten topological gauge
theories \cite{dijkgraaf1990topological,kitaev2003fault,hu2013twisted,wan2015twisted}
and the Turaev-Viro TQFT, giving rise to the so-called string-net
models \cite{Stringnet}.
The 
Kitaev quantum-double
model  \cite{kitaev2003fault} can be seen as a Hamiltonian realisation
for the  Dijkgraaf-Witten TQFT   \cite{dijkgraaf1990topological} with
trivial cocycle and thus also for finite-group BF-theory. 
Similar ideas were applied to the 3+1D  Crane-Yetter TQFT  \cite{WalkerWang,Simon}, giving rise to the Walker-Wang model.}

A Hamiltonian realisation of  Yetter's homotopy 2-type TQFT was
constructed in  \cite{williamson2016hamiltonian,BCKMM}. This is a
higher gauge theory version of Kitaev quantum-double
model \cite{kitaev2003fault}.
It is this that we continue to  develop in this paper. {We note that topological phases protected by higher gauge symmetry are also proposed in \cite{C}.}

Higher gauge theory \cite{BaezHuerta11,baez_schreiber} is a
generalisation of ordinary gauge theory with further levels of
structure and symmetry. A key feature of higher gauge theory is
parallel transport along surfaces embedded in a manifold where a gauge
2-connection is
present \cite{BaezHuerta11,baez_schreiber,martins_picken,Zu1}. In
higher gauge theory, instead of local gauge symmetry groups we have
local gauge symmetry 2-groups.
{These, recall, are equivalent to crossed modules of groups $\Gc=(\d\colon E  \to
G,\trr)$ \cite{brown_higgins_sivera,Barrett_Mackaay,baez_lauda}. (Recall
$\d\colon E \to G$ is a map of groups and $\trr$ is a left action of
$G$ on $E$ by automorphisms, satisfying some compatibility relations:
the 1st and 2nd Peiffer relations.).}

{
In this paper,
completing the programme initiated in \cite{BCKMM},
we define an exactly solvable Hamiltonian model for \hlgt  on
manifolds $M$ of any dimension, {here called the ``higher Kitaev model''}. This lifts Kitaev's quantum-double
model  \cite{kitaev2003fault} for topological phases from finite
group topological gauge theory to finite 2-group topological higher
gauge theory \cite{BFCG1,BFCG2}. Typically $M$ will be a 3-dimensional manifold, and the higher Kitaev model is proposed to be a model for (3+1)-dimensional topological phases \cite{WalkerWang,williamson2016hamiltonian,wan2015twisted,BCKMM,Simon,kong2014braided,SimonPhase,C,D}.}
We  prove that the ground state
degeneracy of the {higher Kitaev model} is a topological (in fact homotopy) invariant
of manifolds. Specifically, we show that the ground state degeneracy
is given by the number of homotopy classes of maps from $M$ to the
classifying space of the underlying gauge symmetry
2-group \cite{brown_higgins_sivera,martins_porter}; hence the ground
state degeneracy is closely related \cite{martins_porter} to Yetter
homotopy 2-type TQFT.
(The precise relation appears in \cite{BCKMM}, where a {proof} 
that {the higher Kitaev model} is a Hamiltonian realisation of Yetter
homotopy 2-type TQFT is given.)

\subsection{{Higher lattices and higher lattice gauge theory}}

Our model utilises ideas from
higher lattice gauge theory  {\cite{Pfeiffer,C}.} Similarly
to   \cite{kitaev2003fault}, we take lattice gauge theory as the
starting point (with its good connection to physical
observation \cite{Wilson74,Kogut79}) and  
lift the
structure through the process of categorification. We thereby replace
the gauge group with a gauge 2-group and a gauge connection
discretised on a lattice with a discretised higher gauge connection,
here called a {\em fake-flat 2-gauge configuration}. {Therefore we
enrich the local variables of lattice gauge theory (holonomies along
edges) to include non-abelian 2-dimensional holonomies along the faces
of the lattice;  recall again that 2-dimensional holonomies feature
prominently in higher gauge
theory; see  \cite{baez_schreiber,martins_picken,BaezHuerta11,Zu1}.} 


{The model constructed in this paper extends {and formalises} the
proposal of  \cite{BCKMM} for a Hamiltonian model for the Yetter
homotopy 2-type TQFT \cite{yetter_tqft}, from  triangulated manifolds
to manifolds with a slightly combinatorialized version of
CW-decompositions (here called {\em 2-lattice decompositions} --- see
Definition \ref{2-lattice}). {Hence a 2-lattice decomposition  $L$ represents a manifold $M$ as a disjoint union of $i$-cells, here $i$ is an arbitrary non-negative integer, where each $i$-cell homeomorphic to the interior of the $i$-disk $[0,1]^i$. As costumary, $0$-cells are called vertices, 1-cells are called edges, 2-cells are called faces (or plaquettes), and 3-cells are called blobs. These 2-lattice decomposition are considerably less rigid than triangulations. Therefore}  
using 2-lattice decompositions of manifolds, as opposed to triangulations, 
to decompose a manifold into smaller pieces has the advantage that
fewer cells are needed to decompose a manifold, leading to microscopic
Hilbert spaces of much smaller rank. We illustrate this fact
by  {describing two small models for discrete higher gauge theory
in the 3-sphere}.} 

{Many constructions in this paper would still work if we use
CW-complex decompositions of manifolds rather than 2-lattice
decomposition; however a lot of the combinatorial flavour presented in
the final construction of the Hamiltonian model would be lost.} 
{By using 2-lattice decompositions instead of triangulations some
combinatorics is taken away; therefore, despite the fact that our model is
fully combinatorial, some  
algebraic topology will be required
in proving that it is well-defined.}

{By passing from triangulations to 2-lattices, we hence
demonstrate the internal consistency of the model in  \cite{BCKMM},
which tacitly assumed that discrete 2-dimensional holonomy of a
discrete higher gauge field is well-defined, for instance when proving
in {\it loc cit} that the ground state degeneracy is a topological
invariant derived from Yetter TQFT, and as such {that} our model is a
Hamiltonian realisation of Yetter TQFT.}

{Prominent in this paper is the  
concept of a fake-flat 2-gauge configuration in a 2-lattice, to be a
discretised model of a higher gauge field; as well as the construction
of discrete 2-dimensional holonomy operators for surfaces cellularly
embedded in a 2-lattice. (Fake-flat 2-gauge configurations are in line
with the framework for higher lattice gauge theory
of \cite{Pfeiffer,martins_porter} and also appear in formal homotopy
quantum field theory constructions; see \cite{PorterTuraev}). We
carefully construct these discrete 2-dimensional holonomy operators,
in an algebraic topological (\S \ref{atd}) and  in a combinatorial
manner (\S \ref{cd}), and, using 
algebraic topology,
prove that this discrete 2-dimensional holonomy is gauge invariant and
independent of the way we combine the faces of a particular
CW-decomposition of the 2-sphere and of the  2-disk. 
These are 
results, of  
intrinsic interest.
They provide a combinatorial construction of the 2-dimensional holonomy of
a higher order bundle, completing its differential geometrical
construction  discussed for example in  
 \cite{baez_schreiber,schreiber_waldorf2,martins_picken,BaezHuerta11}.}

{In sections \ref{ss:hlgt}, \ref{2dholonomy}  and \ref{GTRANS} we
lift the construction of ordinary {lattice} gauge 
theory to a higher setting, as outlined in  \cite{BCKMM,Pfeiffer}.
Let us summarise the general procedure. }

{A gauge configuration of ordinary lattice gauge theory
with gauge group $G$ is
{given by a}
 map
from the set of (by definition oriented) edges of the lattice into
$G$.
The well-definedness of lattice gauge
theory can be expressed by 
saying that there is a lattice groupoid supporting well-defined groupoid
maps (here called {\em discrete parallel transport
functors} \S\ref{dptf}) to a gauge group anytime a gauge configuration
is given.   
The `lattice groupoid' is a groupoid version of the free category over
a graph (see for example  \cite{MacLane,Higgins}) for a suitable graph
derived from the lattice.
It is the freeness that makes discrete parallel transport functors well defined.  
A `suitable graph' is (it is claimed) the 1-skeleton of a
suitable CW-complex decomposition of physical space. If we aim for {\em
	topological}
field theory then in principle any  sufficiently regular CW-complex will
do.
Normally there is a notion of local structure --- chunks of space
that are independent of each other, which collectively encode 
extended structure.
In this sense, the `big story' of lattice gauge theory is that 
the free groupoid over a suitable lattice is an adequate model of
physical space.} 

Our first task here is to construct a well-defined lift of these notions to the higher setting. The main tool is the concept of a lattice 2-groupoid (to be a model of space in lattice higher gauge theory), which in  this paper is constructed  in an algebraic topological language as the fundamental crossed module $\Pi_2(M^2,M^1,M^0)$ (see \cite{brown_hha} and \cite[Chapter 6]{brown_higgins_sivera}) of a certain filtered space associated to a 2-lattice decomposition $L$ of the manifold $M$; see \S \ref{revWhite}. We will make a very strong use of a  freeness result for the lattice 2-groupoid, which essentially is a classical freeness theorem of Whitehead \cite{Whitehead,brown_whitehead}, transported to the groupoid setting by Brown and Higgins; see  \cite[6.8]{brown_higgins_sivera} and {\cite{brown_2dvk,brown_higgins_colimits,brown_higgins_cubes}}. Whitehead's theorem  provides also an equivalent combinatorial definition of the lattice 2-groupoid {of $(M,L)$, {i.e. of a pair} consisting of a manifold with a 2-lattice decomposition.}  

Let $M$ be a manifold with a 2-lattice decomposition $L$. Given a crossed module $\Gc=(\d\colon E \to G,\trr)$, representing the underlying gauge symmetry 2-group, a 2-gauge configuration is defined as a map assigning an element of the group $E$ to each (pointed and oriented) face of  $L$ and an element of $G$ to each (oriented) edge of $L$. Physically relevant configurations  furthermore satisfy a certain compatibility condition
--- called fake-flatness. This is a discretised version of the well-established fake-flatness condition for differential geometrical 2-connections; see \cite{BaezHuerta11,baez_schreiber,schreiber_waldorf2,martins_picken}. The term fake-flatness was first used in \cite{BrM}.

In analogy to lattice gauge  theory, we prove that any fake-flat 2-gauge configuration $\F$ extends uniquely to a crossed module map  (called a {\em discrete parallel transport 2-functor}) from the lattice 2-groupoid of $(M,L)$ into the underlying gauge 2-group $\Gc$; see {\S\ref{revWhite}.} These discrete parallel transport 2-functors are a discrete version of the differential-geometrical parallel transport 2-functors of \cite{schreiber_waldorf2,martins_picken}. Given an oriented {2-disk} or 2-sphere $\Sigma$ embedded in $M$, as a subcomplex, and a vertex $v$ of $\Sigma$, we can then combine the 1-dimensional and 2-dimensional holonomies of the constituting pieces of $\Sigma$, and obtain an $E$-valued 2-dimensional holonomy ${\rm Hol}_v^2(\F,\Sigma,L)$ of the {fake-flat} 2-gauge configuration $\F$ along $\Sigma$. These are the 2-dimensional holonomy operators previously referred to. By using some basic algebraic topology, {and the fact that the oriented mapping class groups of the 2-sphere and of the 2-disk both are trivial,} we can then provide algebraic-topological and combinatorial descriptions of  ${\rm Hol}_v^2(\F,\Sigma,L)$, and also show that the discrete 2-dimensional holonomy  ${\rm Hol}_v^2(\F,\Sigma,L)$ of $\F$ along $\Sigma$ depends only on the base-point (in a way controlled by the action of $G$ on $E$) and on the surface orientation, and not on any other  data such as the order of multiplication of constituent 2-cells. This latter result does not apply (in this form) to other surfaces since the mapping class group is then more complicated: {in general an isotopy class of embeddings is needed to define the 2-dimensional holonomy} of a 2-gauge connection along an embedded surface. For discussion see \cite{martins_picken,Zu1}.

{Playing a prominent role in the construction of our model, we
introduce  gauge transformations between {fake-flat} 2-gauge
configurations. Gauge transformations  initially come in two different
types: vertex and edge types. These correspond to the thin and fat
gauge transformation of \cite{BFCG2}. Vertex and edge gauge
transformations obey a semi-direct product structure, and can be
assembled into a group of gauge operators, which acts on the set of
fake-flat 2-gauge configurations. This action is explicitly
constructed using a double category derived from the crossed
module \cite{brown_higgins_cubes,martins_picken}, {and ultimately
originates from a groupoid of fake-flat 2-gauge configurations and
`full gauge transformations between them', which we will carefully
construct {in} \S \ref{fgtb}.}
 We prove in \S \ref{ph1} that gauge transformations preserve
the 2-dimensional holonomy of {fake-flat} 2-gauge configurations along
cellularly embedded 2-spheres in $M$.}   

{{As  mentioned in the previous paragraph}, a major underpinning construction is that of a groupoid of fake-flat 2-gauge configurations and full gauge
transformation between  them  \S \ref{fgtb}.}
{The latter groupoid can be seen as a combinatorial description of a certain groupoid of crossed complex (a generalisation of crossed modules) maps and their homotopies, which  appeared in the work of Brown and Higgins on tensor products and homotopies of crossed complexes; see \cite{Brown_tensor,brown_hha}. This point of view will be essential when we discuss the ground state degeneracy of the higher Kitaev model in \S \ref{gsd}.}

{Let $\Gc=(\d\colon E \to G,\trr)$ be a crossed module, representing the underlying gauge symmetry 2-group. Let $L$ be a 2-lattice decomposition of $M$. A fake-flat 2-gauge configuration $\F$ in $(M,L)$ is said to be 2-flat along a cellularly embedded 2-sphere $\Sigma$ if the 2-dimensional holonomy ${\rm Hol}_v^2(\F,\Sigma,L)$ of $\F$ along $\Sigma$ is the identity element of $E$. This 2-flatness of $\F$ along a 2-sphere $\Sigma\subset M$ is preserved by gauge transformations. A fake-flat 2-gauge configuration $\F$ in $(M,L)$ is said to be {\em 2-flat} if it is 2-flat along the boundaries of all 3-cells of $L$.}

{A crucial fact that we will use in this paper is the following one, a consequence of the work of Brown and Higgins \cite{brown_higgins_colimits,brown_higgins_cubes,Brown_tensor,brown_classifying}; a more modern reference is \cite{brown_higgins_sivera}. A  2-flat 2-gauge configuration $\F$ naturally yields a map $f_\F\colon M \to B_\Gc$, defined up to homotopy, from $M$ into the classifying space $B_\Gc$ of the crossed module $\Gc$; classifying spaces of crossed modules are defined in \cite{brown_higgins_sivera,brown_classifying,brown_hha} and also \cite{martins_porter,martins_crossed}. Moreover, by \cite[THEOREM A]{brown_classifying} and \cite[\S 11.4]{brown_higgins_sivera}),  it follows that given two 2-flat 2-gauge configurations $\F$ and $\F'$, then $f_\F,f_{\F'}\colon M \to B_\Gc$ are homotopic if, and only if, the 2-flat 2-gauge configurations $\F$ and $\F'$ are connected by a {full} gauge transformation. These facts will play a primary role in the proof that the ground state degeneracy of our model is a topological invariant of manifolds $M$, counting the number of homotopy classes of maps from $M$ into  $B_\Gc$; see \S \ref{gsd}.}

\subsubsection*{Overview of the paper}
In {Section \ref{ss:hlgt},} we recap 
and fix conventions for: crossed modules, fundamental crossed
modules, CW-complexes and 2-lattices, defined
in \S \ref{sec:2-lattices}.
In section  \ref{2dholonomy}, we firstly define and discuss fake-flat
2-gauge configurations  (called ``cellular formal $\cal C$-maps''
in  \cite{PorterTuraev});
see  \S \ref{ss:HOGC}.
In \S \ref{revWhite}, we define the lattice
2-groupoid for a pair $(M,L)$, consisting of a manifold $M$ with a
2-lattice decomposition $L$, and show how fake-flat 2-gauge
configurations give rise to 2-dimensional parallel transport
2-functors, from the lattice 2-groupoid of $(M,L)$ into the gauge
crossed module $\Gc$. In \S \ref{atd} we give an algebraic topological
definition of the 2-dimensional holonomy of a fake-flat 2-gauge
configuration along a 2-sphere and along a 2-disk. In \S \ref{cd}, we
give a combinatorial definition of 2-dimensional holonomy along
2-disks and 2-spheres, and prove that the two definitions of
2-dimensional holonomy coincide.

In {Section} \ref{GTRANS} we discuss gauge transformations
between fake-flat 2-gauge configurations defined on a 2-lattice. In
particular we define a group of gauge operators and prove that it acts
on the set of fake-flat 2-gauge configurations in a way such that the
2-dimensional holonomy along cellularly embedded 2-spheres is
preserved. {The underpinning groupoid of fake-flat 2-gauge configurations and full gauge transformations between them is constructed in \ref{fgtb}.}

{In Section \ref{Hamiltonian}, we address a Hamiltonian model for higher lattice gauge theory on a pair $(M,L)$, consisting of a manifold $M$ with 2-lattice decomposition $L$. This will be our proposal for a higher gauge theory version of Kitaev quantum-double model for topological phases: the {\it higher Kitaev model}. 
The underlying Hilbert space of our model is the free vector space on the set of all fake-flat  2-gauge configurations, and hence coincides with the Hilbert space in  \cite{BCKMM} for triangulated manifolds. In \S \ref{HamCalc} we explicitly construct the higher Kitaev model, and give detailed description of all operators involved. {In \ref{loca} we define the local operator algebra.}
The Hamiltonian \ref{thkm} for the higher Kitaev model is a sum of three mutually commuting terms. {We have two sums over 1-cells and 2-cells, respectively, constructed by using the action of the group of gauge operators, which impose higher gauge invariance along gauge transformations of vertex and edge types; and one sum over 3-cells, imposing 2-flatness along their boundary 2-sphere.} {A comparison with the Kitaev model is done in \S\ref{comparison}.} 

In \S \ref{gsd} we show that the dimension 
of the ground state { of the higher Kitaev model} is given by the number of homotopy
classes of maps from the space 
manifold $M$ to the classifying space of the gauge crossed module $\Gc$ and therefore {ground state degeneracy} is a topological (in fact homotopy) invariant {of $M$}. (At this point we needed again to appeal to some basic algebraic topology for crossed modules { and crossed complexes} as given in  \cite{brown_higgins_sivera,martins_crossed,martins_porter}.)
This ground state degeneracy can  be proven to coincide with Yetter's invariant 
on $M\times S^1$ (the level $D$ invariant of the TQFT); see  \cite{BCKMM}.}

\tableofcontents

\section{Preliminaries on crossed modules, CW-complexes and 2-lattices} \label{ss:hlgt}
In \S \ref{ss:HOGC} we give the definition of a fake-flat 2-gauge configuration on a 2-lattice decomposition of a manifold.
This makes extensive use of crossed modules; we assemble the key
definitions in \S\ref{crossed_modules_def}. (Crossed modules {of groups} are well-known to be equivalent to strict 2-groups \cite[\S 2.5]{brown_higgins_sivera} and \cite{baez_lauda}, thus from now on only crossed modules will be mentioned.)
Then in \S\ref{ss:HOL} we recall some facts about CW-complexes which we will need in \S \ref{sec:2-lattices} for defining 2-lattice decompositions of manifolds.

\begin{remark}\label{bdnote}{In this paper
    we will use $\bound(M)$ to denote the boundary of a manifold $M$.
    (We avoid the common notation $\d M$, in order not to overuse
    the symbol $\d$, which appears in several other contexts.)}\end{remark}

\subsection{Crossed modules (of groups and of groupoids)}\label{crossed_modules_def}

Crossed modules of groups are discussed in
\cite{baez_lauda,brown_hha,martins_crossed}.
Crossed modules of groupoids, discussed extensively in this paper,
appear in \cite[\S 6.2]{brown_higgins_sivera} and \cite{brown_hha,martins_porter,BrownIcen}.
{Crossed modules of groups and  groupoids can be used for formalising
  2-dimensional (2D) notions of holonomy (surface holonomy),
  in the same way that groups appear in the formulation of the
  holonomy of a connection in a principal bundle.}

\begin{definition}[Crossed modules of groups; Peiffer relations]\label{cm}
Let $E$ and $G$ be groups.
A {\em crossed module} ${\cal G}=(\d\colon E \to G,\trr)$ of groups
is given by a group map $\d\colon E \to G$,
together with a left action $\trr$ of $G$ on $E$ by automorphisms, such that the relations below, called {\em Peiffer relations}, hold for each $g \in G$ and $e,e' \in E$:
\begin{align} \label{pf1}
 \textrm{1st Peiffer relation} \qquad \d(g \trr e)&=g \d(e) g^{-1},
\\ \label{pf2}
 \textrm{2nd Peiffer relation} \qquad \d(e) \trr e'&=ee'e^{-1}.
\end{align}
\end{definition}
\begin{example} \label{eg:G32}
The crossed module
$\Gc = \Gc_{32}  := (\d:\Z_3^+ \rightarrow \Z_2^\times,\trr)$,
where $\Z_3^+ = \{0,1,2 \}$ is
the additive group of integers modulo $3$ and
$\Z_2^\times = \{\pm 1\}$ acts on
$\Z_3$ as $z \trr e=ze$.
The boundary map $\d$ sends everything to $+1$.
\end{example}
\begin{example}[From groups to crossed modules I]
  Given a group $G$, let ${\rm Aut}(G)$ be the group of automorphisms of $G$. Clearly ${\rm Aut}(G)$ acts in $G$ by automorphisms as $f\trr g=f(g)$, for each $f \in {\rm Aut}(G)$ and each $g \in G$. Let ${\rm Ad}\colon G \to {\rm Aut}(G)$ be the morphism that sends $g\in G$ to the inner automorphism ${\rm Ad}_g\colon x \in G \mapsto gxg^{-1} \in G$, obtained by conjugating by $g$. Then ${\cal AUT}(G)=({\rm Ad}\colon G \to {\rm Aut}(G), \trr)$ is a crossed module.
 \end{example}

\begin{example}[From groups to crossed modules II]\label{gtxm}
  If $G$ is a group, then $( \{1\} \to G)$ and $(\id \colon G \to G,
  {\rm Ad})$, where ${\rm Ad}$ is the adjoint action, are crossed
  modules.
  If $G$ is abelian then $(G\to \{1\})$ is also a crossed module.
\end{example}

Let us now {discuss crossed modules of groupoids.}
{Let
$G=(\sigma,\tau\colon G_1 \to G_0)$
denote a groupoid \cite{Higgins,brown_hha} with
set of objects $G_0$; set of morphisms $G_1$;  and source
and target maps  $\sigma,\tau\colon G_1 \to G_0$.
We represent the morphisms $\gamma \in G_1$ as
$a \ra{\gamma} b$. Thus $\sigma(\gamma)=a$ and $b=\tau(\gamma)$.
Given $a,b \in G_0$, the set of morphisms $a \to b$ is
$\hom(a,b) = 
\{\g\in G_1\colon \sigma(\gamma)=a \textrm{ and } \tau(\gamma)=b\}$.
{The composition map in $G$ yields for each triple $(a,b,c)$ of objects} a map
$\circ\colon \hom(a,b) \times \hom(b,c) \to \hom(a,c)$, which we
 represent as (notice composition order):}
$$
\big ( a \ra{\gamma} b  \big)\,\circ\,\big ( b \ra{\gamma'} c \big)
 = \big ( a \ra{\gamma\gamma'} c  \big).
$$

{A {\em totally intransitive groupoid} $E$ is a groupoid of the form
 $E=(\beta,\beta \colon E_1 \to E_0)$. (Thus source and target maps coincide.)
Given $x \in E_0$, we let
${\rm Aut}(x)=\{e \in E_1: \beta(e)=x\}$, which is a group.
And then $E$  is isomorphic
to the totally intransitive groupoid given by 
$\sqcup_{x \in  E_0}\aut(x)$, with the obvious composition and map
$\beta\colon \sqcup_{x \in E_0}\aut(x) \to E_0$. Hence a totally intransitive groupoid can been seen as
being given by a disjoint union of groups.}

{A {\em  left groupoid action} $\; \trr $ \cite{brown_hha,BrownIcen}, by automorphisms, of the groupoid
$G=(\sigma,\tau\colon G_1 \to C)$ on $E=(\beta,\beta\colon E_1 \to C)$,
a totally intransitive groupoid with the same set of objects as $G$,
is given by a set map:}
$$
{(\gamma,e) \in \{(\gamma',e') \in G_1 \times E_1
    : \tau(\gamma')=\beta(e')\} \longmapsto \gamma \trr e \in E_1,}
$$
such that  whenever compositions and actions are well-defined:
\begin{align*}
\beta(\gamma \trr e)&=\sigma(\gamma),
&&(\gamma \gamma') \trr e=\gamma \trr (\gamma' \trr e) \qquad\qquad \textrm{ and } \quad&&
 \gamma\trr(e e')=(\gamma \trr e ) ( \gamma \trr e').
\end{align*}

\begin{definition}[Crossed module of groupoids]\label{cmg}
Let $E$ and $G$ be groupoids with the same object set, with $E$
totally intransitive.
A {\em  crossed module of groupoids}
$
{\Gc}=(\d\colon E\to G,\trr)
$
is given
by a groupoid map $\d\colon E \to G$, which is the identity on
objects, together with a left action of $G$ on $E$, {by automorphisms}, such that the
Peiffer relations (\ref{pf1},\ref{pf2}) are satisfied, whenever actions and
compositions make sense.  {(Full equations are in \cite{BrownIcen}.)}
\end{definition}

{Given $\Gc=(\d\colon E\to G,\trr)$, we call $E$ the {\em top groupoid} of $\Gc$ and  $G$  the {\em underlying groupoid} of $\Gc$.}

\begin{definition}[Crossed module map]
A map
$(\psi,\phi)\colon (\d\colon E\to G,\trr)\to (\d\colon E'\to G',\trr) $
of crossed modules of groupoids is given by two
 groupoid maps $\psi\colon E \to E'$ and
$\phi\colon G \to G'$, which are compatible with actions
and boundary maps in the obvious way. {(Full equations are in \cite[\S 1.1.1]{martins_porter}.)}
\end{definition}
{Since groups can be considered to be groupoids with a single object, we will see group crossed modules as particular cases of crossed modules of groupoids.}

\subsection{Example: 
  {the} fundamental crossed module}

The main example of a crossed module of groupoids is a  topological
one crucial to our construction. Our main references are
\cite[\S 2.1, \S 2.2 and \S 6]{brown_higgins_sivera} and \cite{brown_hha}.
We will need to review some algebraic topology definitions.

\begin{definition} \label{de:fg}
  (See e.g. \cite[p.17]{CrowellFox} and \cite{brown_higgins_sivera,brown_hha}.)
Let $Y$ be a locally path-connected space,  and
$C \subset Y$ any subset (in this paper $C$ will always be finite).
The {\em fundamental groupoid of $Y$, with object set $C$},
denoted  $\pi_1(Y,C)$,
is as follows.
The set of objects of
$\pi_1(Y,C)$ is $C$. Given $c,d \in C$, the set of morphism $\hom(c,d)$ is the set of
 equivalence classes of paths $\gamma \colon [0,1] \to Y$, such
that $\gamma(0)=c$ and $\gamma(1)=d$, where two paths $c \to d$ are
equivalent if they are homotopic in $Y$,
relative to the end-points (i.e. end-points remain stable during the homotopy).
The composition in $\pi_1(Y,C)$ is given by concatenation
(and rescaling) of representative paths.
\end{definition}

If $\gamma$ is a path in $Y$, the equivalence class to which it belongs
in $\pi_1(Y,C)$ is denoted by $[\gamma]$.
A morphism in $\pi_1(Y,C)$ from $c$ to $d$ is denoted as $c \ra{[\gamma]} d$ or simply
by $c \ra{\gamma} d$ if no ambiguity arises.

\begin{remark}\label{pi1}
Let $c \in C$. The group of morphisms $c \to c$ in the groupoid
$\pi_1(Y,C)$ is exactly the fundamental group $\pi_1(Y,c)$.
Let $S^1=\bound([0,1]^2)$, with a base point $*$ at  $(0,0)$;
{recall Rem. \ref{bdnote}.}
Morphisms $c \to c$ hence can equivalently be seen as pointed homotopy
classes of maps $(S^1,*) \to (Y,c)$.
\end{remark}

Relative homotopy groups, including $\pi_2(X,Y,c)$, of pointed pairs
of spaces ($c$ being the base-point) are classical in homotopy theory
and are defined e.g. in \cite[p.343]{hatcher}. In this paper, we will
use relative homotopy groupoids $\pi_2(X,Y,C)$, with a set $C\subset
Y$ of base-points; see \cite[\S 1.6, \S  6.2 and \S
  6.3]{brown_higgins_sivera}. These are totally intransitive groupoids
built as $\pi_2(X,Y,C)=\sqcup_{c \in C} \pi_2(X,Y,c)$. Let us give a
quick review.

\begin{definition}[The totally intransitive groupoid $\pi_2(X,Y,C)$]\label{defpi2}
Let $X$ be a locally path-connected space. Let $Y \subset X$ be a
locally path-connected subspace of $X$. Choose a subset $C$ of $Y$. In this paper, $C$ will always intersect non-trivially each path-component of $X$ and of $Y$.
For each $c \in C$, consider the {\em relative homotopy group}
$\pi_2(X,Y,c)$.
This group is made out of {homotopy} classes of maps
$\Gamma\colon [0,1]^2 \to X$ such that:
\begin{enumerate}
\item $\Gamma\big( ([0,1] \times \{0\}) \cup (\{0,1\} \times [0,1]) \big)={\{c\}}$,
\item $\Gamma\left ( [0,1] \times \{1\}  \right)\subset Y$.
\end{enumerate}
Specifically, two such maps $\Gamma,\Gamma'\colon [0,1]^2 \to X$ are said to be
homotopic if there exists a homotopy $J\colon [0,1]^3 \to X$,
connecting $\Gamma$ and $\Gamma'$, such that for all $u\in[0,1]$ the
slice of $J$ at $u$, namely $(t,s) \mapsto J_u(t,s)=J(t,s,u)$,
satisfies the properties 1 and 2. The multiplication in $\pi_2(X,Y,c)$ is through horizontal
juxtaposition of maps $[0,1]^2 \to X$, followed by rescaling in the horizontal direction.

We can thus define a totally intransitive groupoid
$\displaystyle\pi_2(X,Y,C)\doteq \bigsqcup_{c \in C} \pi_2(X,Y,c),$ with set of
objects $C$.
\end{definition}

Let $(X,Y,C)$ be as in Def. \ref{defpi2}.
The elements $[\Gamma] \in
\pi_2(X,Y,c)$, or simply $\Gamma\in \pi_2(X,Y,c)$, if no confusion
arises, are visualised as:
$$\xymatrixrowsep{0.3in}
\xymatrixcolsep{0.5in} \xymatrix{ &c \ar[r]|{\textstyle{\partial(\Gamma)}} &c\\
                                                                  &c\ar@{-}[u]|{\textstyle c} \ar@{{}{ }{}}@/^1.8pc/[r]_{\textstyle \Gamma}
                                                                  \ar@{{}{ }{}}@/^1.5pc/[u]|{\textstyle \Gamma =}
                                                                  \ar@{-}[r]|{\textstyle c} &c\ar@{-}[u]|{\textstyle c} }.$$
Let $c \in C$. As indicated by the diagram above, if we restrict a $\Gamma \in
\pi_2(X,Y,c)$ to the top of the square $[0,1]^2$, this  gives rise to an
element $\partial(\Gamma) \in \pi_1(Y,c)$. This yields a group map
$
\partial\colon \pi_2(X,Y,c) \to \pi_1(Y,c)  .
$ {Putting all of these group maps together, yields a groupoid map $\d\colon \pi_2(X,Y,C) \to \pi_1(Y,C)$, which is the identity on objects.}
{We also have an action of the groupoid $\pi_1(Y,C)$ on {the totally intransitive groupoid $\pi_2(X,Y,C)$,} as indicated in figure \ref{action}. Details are in \cite[\S 2.2 and \S 6.1]{brown_higgins_sivera} and (in the pointed case) \cite[pp 355]{hatcher}.}
\begin{figure}[H]
{\centerline{\relabelbox
\epsfysize 3.5cm
\epsfbox{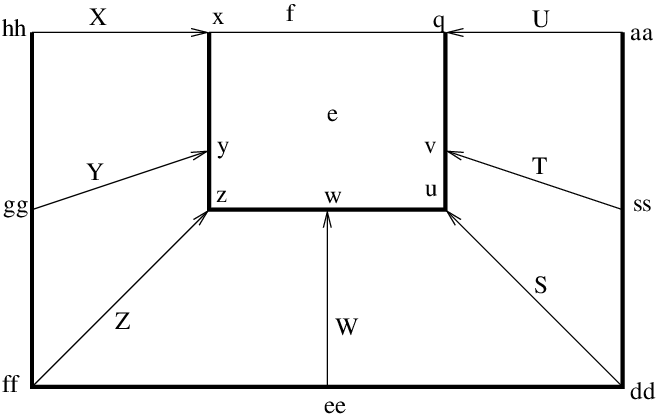}
\relabel{X}{$\scriptstyle{\gamma}$}
\relabel{Y}{$\scriptstyle{\gamma}$}
\relabel{Z}{$\scriptstyle{\gamma}$}
\relabel{W}{$\scriptstyle{\gamma}$}
\relabel{S}{$\scriptstyle{\gamma}$}
\relabel{T}{$\scriptstyle{\gamma}$}
\relabel{U}{$\scriptstyle{\gamma}$}
\relabel{e}{$\scriptstyle{\Gamma}$}
\relabel{f}{$\scriptstyle{\partial(\Gamma)}$}
\relabel{x}{$\scriptstyle{c}$}
\relabel{y}{$\scriptstyle{c}$}
\relabel{z}{$\scriptstyle{c}$}
\relabel{w}{$\scriptstyle{c}$}
\relabel{u}{$\scriptstyle{c}$}
\relabel{v}{$\scriptstyle{c}$}
\relabel{q}{$\scriptstyle{c}$}
\relabel{aa}{$\scriptstyle{d}$}
\relabel{ss}{$\scriptstyle{d}$}
\relabel{dd}{$\scriptstyle{d}$}
\relabel{ee}{$\scriptstyle{d}$}
\relabel{ff}{$\scriptstyle{d}$}
\relabel{gg}{$\scriptstyle{d}$}
\relabel{hh}{$\scriptstyle{d}$}
\endrelabelbox }}
\caption{{\label{action} The action of an element $\gamma \in \pi_1(Y,C)$, with $\gamma(0)=d$ and $\gamma(1)=c$ on a $\Gamma \in \pi_2(X,Y,c)$}.}
\end{figure}

\begin{Theorem}\label{defpi2-2} (JHC Whitehead, \cite[\S 2.3, \S 6]{brown_higgins_sivera} and \cite{brown_hha,brown_2dvk}) Let $(X,Y,C)$ be a triple of spaces, as in Def. \ref{defpi2}.
Considering the natural action $\trr$ of the groupoid $\pi_1(Y,C)$ on the
totally intransitive groupoid $\pi_2(X,Y,C)$, and the  boundary map  $\d\colon \pi_2(X,Y,C) \to \pi_1(Y,C)$,
we have a crossed module of groupoids, called the fundamental crossed module of $(X,Y,C)$. The fundamental crossed module of $(X,Y,C)$ is denoted as:
$$
\Pi_2(X,Y,C) \; = \; (\d\colon \pi_2(X,Y,C) \to \pi_1(Y,C), \trr).
$$
\end{Theorem}
\begin{remark}\label{altdefpi} Let $(X,Y)$ be a pair of spaces and $c \in Y$.
 Recall that the underlying set of the group $\pi_2(X,Y,c)$ can also be defined as the set of all maps $f\colon {[0,1]^2} \to X$, such that $f(*)=c$, where $*=(0,0)$, and $f(\bound[0,1]^2)\subset Y$, up to a  homotopy $H\colon (x,t)\in {[0,1]^2} \times [0,1] \mapsto f_t(x)\in X$, such that, for each $t$, $f_t(*)=c$ and $f_t(\bound[0,1]^2)\subset Y$. The boundary map $\d\colon \pi_2(X,Y,x) \to \pi_1(Y,c)$ is obtained by restricting $f$ to $\bound([0,1]^2)$;  c.f. Rem. \ref{pi1}.

 Analogously \cite[Chapter IV]{hatcher}, the underlying set of the relative homotopy group $\pi_3(X,Y,c)$ can be defined as the set of all maps $f\colon {[0,1]^3}\to X$ such that $f(*)=c$, where $*=(0,0,0)$, and $f(\bound[0,1]^3)\subset Y$, up to a homotopy $H\colon (x,t)\in {[0,1]^3} \times [0,1] \mapsto f_t(x)\in X$ such that, for each $t$, $f_t(*)=c$  and $f_t(\bound[0,1]^3)\subset Y$. We also have a boundary map  $\d\colon \pi_3(X,Y,x) \to \pi_2(Y,c)$  obtained by restricting $f$ to $\bound([0,1]^3)$. 
\end{remark}

\begin{example}[The fundamental crossed module of the disk]\label{pi2d2}Let $D^2=[0,1]^2$ and $S^1=\bound(D^2)$. {Let $v \in S^1$  be any point}. Then {$\Pi_2(D^2,S^1,v)\cong(\id\colon\Z \to\Z,\trr)$}, where $a \trr b=b$, for each $a,b \in \Z$. To see this, look at the end of the homotopy {long} exact sequence of $(D^2,S^1,v)$; see e.g. \cite[Chapter IV]{hatcher}. This yields {$\{0\} \cong \pi_2(D^2,v) \to \pi_2(D^2,S^1,v) \stackrel{\d}{\to} \pi_1(S^1,v)\cong\Z \to \pi_1(D^2,v)\cong \{1\}.$} Details are in e.g. \cite{martins_crossed}.
\end{example}

\newcommand{\inter}{{\rm int}}
\newcommand{\ignore}[1]{}

\newcommand{\fg}[1]{{\mathsf {FG}}\langle #1 \rangle}  
\newcommand{\GG}{{\mathcal G}}    
\newcommand{\base}{\beta_{0}}     

\subsection{CW-complexes} \label{ss:HOL}

Let $D^n$ denote the closed $n$-disk in the form $D^n = [0,1]^n$.
The open $n$-disk is $\inter(D^n)=(0,1)^n$. Also put:
\[
\bound( D^n) = S^{n-1}  =  D^n \setminus \inter({D^n})
\]
--- the boundary of the $n$-disk.
Let $\N=\{0,1,2,\dots\}$.

Let us briefly review the definition of CW-complexes
\cite[Appendix]{hatcher},
\cite{cellular_structures} and \cite{topology_CW_complexes}.
We will use the definition given in \cite[Prop A2]{hatcher}
and \cite[Chapter II]{topology_CW_complexes}.

\begin{definition}[CW-complex]\label{cwcpl}
A CW-complex $\left( X, \{ \phi_a^n \}_{a \in L^n,n \in \N} \right)$
is a Hausdorff topological space $X$, a collection of sets $L^0$,
$L^1$, $L^2$, ..., and,
for each $n \in \N$, a family of
continuous maps $\{\phi_a^n \colon D^n \to X\}_{a \in L^n}$
{(the  `characteristic maps of the closed $n$-cells')}
satisfying conditions {1,2,3 and 4, below}.

\smallskip

\noindent Let  the set $c^n_a\doteq \phi_a^n(\Dna)\subset X$.
It is called an \emph{open cell of dimension $n$}, and is given the induced topology.
{Put $\overline{c^n_a}\doteq \phi_a^n(D^n)\subset X$.
It is called a \emph{closed cell of dimension $n$},
and is given the induced topology.
Put $\bound(\overline{c^n_a})\doteq \phi_a^n(\bound(D^n))\subset X$.
It is called the boundary of  $c^n_a$. (Note that
$\overline{c^n_a}$ need not be a $\d$-manifold, hence
$\bound(\overline{c^n_a})$ might not be a 
manifold boundary,
though this will be imposed when we define 2-lattices.)} 
%
%
Then:
\begin{enumerate}
\item 
Each characteristic map  $\phi_a^n\colon D^n \to X$ restricts to
a homeomorphism $\Dna\to \phi_a^n(\Dna)\subset X$.

\item The open cells $c^n_a$  
  where $n \in \N$ and $a \in L^n$, form a partition of $X$.
 {(I.e. they are pairwise disjoint and their union is $X$.)}

\item Each ${\bound(\overline{c^n_a})}$ is 
contained in the union
  of a finite number of open cells of dimension $<n$.  
\item A set $F\subset X$ is closed if, and only if, 
  $(\phi_a^n)^{-1}(F)$ is closed in $D^n$, for each $n \in \N$ and each $a \in L^n$.
\end{enumerate}
\end{definition}

\noindent
A CW-complex is called finite if $L^n$ is finite for each
$n \in \N$ and $L^n=\emptyset$ for all but a finite subset of $n\in \N$.

\noindent
We write $X$ for  $(X, \{ \phi_a^n \}_{a \in L^n,n \in \N} )$.
The data $\{ \phi_a^n\colon D^n \to X \}_{a \in L^n,n \in \N}$
is called a CW-decomposition of $X$.

\begin{definition}   
A {\em subcomplex} of a CW-complex
{$\left( X, \{ \phi_a^n \}_{a \in L^n,n \in \N} \right)$} is a subspace $A \subset X$
which is the union of open cells of $X$, such that the closure in $X$ of
each of these open cells is contained in $A$. 
\end{definition}

A subcomplex  
$A$ can be made into a CW-complex
$\left( A, \{ \phi_b^n \}_{b \in L^n_A,n \in  \N} \right)$,
where for each $n\in \N$,  we put
$L_A^n=\{c\in  L^n\colon \overline{c^n_a} \subset A\}$.  
(For a  proof see e.g. \cite[pg 16]{hatcher}.)

\begin{definition} \label{de:n-skel}
The $n$-{\em skeleton} $X^n$ of a CW-complex $X$ is the
subspace  
given by the union of all the open cells of dimensions $\leq n$, with
the induced topology. Note that $X^n$ is a subcomplex of $X$, hence a
CW-complex.    
\end{definition}

\begin{remark}[CW-complexes: properties and nomenclature]
\label{someprop}
For proofs see e.g. \cite[Appendix]{hatcher} and
\cite{topology_CW_complexes,cellular_structures}.

\begin{itemize}
\item Condition 4. of the definition of a CW-complex is  redundant if $X$ has only a finite number of cells; see \cite[pp 521]{hatcher}. (Essentially this follows since a finite union of closed sets is always closed). In this paper we will only deal with finite CW-complexes, so condition 4. of Def. \ref{cwcpl} will not be mentioned again. 

\item {Cf. \cite[pg 6]{topology_CW_complexes}, as the notation suggests, the closed cell $\overline{c^n_a}\subset X$ is the closure in $X$ of the open cell $c^n_a$.}

\item   The {\em  attaching map} of each closed $n$-cell $\overline{c^n_a}$ is the restriction of $\phi^n_a\colon D^n \to X$ to $\bound(D^n)$, namely:
$$
\psi^n_a \colon \bound(D^n) \to {\bound(\overline{c^n_a})} \subset X^{n-1} \subset X  .
$$ 
The underlying topological space of the $n$-skeleton $X^n$ of $X$ is homeomorphic to the space obtained from $X^{n-1}$ by attaching $\sqcup_{a \in L^n} D^n$ to it, along the attaching maps of the closed $n$-cells.
\end{itemize}
\end{remark}

\begin{definition} \label{de:cellular}
  Given CW-complexes $X$ and $Y$, a  map $f\colon X \to Y$ is called
  cellular if $f(X^n)\subset Y^n$, for all $n \in \N$.
\end{definition}

\begin{definition}[Abstract cells]
  If $(X, \{ \phi_a^n \}_{a \in L^n,n \in \N} )$ is a CW-complex,
  we call $L^n$ the set of abstract $n$-cells. 
\end{definition}

\noindent
Abstract $n$-cells are in one-to-one correspondence with open $n$-cells and with closed $n$-cells. If $a$ is an abstract $n$-cell, the closed and open $n$-cells it corresponds to are (respectively)  $\overline{c^n_a}=\phi^n_a(D^n)$ and ${c^n_a}=\phi^n_a(\inter(D^n))$.

\begin{remark}[(Geometric) vertices, edges, {plaquettes (or faces)}, and blobs]\label{blob} 
Abstract  0, 1, 2 and 3-cells  of a CW-complex will sometimes be called called vertices, edges, {plaquettes (or faces)}, and blobs, respectively. Closed  0, 1, 2 and 3-cells will sometimes be called geometric vertices, geometric edges, geometric {plaquettes (or faces)}, and geometric blobs. 
\end{remark}

\subsection{2-lattices}\label{sec:2-lattices}

Simplicial complexes give rise to CW-complexes; but simplicial
complexes are very rigid, therefore a large number of simplices is
usually required to triangulate a manifold.
CW-complexes allow for the
decomposition of a manifold into fewer cells; however they are too
general for our purposes, since the attaching maps of the closed cells
can be highly singular, making it harder to use CW-complexes in
combinatorial frameworks.  
In order to simplify our discussion later, we will  consider
CW-complexes which are 2-lattices, defined below.  

If $S^n=\bound(D^{n+1})$ is the $n$-sphere, the base-point $*$ of it
is defined to be $*=(0,\dots,0)$.

\begin{definition}[2-lattices. Base point of a cell]\label{2-lattice}
Let $M$ be a topological manifold, with 
CW-complex $ \Delta_M = (M, \{ \phi_a^n \}_{a \in L^n,n \in \N} )$.
This  $ \Delta_M  $ is called a
2-lattice for $M $
if, for each $n \in \N$ and each 
$a \in L^n$:
\\
(1)
a CW-decomposition
$Z_{a}$ of $S^{n-1}=\bound(D^n)$ is given for which the base-point
$*=(0,\dots,0)$ is a 0-cell, and such that the attaching map
$\psi_a^n\colon S^{n-1}\to {M}^{n-1}$ of the corresponding closed
$n$-cell $\overline{c_a^n}$ is cellular (as in Def.\ref{de:cellular}).
{
(Note that in particular
(1) implies that $\psi_a^n(*)=x_a$ is a closed 0-cell of $M$, for
each $a \in L^n$ and  each $n\in \N$. The image $\psi_a^n(*)=x_a$ is called
the base-point of the closed $n$-cell $\overline{c_a^n}$.) 
}
\\
(2)
one of the following two conditions holds:
 \begin{itemize}
  \item The attaching map $\psi_a^n\colon S^{n-1} \to {M}^{n-1}$ of the corresponding  $n$-cell $\overline{c_a^n}$ is constant.
 \item The attaching map $\psi_a^n\colon S^{n-1} \to {M}^{n-1}$ of the
   corresponding  $n$-cell $\overline{c_a^n}$ is an embedding (i.e. it
   is a homeomorphism onto its image).
   {Moreover, for each closed $i$-cell $c$ of ${Z_{a}}$,
   it holds that $\psi_a^n(c)$ is a closed $i$-cell $c_L$ of $M$,
   and the restriction  of $\psi_a^n\colon S^{n-1} \to {M}^{n-1}$
   to  $c$ is a homeomorphism $c \to c_L$.}  
 \end{itemize}
 (3) 
 If $b\in L^3$, we impose that the attaching map
 $\psi_b^3\colon S^2 \to M^2$ of the closed 3-cell $\overline{c_b^3}$ is an embedding  and furthermore that the boundary $\psi_b^3(S^2)=\bound(\overline{c^3_b})$ {of the 3-cell} $\overline{c^3_b}$ is a subcomplex of $M^2$.
 
\end{definition}

The  space $M$
is then said to have a 2-lattice decomposition.

A 2-lattice $(M, \{ \phi_a^n \}_{a \in L^n,n \in \N} )$
will usually be denoted as $(M,L)$, or $\big(M,L=(L^0,L^1,\dots)\big)$.

\begin{remark} In practice, when defining a particular 2-lattice
  decomposition of $M$, normally only the closed $n$-cells will be
  made explicit, as it will always be clear that, for each $n$-cell
  $a$, an attaching map $\psi_a^n\colon S^{n-1}\to {M}^{n-1}$ can be
  found which is cellular by using a suitable CW-decomposition of the
  $(n-1)$-sphere. This does not fully determine a CW-decomposition, as
  some ambiguity rests on the actual characteristic maps of the
  $n$-cells. However the topological space $M$, all closed cells, all
  $i$-skeletons $M^i$, and hence the crossed modules
  $\Pi_2(M,M^1,M^0)$ and $\Pi_2(M^2,M^1,M^0)$ will be defined with no
  ambiguity. This is all we need for this paper.   
\end{remark}
\begin{remark}[{Lax 2-lattice}]
{A CW-complex satisfying only $(1)$ of the definition of 2-lattices is called a lax-2-lattice. All combinatorial constructions in this paper are still true for lax-2-lattices, with the obvious modifications. In particular, the combinatorial construction of the 2-dimensional holonomy operators in \S\ref{cd} remains almost unaltered. The only issue is that the description of edge and vertex gauge spikes in \S\ref{exp} then requires a lot more cases, especially when it comes to edge operators.}
\end{remark}

\begin{remark}\label{sublatices}
 Let $Y$ be a subcomplex of a CW-complex $X$.  If $X$ is a 2-lattice then clearly so is $Y$. 
\end{remark}

\begin{example}\label{S1}
Evidently, 1-dimensional CW-complexes are always  2-lattices.
  The circle $\{z \in \mathbb{C}\colon |z|=1\}$ can be given a 2-lattice decomposition with two vertices {(i.e. 0-cells)} at $z=\pm1 $ and closed 1-cells at   $\{z \in \mathbb{C}\colon |z|=1 \wedge \Im(z)\ge 0\}$ and  $\{z \in \mathbb{C}\colon |z|=1 \wedge \Im(z)\leq 0\}$. Here $\Im(z)$ denotes the imaginary part of $z$. 
 \end{example}
\begin{example}
An example of a CW-complex which cannot be made into a 2-lattice is given by attaching $D^2=[0,1]^2$ to $\{z \in \mathbb{C}\colon |z|=1\}$ along $(x,y)\in \bound(D^2) \mapsto \exp(x 2 \pi i\sin(2\pi/x))$, prolonged by continuity to $\bound(D^2)\cap \{x=0\}$. This is the type of singular attaching maps we want to avoid by restricting to 2-lattices.
\end{example}

\begin{example} Consider the 2-sphere $S^2=\bound (D^3)$, with the
CW-decomposition arising from the polyhedral structure of $D^3=[0,1]^3$.
Let $Y$ be the space obtained from $S^2$ by attaching
$D^3$ along $\psi\colon S^2 \to S^2$ defined as:  
$$
(x,y,z) \in \bound(D^3 )\stackrel{\psi}{\longmapsto} 
\begin{cases} (x,y,z) \textrm{ if } z \ge 1/2\\
(x,y,1-z) \textrm{ if } z \leq 1/2
\end{cases}
$$
This CW-decomposition of $Y$ is not a 2-lattice since the attaching map of its unique $3$-cell is not an embedding.
\end{example}

\begin{example}[Two 2-lattice decompositions of the 3-sphere  $S^3$]\label{LandL0}
  Let us {in this example} 
  model the 2- and 3-spheres as being
$S^2  =  \{x \in \mathbb{R}^3\colon |x|=1\}$ and
$S^3=\{x \in  \mathbb{R}^4\colon |x|=1\}$.
{The following are two 2-lattice decompositions
  of the 3-sphere.}

\smallskip
\noindent 
{\bf $(S^3,L_{\mathfrak{g}})$:}
{We consider the 3-sphere $S^3$ with the globe decomposition {$L_{\mathfrak{g}}=(\{ v\}, \{ t \}, \{ P,P'\}, \{b,b' \})$} as follows. We firstly consider a CW-decomposition $L$ of $S^2$ with  a unique closed 0-cell $v$ at the point
of zero latitude and longitude, and a unique closed 1-cell $t$ making the
equator, oriented eastwards.
We have two closed 2-cells, $P,P'$, one for each
hemisphere, {attaching} along the equator (oriented eastwards). See Fig. \ref{L1}.}
\begin{figure}[H]
\centerline{\relabelbox 
\epsfysize 2.5cm 
\epsfbox{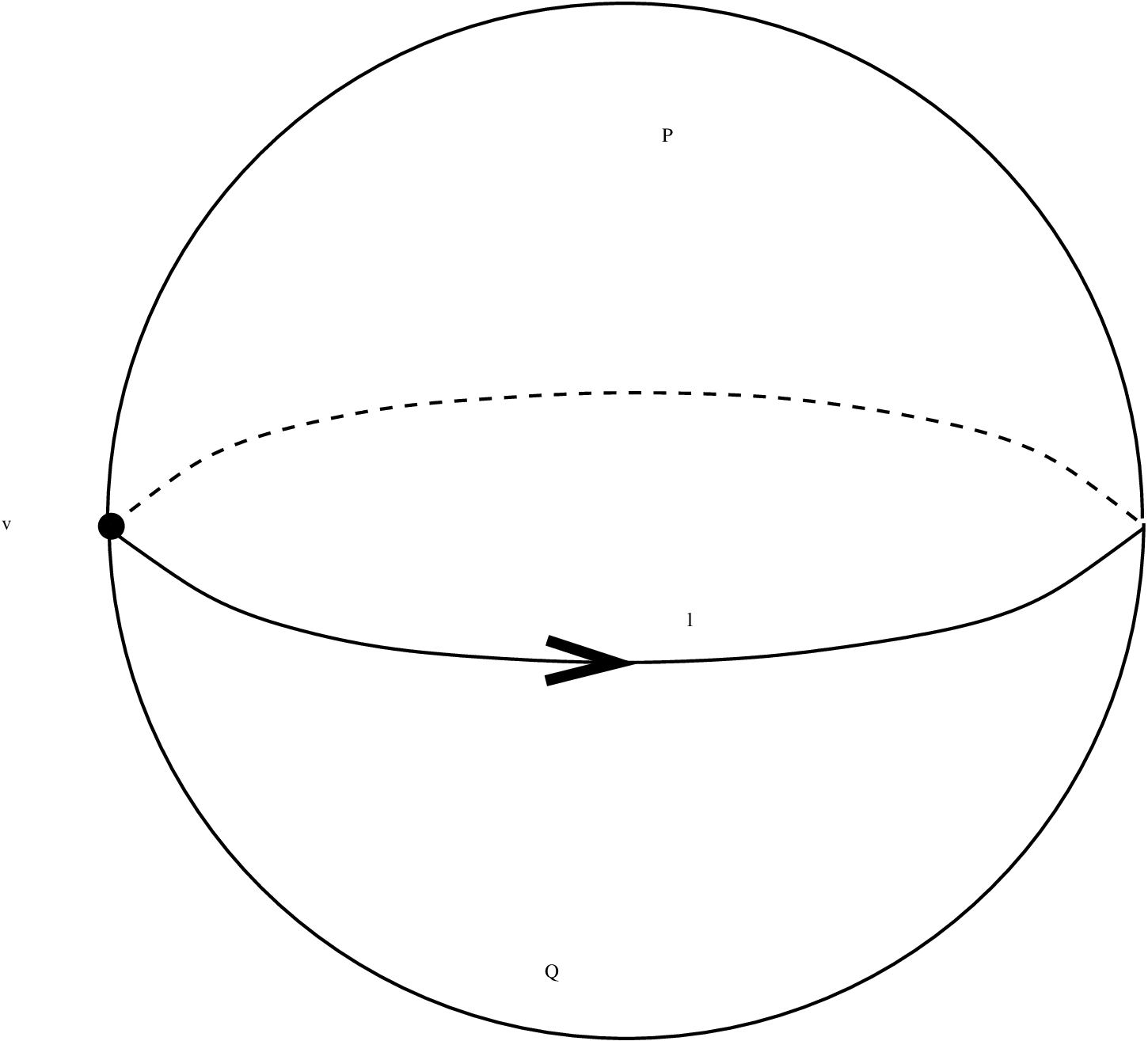}
\relabel{P}{$P$}
\relabel{Q}{$P'$}
\relabel{v}{$v$}
\relabel{l}{$t$}
\endrelabelbox}
\caption{A 2-lattice decomposition $L$ of the 2-sphere $S^2$.\label{L1}}
\end{figure}
\noindent To get from $S^2$ to $S^3$ we now need to add two additional 3-cells $b,b'$ attaching on each side of the 2-sphere. 
\medskip

 \noindent {\bf $(S^3,L_0)$:}   We can choose an even simpler 2-lattice decomposition $L_0$ of $S^3$,  having unique $0$- and $2$-cells (resulting in $S^2$), and two 3-cells $b$ and $b'$, as above, attaching along each side of the 2-sphere.
\end{example}

\begin{definition}[Notation: $\dL(P)$ and $\dL(b)$]\label{notdelta}
{Let $(M,L)$ be a 2-lattice. Let $P\in L^2$ be a geometric 2-cell
  (i.e. plaquette; Def. \ref{blob}).}
Let $\psi_P^2\colon \bound(D^2)\to M^1$
be the attaching map of the corresponding
closed 2-cell $\overline{c_P^2}$ (i.e. geometric plaquette).
By definition,  $*=(0,0)\in S^1$ and $\psi^2_P(*)$ is a closed 0-cell
$x_P$ of $M$. Hence $\psi_P^2$ is a pointed map
$(S^1,*)\to (M^1,x_P)$.
Passing to the pointed homotopy class of $\psi_P^2\colon (S^1,*) \to (M^1,x_P)$ yields an element $\dL(P) \in \pi_1(M^1,x_P) \subset \pi_1(M^1,M^0)$; cf. Rem.  \ref{pi1}. Here $\subset$ means inclusion of a groupoid into another.

Analogously, if $b\in L^3$ is a blob (i.e. an abstract 3-cell), then the attaching map $\psi_b^3\colon S^2=\bound([0,1]^3)\to M^2$ of the corresponding closed 3-cell $\overline{c_b^3}$ (geometric blob) sends the base-point $*=(0,0,0)$ of $S^2$ to a 0-cell $y_b$ (the base-point of $\overline{c_b^3}$; see Def. \ref{2-lattice}). Hence the attaching map $\psi_b^3$ is a pointed map $\psi_b^3\colon (S^2,*) \to (M^2,y_b)$. Passing to the pointed homotopy class of {$\psi_b^3\colon (S^2,*) \to (M^2,y_b)$} gives rise to an element $\dL(b) \in \pi_2(M^2,y_b)$. 
\end{definition}

\begin{definition}[Notation: $\iL(P)$ and $\iL(b)$]\label{incP}
Cf. Rem. \ref{pi1} and \ref{altdefpi}.
Continuing Def. \ref{notdelta},
let $P\in L^2$ be a plaquette. Then the characteristic map
$\phi_P^2\colon D^2 \to\overline{c_P^2}\subset  M^2$ of the
corresponding closed 2-cell {induces a pointed map
  $(D^2,S^1,*) \to  (M^2,M^1,x_P)$.
Passing to the relative homotopy class of
  {$\phi_P^2$} yields an element $\iota_L(P)\in \pi_2(M^2,M^1,x_P)$.}
Analogously if $b \in L^3$ is an abstract 3-cell, then the characteristic map $\phi_b^3\colon D^3 \to\overline{c_b^3}\subset  M^3$  yields an element {$\iL(b)\in \pi_3(M^3,M^2,y_b)$.}  
\end{definition}

{Note that given $P \in L^2$ and $b \in L^3$ it holds that:}
\begin{equation}
  {\d\big(\iL(b)\big) =\d_L(b) \;\;\;\;\;\textrm{ and }
    \;\;\;\;\; \d\big(\iL(P)\big)= \d_L(P).}
\end{equation}

A {\em CW-pair}  \cite[pp16]{hatcher}  is a pair  of spaces
$(X,Y)$
where $X$ has a CW-decomposition and $Y$ is a subcomplex of $X$; cf. Rem. \ref{someprop}. 
\begin{definition}[Relative 2-lattice decomposition of pairs and triples of spaces]\label{relcw} Given a pair $(M,N)$ of topological manifolds (i.e. $N$ is a submanifold of $M$), we say that a 2-lattice decomposition of $M$ is a 2-lattice decomposition of $(M,N)$ if $N$ is a subcomplex of $M$. (Note that the CW-decomposition of $N$ rendered from the fact that $N$ is a subcomplex of $M$ is always  a 2-lattice decomposition; see Rem. \ref{sublatices}.) If $L$ is a 2-lattice decomposition of $M$ that yields a relative 2-lattice decomposition of $(M,N)$, we let $L_N$ be the induced 2-lattice decomposition of $N$. CW-decompositions of a triple $(X,Y,Z)$ of manifolds are defined analogously.
\end{definition}
{If $L$ is a 2-lattice decomposition of $(M,\Sigma)$, we say that $\Sigma$ is {\em cellularly embedded} in $(M,L)$.}

\subsection{Paths on the lattice: the lattice groupoid of $(M,L)$} \label{ss:POL}

 Free groupoids on graphs are discussed in
 \cite{brown_groupoids,Higgins,brown_higgins_sivera}.
 Ref. \cite{brown_hha} in addition addresses  groupoid presentations. 

\begin{definition}[Directed graph; totally intransitive graph]\label{graph}
A directed graph  $(V,E)=(\s,\t\colon E \to V)$ is   a pair of
sets $V$ and $E$, the sets of vertices and of edges of $(V,E)$,
together with a pair of maps $\s\colon E \to V$ and
$\t\colon E \to V$, called the source and target maps.

The maps identify, given an edge
$e$, its source $\sigma(e)$ and target $\tau(e)$ (also called initial
and end-points).
Edges of $(V,E)$ may be represented as
$x\ra{e} y$, where
$x=\s(e)$ and $ y=\t(e)$.

A
graph map $(V,E) \to (V',E')$ is given by a pair of set maps $V\to V'$
and $E\to E'$ compatible with initial and end-point of edges.

A
totally intransitive graph is a graph for which source and target maps
coincide.  
\end{definition}

\begin{definition}\label{where}
The functor $U$ sends a groupoid
$G=(\sigma,\tau\colon G_1 \to G_0)$ to its underlying
graph $UG$ --- simply forget  the composition in $G$.
Its left adjoint takes a graph to the free groupoid on the graph; see \cite{Higgins}. 
\end{definition}

A directed graph $(V,E)$ gives rise to another graph
$(V,E\sqcup E^{-1})$ obtained by adding formal reverses to the edges of
$(V,E)$. Here $E^{-1}$  is the set of symbols
$\{e^{-1}\colon e \in E\}$, and we put  $\s(e^{-1})=\t(e)$ and $\t(e^{-1})=\s(e)$. 

\begin{definition}[Free groupoid on a graph. Quantised path on a graph.]
\label{de:fgog}
Let $(V,E)$ be a directed graph.
A quantised path $v \ra{\gamma} v'$ from vertex $v$ to $v'$ on $(V,E)$
is a path on $(V,E\sqcup E^{-1})$, i.e. a sequence
$\gamma={t_1^{\theta_1}}  \dots t_n^{\theta_n}$ where $t_i \in E$ 
and $\theta_i \in \{\pm 1\}$, such that the initial point
of $t_i^{{\theta_i}}$ coincides with the final point of
$t_{i-1}^{\theta_{i-1}}$, and also $v=\s({t_1^{\theta_1}})$ and
$v'=\t(t_n^{\theta_n})$.
Quantised paths $v \to v$  include the empty path $\emptyset_v$ at
$v$.


We define an equivalence relation on  
quantised paths as follows.
Firstly quantised paths $\gamma,\gamma'\colon v \to v'$ are related
if we can modify $\gamma$ into $\gamma'$  by
deleting a subpath of the form $t^{\pm 1} t^{\mp 1}$.
(Initial and end points of quantised paths remain stable
under this  relation.)
Now take the symmetric-reflexive-transitive closure of this relation.
If $\gamma$ is a quantised path the equivalent class  
to which it belongs is denoted $[\gamma]$. 

The {\em (free) groupoid} $\fg{V,E}$ is the groupoid with object set $V$;
arrows given by the set of equivalence classes of quantised paths;
and
arrows
$v\ra{[\gamma]} w$ and
$w\ra{[\gamma']} u$  are composed by
$\big(v\ra{[\gamma][\gamma']} u\big)=\big(v\ra{[\gamma\gamma']} u\big)$.  
(Note that this composition is well-defined.)
\end{definition}

The notion of {\em freeness} will be important in our construction. 
We will use
freeness in the context of crossed module of groupoids. 
The latter is a non-trivial construction, 
so to prepare for this we now take the opportunity to recall what it means to say
that the groupoid is `free' in Def. \ref{de:fgog} above.

\begin{definition} \cite{Higgins}
A  groupoid  $C$ is free over a graph $G$ if  there is a graph map
$P: G \rightarrow UC$
satisfying the following property. 
For every  groupoid $B$ and each graph map $D: G \rightarrow UB$ 
there is a unique groupoid map $D' : C \rightarrow B$ so that the diagram below commutes:
\begin{equation}\label{eq:freeg}
\xymatrix{ & G \ar[dr]_{D}\ar[r]^{P}  
                 & UC \ar@/^0.05pc/[d]^{UD'} \\ 
                     && UB}
\end{equation}
\end{definition}

Straightfoward computations, analogous to the free-group construction prove that:
\begin{lemma} \label{lem:gfree}
  {\rm \cite[8.2.1]{brown_groupoids},\cite{brown_hha},\cite{Higgins}}
The groupoid $\fg{V,E}$ is free over $(V,E)$.
\qed
\end{lemma}

Let $(M,L)$ be a 2-lattice (more generally a CW-complex).
Recall that $M^i$ is the $i$-skeleton of $(M,L)$.
{Note that the characteristic maps $\phi_t^1\colon [0,1] \to M^1$ of the {1-cells} give $(L^0,L^1)$  the structure of a directed graph. Given $t \in L^1$ put $\sigma(t)=\phi_t^1(0)$ and $\tau(t)=\phi_t^1(1)$, where we identified $M^0$ and $L^0$.} 
\begin{definition}[Quantised path in a 2-lattice]\label{de:qp}
 A quantised path on a 2-lattice $(M,L)$ is a quantised path on the graph $(L^0,L^1)$; Def. \ref{de:fgog}. Hence quantised paths $\g$ in $(M,L)$ are obtained by formally chaining together closed 1-cells of $M$ and their reverses: $\gamma= t_1^{\theta_1} t_2^{\theta_2} \dots t_n^{\theta_n}$, where $t_1,\dots,t_n$ are {closed} 1-cells, such that the initial point of $t_i^{\theta_i}$ is the end-point of  $t_{i-1}^{\theta_{i-1}}$
\end{definition}

{The fundamental groupoid $\pi_1(M^1,M^0)$ is defined in Def. \ref{de:fg}. Its set of objects is $M^0$.
Let $u\ra{t} v$ be an edge in $L^1$. Let $\overline{c_u^0},\overline{c_v^0}\in M^0$ be the closed 0-cells corresponding to the abstract 0-cells $u,v\in L^0$.
The characteristic map   $\phi_t^1\colon [0,1]\to M^1$ of $t$ is such
that $\phi_t^1(0)=\overline{c_u^0}$ and
$\phi_t^1(1)=\overline{c_v^0}$. Passing to the homotopy class of
$\phi_t^1$, relative to the boundary $\{0,1\}$ of $[0,1]$,  yields a
morphism $\iota_L(t)$ in the homotopy groupoid $\pi_1(M^1,M^0)$;
cf. Rem. \ref{incP}.} Since $\fg{L^0,L^1}$ is free,
$\iota_L$ extends to a groupoid map
$$
\iota\colon \fg{L^0,L^1} \to \pi_1(M^1,M^0) .
$$

The following can be seen as a generalisation of the van Kampen
theorem \cite{hatcher},  for spaces with a set of base points.  Proofs
are in \cite[9.1.5]{brown_groupoids},\cite{brown_higgins_sivera},\cite{brown_hha}.
This holds more generally for CW-complexes.

\begin{Theorem}
\label{pr:freee}
Let $(M,L)$ be a 2-lattice. The groupoid map $\iota \colon \fg{L^0,L^1} \to \pi_1(M^1,M^0)$ is an isomorphism. Hence $\pi_1(M^1,M^0)$
is isomorphic to the free groupoid
$\fg{L^0,L^1}$,
with
{set of objects being $M^0=L^0$ and with a free generator $u \ra{t} v$
for each edge $t \in L^1$. (Here $u$ and $v$ are the source and
target of 
$t$.)}
\qed
\end{Theorem}

In particular, for any group $G$ and for
any map $f\colon L^1 \to G$ there exists a unique groupoid map 
$f'\colon \pi_1(M^1,M^0) \to G$ whose value on each {$\iota_L(t)$}, $t\in L^1$ an edge, is $f(t)$.
(The same holds if $G$
is a groupoid, except that we must pay attention to sources and targets.) 
As we will see in \S\ref{ss:HOGC}, 
this observation lies at the heart of the realisation of gauge theory
that we will lift to the higher case. 

We will hence see $\pi_1(M^1,M^0) \cong \fg{L^0,L^1}$ as being the lattice groupoid of $(M,L)$.

\section{Higher order gauge configurations and discrete 2D holonomy for surfaces embedded in 2-lattices}\label{2dholonomy}

{In order to establish a template for the `higher' construction, we
  start with a suitable characterisation of
  {\em ordinary} gauge configurations, and of their holonomy
  along cellularly embedded circles $S^1$.}

\newcommand{\Fi}{\F^1}  

\subsection{Gauge configurations, discrete 1D parallel transport and holonomy along circles}\label{dptf}


\begin{definition}
  \label{gconfig}
Let $G$ be a group. A {\em gauge configuration} on a 2-lattice
$(M,L)$ is a map 
$$
\Fi\colon L^1\to G
$$
We write 
$
\Fi(t)= g_t ,  
$ for each edge $t \in L^1$.
\end{definition}

By the freeness of $\pi_1(M^1,M^0)\cong  \fg{ L^0 , L^1} $
{(Lem. \ref{lem:gfree} and {Thm.} \ref{pr:freee})}  a gauge
configurations $\Fi$ extends to a uniquely defined   groupoid morphism
$$
\Phi_{\F^1}\colon  \pi_1(M^1,M^0) \rightarrow G.
$$ 
Here $G$ is regarded as a groupoid with one object.
All groupoid maps $\fg{ L^0 , L^1 } \rightarrow G$
arise this way. Hence:

\begin{Theorem}[The discrete parallel transport of a gauge configuration]
  \label{d1dh}
Let $(M,L)$ be a 2-lattice. Let $G$ be a group. The correspondence
${\Fi} \mapsto \Phi_{\Fi}$ yields a one-to-one correspondence
between gauge configurations $\Fi$ 
and groupoid maps $\Phi_{\F^1}\colon \pi_1(M^1,M^0) \to G$. 
\qed
\end{Theorem}

\noindent Those groupoid maps $\Phi_{\F^1}\colon \pi_1(M,M^1)\to G$ associated to a gauge configuration $\Fi$ will sometimes be called discrete parallel transport functors, in analogy with the differential-geometrical construction in \cite{schreiber_waldorf1,martins_picken}.

\begin{definition}
  \label{n1} 
Let $\g=t_1^{\theta_1} t_2^{\theta_2} \dots t_n^{\theta_n}$
be a quantised path in $(M,L)$.
Let $\F^1$ 
be a gauge configuration. Put:
$$
g_\g=g_{t_1}^{\theta_1}g_{t_2}^{\theta_2}\dots g_{t_n}^{\theta_n} .
$$
\end{definition}

\noindent Let $[\g] \in \pi_1(M^1,M^0)$ be the equivalence class of $\g$.
{Given {Thm.} \ref{pr:freee}, it is clear that:
$\Phi_{\F^1}([\g])=g_{\g}.$}

\begin{definition}[Holonomy along a circle: combinatorial definition]
\label{holalongcircle}
{Let $\F^1\colon L^1 \to G$ be a gauge configuration on a 2-lattice decomposition $(M,L)$.}
Let $C$ be an oriented circle $S^1$ embedded in $M$.
Suppose 
that $L$ is a 2-lattice decomposition of $(M,C)$;
Def. \ref{relcw}.
Let $v \in C 
\cap M^0=C^0$. 
{{Starting} at the vertex $v$, the path around the circle $C$ in the positive
  direction therefore traces a quantised path
  $\gamma={t_1^{\theta_1} t_2^{\theta_2} \dots t_n^{\theta_n}}$,
  connecting $v$ to $v$.
  The holonomy ${\rm Hol}_v^1(\Fi,C,L)$ of
  $\Fi$, along $C$, with initial point $v$, is defined as:} 
$$
{\rm Hol}_v^1(\Fi,C,L) = g_\g = \Phi_{\Fi}([\g]){\in G}.
$$
\end{definition}

\noindent Note that {the holonomy ${\rm Hol}_v^1(\F^1,C,L)$ of $\F^1$ along $C$}
depends on the chosen starting point $v \in C\cap M^0$
only by conjugation by an element of $G$.

\begin{remark}[Holonomy along a circle: algebraic topological definition]
\label{holalongcircle-conceptual}
{Recall $S^1=\d D^2$.  
Choose} a homeomorphism $f\colon \d D^2 \to C$,
preserving the orientation, sending the base-point $*=(0,0)$ of $\d D^2$
to $v$.
By elementary algebraic topology {(as $\pi_1(S^1)=\mathbb{Z}$)},
any two such homeomorphisms are homotopic, relative to $*$.
Let $i_v(C)=f_*(1)$, where $f_*\colon
\pi_1(S^1,*)\cong \Z \to \pi_1(C,v)\subset \pi_1(C,C^0)$ is the
induced map on homotopy groups.
{Clearly $i_v(C)={t_1^{\theta_1}  t_2^{\theta_2} \dots t_n^{\theta_n}}$, as in
  Def. \ref{holalongcircle}.} 
 Let $\F_C^1$ be the restriction of $\F^1$ to the induced lattice
 decomposition $L_C$ of $C$; see Def. \ref{relcw}.
 {It hence clearly holds that:}
$$
 {\rm Hol}_v^1(\F^1,C,L)=\Phi_{\F^1_C}(i_v(C)).
$$
{Here $\Phi_{\F^1_C} \colon \pi_1(C,C^0) \to G$  is the  discrete parallel transport of $\F^1_C$.}
\end{remark}

\newcommand{\Chi}{\chi}
\begin{remark}
{
  Although gauge configurations can formally be defined separately
  from Hamiltonians, as above, they have no physical meaning without an
  associated Hamiltonian. In particular parts of the structure of
  space-time
  are encoded  {\em in a model}
  not in the gauge
  configuration but in the Hamiltonian. We are not ready to give the
  `higher'
  Hamiltonian \S \ref{thkm} (the higher Kitaev model) that will be the central focus of this paper, but we can   already  give an illustrative `standard' example, which also serves as a template for the Kitaev quantum double model \cite{kitaev2003fault}; see \S\ref{comparison}.
  Given a lattice $(M,L)$ and a group $G$, and hence the  
  set
  $\catfun$ of functors between $\freegl$ and $G$,
  we may define for each
  character $\Chi: G \rightarrow \CC$ a Wilson action 
  $H_{\Chi} : \catfun \rightarrow \RR$ by (cf. Def. \ref{notdelta}):
  \[
  H_{\Chi} (F) = \sum_{P \in L^2}^{}  Re(\Chi(F(\dL(P))) 
  \]
  where $Re:\CC\rightarrow\RR$ is the real part
  (see e.g. Wilson \cite{Wilson74}, \cite[\S8]{Kogut79}, \cite[\S10.2]{Martin91}
or \cite[\S1.2]{Oeckl}).
  Note that this depends strongly on  the cell-decomposition of $M$, as
well as $M$. 
The {\em main} thing to note at this point is that the sum is over
plaquettes, thus the Hamiltonian is sensitive to the 2-dimensional
structure in the lattice (whereas the gauge configuration `sees' only
the underlying graph $L^1$). We will return to this point later.
}
\end{remark}

\subsection{Higher order gauge configurations} \label{ss:HOGC}

{In this paper, we  consider fake-flat 2-gauge configurations on a
2-lattice $(M,L)$, {as discretised models for higher gauge fields \cite{baez_schreiber,BaezHuerta11,martins_picken}.}
Instead of a gauge group we have a crossed module of groups;
Def. \ref{cm}.
The main aim  
is to extend Thm. \ref{d1dh}, Def. \ref{holalongcircle} and
Rem. \ref{holalongcircle-conceptual} to the case of fake-flat 2-gauge
configurations.} This yields 2-dimensional (2D) notions of parallel
transport which restrict to notions of  2D holonomy along surfaces,
cellularly embedded in $M$.
We will address the 2-sphere and 2-disk case,
which play an important role in higher Kitaev models.

\subsubsection{Fake-flat 2-gauge configurations}\label{ff2gc}

Continuing the work of Yetter and Porter
\cite{yetter_tqft,porter_tqft},
fake-flat 2-gauge configurations on CW-complexes were  defined in
\cite{martins_crossed,martins_porter,martins_cw_complex}.
Their algebraic topology interpretation was developed {therein}, following
the work of Brown and Higgins on fundamental crossed modules of pairs
of spaces and 2-dimensional van Kampen theorem
\cite{brown_2dvk,brown_higgins_colimits,brown_higgins_cubes,brown_higgins_sivera}. Homotopy quantum field theory applications of fake-flat 2-gauge configurations appear in \cite{PorterTuraev} (and were there called ``formal $\cal C$ maps'').
The inherent (and independently addressed) differential-geometric
higher gauge theory for 2-bundles  with {a  2-connection} appears in
\cite{baez_schreiber,schreiber_waldorf1,schreiber_waldorf2,martins_picken,Zu1}. The term ``fake-flatness'' appeared originally in the context of  gerbes with connection; see \cite{BrM}.

\begin{definition}[2-gauge configuration]\label{2gcc}
Let ${\Gc}=(\dG\colon E \to G,\trr)$ be a crossed module of groups.
Given $\Gc$, a
2-gauge configuration $\F = (\F^1,\F^2)$, based on a 2-lattice {$(M,L)=\big(M,L=(L^0,L^1,L^2,\dots)\big)$,} is
given by:
\begin{itemize}
\item A map $\F^1\colon L^1 \rightarrow G$, denoted:
  $t \in L^1 \mapsto g_t\in G$, or $t\in L^1 \mapsto \F^1(t)\in G$.
\item A map $\F^2: L^2 \rightarrow E$, denoted:
  $P\in L^2  \mapsto e_P\in E$, or  $P\in L^2 \mapsto \F^2(P)\in E$.

 \end{itemize}
\end{definition}
\noindent A 2-gauge configuration gives rise to a groupoid map $\Phi_{\F}=\Phi_{\F^1}\colon \pi_1(M^1,M^0) \to G$; see Thm. \ref{d1dh}.

We mainly consider {\em fake-flat} 2-gauge configurations. Let us explain what this means.

\begin{definition}[Fake-flat 2-gauge configuration]\label{fakeflatconf} A 2-gauge configuration $\F = (\F^2,\F^1)$, based on a 2-lattice $(M,L)$, is said to be fake-flat if for each plaquette {$P\in L^2$} it holds that (recall the notation of  Rem. \ref{notdelta}):
   $$\dG(e_P)=\Phi_{\F^1}(\dL(P)).$$
Given a crossed module $\Gc$, we denote the set of fake-flat 2-gauge configurations in $(M,L)$ as $\Theta(M,L,\Gc)$.
   \end{definition}

Let us give more explanation on the definition of fake-flatness. This
is one of the points where the fact that we are restricting to
2-lattices Def. \ref{2-lattice}  makes our discussion a lot
simpler. One more definition is needed.

\begin{definition}[Quantised boundary $\dL^Q(P)$ of a plaquette]\label{qbp}
 Let {$P\in L^2$} be a plaquette of a 2-lattice $(M,L)$. Let $\psi_P^2\colon S^1 \to M^1$ be the attaching map of the correspondent closed 2-cell $\overline{c_P^2}.$ We are given a  CW-decomposition ${Z_{P}}$ of $S^1$, which contains $*\in S^1$ as a $0$-cell, such that $\psi_P^2\colon S^1 \to M^1$ is cellular and satisfies the conditions of Def. \ref{2-lattice}.  We will in addition suppose that all characteristic maps {$\phi_{{\gamma_i}}\colon [0,1] \to S^1=\bound ([0,1]^2)$} of the closed 1-cells $\gamma_1,\dots,\gamma_n$ of ${Z_{P}}$ are oriented counterclockwise; see Fig. \ref{R}. We also assume that the 1-cells $\gamma_1,\dots,\gamma_n$ of $Z_P$ appear in that order, as we ``travel'' counterclockwise from $*$ to $*$ around $S^1$.
\begin{figure}[H]
\centerline{\relabelbox
\epsfysize 3cm
\epsfbox{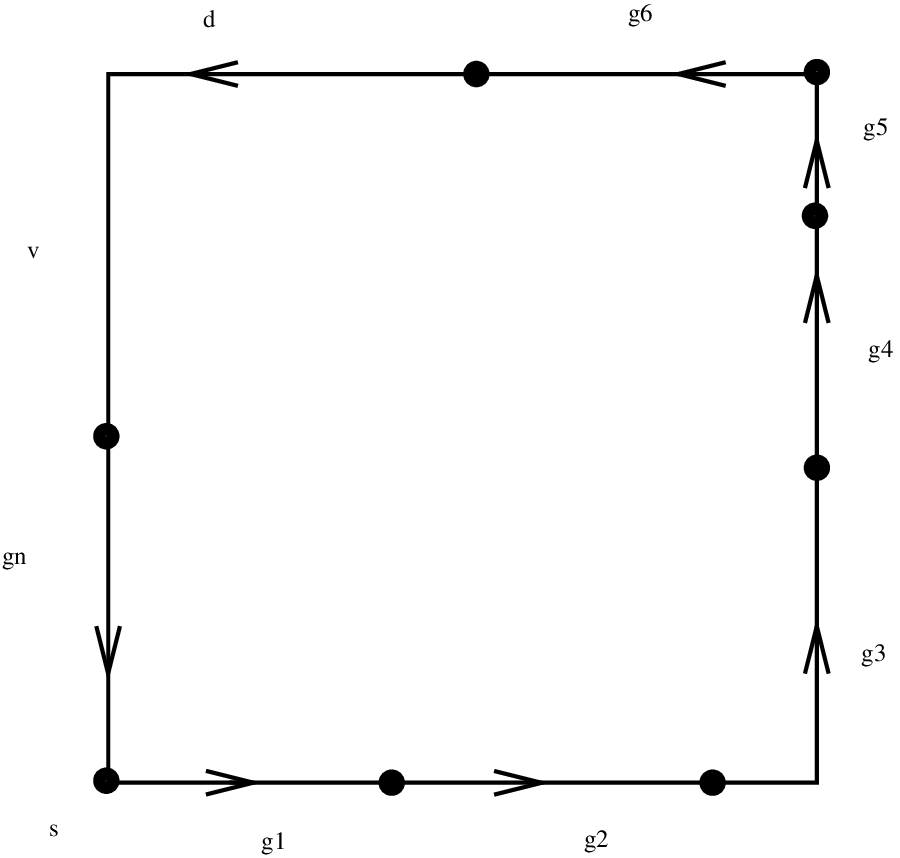}
\relabel{g1}{$\gamma_1$}
\relabel{g2}{$\gamma_2$}
\relabel{g3}{$\gamma_3$}
\relabel{g4}{$\gamma_4$}
\relabel{g5}{$\gamma_5$}
\relabel{g6}{$\gamma_6$}
\relabel{d}{$\dots$}
\relabel{v}{$\vdots$}
\relabel{gn}{$\gamma_n$}
\relabel{d}{$\dots$}
\relabel{s}{$*$}
\endrelabelbox}
\caption{The CW-decomposition ${Z_{P}}$ of the 1-sphere $S^1$. \label{R}}
\end{figure}

 We let $x_P=\psi_P^2(*)\in M^1$ be the base-point of the closed 2-cell $\overline{c_P^2}$ corresponding to $P$.
 Suppose that $\psi_P^2$ is not constant. Then for each $i\in \{1,\dots,n\}$, $\psi_P^2(\gamma_i)$ is a closed 1-cell $t_i$ of $M$ and $\psi_P^2$ restricts to a homeomorphism $\gamma_i \to t_i$. The closed 1-cell $t_i\subset M$ is oriented by its characteristic map. We put $\theta_i=1$ if the  restriction $\gamma_i \to t_i$ of $\psi_P^2$ preserves orientation and $\theta_i=-1$, otherwise. The quantised boundary of $P$ is defined to be the following quantised path in $(M,L)$ (Def. \ref{de:qp}), {connecting} $x_P$ to $x_P$: $\dL^Q(P)=t_1^{\theta_1}t_2^{\theta_2}...t_n^{\theta_n}$.

\noindent {Otherwise, if $\psi_P^2\colon S^1 \to M^1$ satisfies $\psi_P^2(S^1)=x_P$, we define the quantised boundary of $P$ as $\dL^Q(P)=\emptyset_{x_P}$}.
\end{definition}
\noindent An example appears in Ex. \ref{fake0}.

\medskip
Let {$P\in L^2$}. By passing to the equivalence class of the
quantised path $\dL^Q(P)$ (cf. the construction of
$FG\langle L^0,L^1 \rangle \cong \pi_1(M^1,M^0)$
in Def. \ref{de:fgog} and Prop. \ref{lem:gfree}) yields  {$\dL(P)\in \pi_1(M^1,M^0)$} in Def. \ref{notdelta}. Hence:
\begin{proposition}\label{fake-flat-conf-inpractice}
Let $(M,L)$ be a 2-lattice. Let ${\Gc}=(\dG\colon E \to G,\trr)$ be a crossed module of groups. A
2-gauge configuration $\F = (\F^2,\F^1)$ is fake-flat if, and only if:
\begin{itemize}
\item For each plaquette $P$ for which $\psi_P^2$ is not constant, putting  $\dL^Q(P)=t_1^{\theta_1}t_2^{\theta_2}...t_n^{\theta_n}$,  it holds that:
\begin{equation}\label{fakeflatexplicit}
 \dG(e_P) \; = g_{t_1}^{\theta_1} \dots g_{t_n}^{\theta_n}=\Psi_{\F^1}([\d L^Q(P)]).
\end{equation}
 \item If $P$ is a plaquette for which $\psi_P^2(S^1)=x_P$ it should hold that {$e_P \in \ker(\dG\colon E \to G)\subset E$}.
 \end{itemize}
\end{proposition}

\begin{example}\label{fake0} Consider the square $D^2=[0,1]^2$, with the {2-lattice decomposition} indicated in the middle of Fig. \ref{fake}, namely $L=(L^0,L^1,L^2)=(\{v_1,v_2,v_3,v_4\},\{t_1,t_2,t_3,t_4\}, \{P\})$. (Abstract cells and the corresponding closed cells are denoted in the same way.) The geometric 2-cell $\overline{e_P^2}=P$ attaches along the identity map $\psi_P^2\colon S^1 \to S^1$, hence  the attaching map $\psi_P^2\colon S^1 \to S^1$ is positively oriented. The  CW-decomposition ${Z_{P}}$ of $S^1=\bound ([0,1]^2)$ (Def. \ref{2-lattice}) has a vertex for each corner and a positively oriented edge for each side of $[0,1]^2$.
 \begin{figure}[H]
 \centerline{\relabelbox
 \epsfysize 3.7cm
 \epsfbox{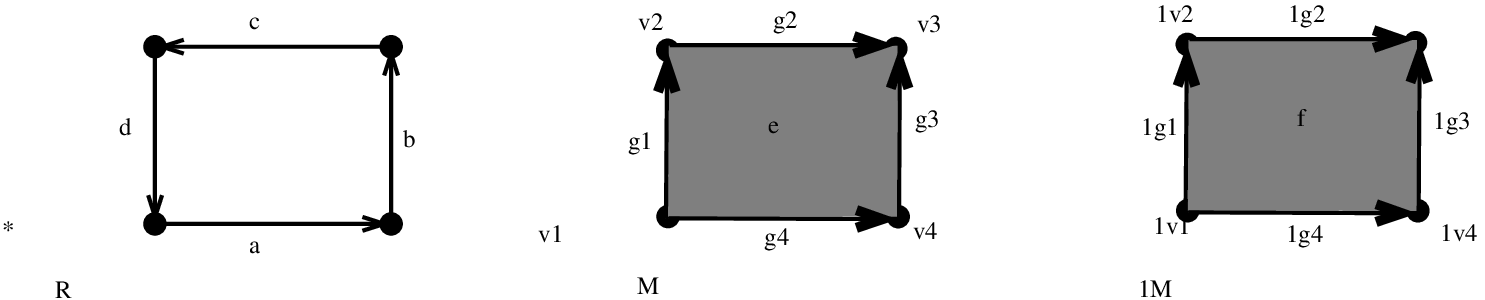}
 \relabel{v1}{$x_P=v_1$}
 \relabel{g1}{$t_1$}
 \relabel{v2}{$v_2$}
 \relabel{g2}{$t_2$}
 \relabel{v3}{$v_3$}
 \relabel{g3}{$t_3$}
 \relabel{v4}{$v_4$}
 \relabel{g4}{$t_4$}
 \relabel{e}{${\bf P}$}
 \relabel{M}{$\dL^Q(P) \; = t_4  t_3 t_2^{-1} t_1^{-1}$}
 \relabel{a}{$\g_{1}$}
 \relabel{b}{$\g_{2}$}
 \relabel{c}{$\g_{3}$}
 \relabel{d}{$\g_{4}$}
 \relabel{1g1}{$g_{t_1}$}
 \relabel{1g2}{$g_{t_2}$}
 \relabel{1g3}{$g_{t_3}$}
 \relabel{1g4}{$g_{t_4}$}
 \relabel{f}{${\bf e_{P}}$}
 \relabel{1M}{$\dG(e_P) \; = g_{t_4} g_{t_3} g_{t_2}^{-1} g_{t_1}^{-1} $}
 \relabel{R}{${Z_{P}}, \textrm{ the CW-decomposition of } S^1$ }
 \relabel{*}{$*=(0,0)$}
 \endrelabelbox}
 \caption{A 2-lattice decomposition $L$ of $D^2$, where ${Z_{P}}$ is the corresponding CW-decomposition of $S^1$: cf. Def, \ref{2-lattice}. The base-point $x_P$ of $P$ is $v_1$. We also show a fake-flat 2-gauge configuration {in $(D^2,L)$}.\label{fake}}
 \end{figure}
{For this example, the} quantised boundary of the plaquette $P$ is
 $\dL^Q(P) \; = t_4  t_3 t_2^{-1} t_1^{-1}.$ Hence a 2-gauge configuration of $([0,1]^2,L)$ is given by four elements $g_{t_1},g_{t_2},g_{t_3},g_{t_4}$ of $G$, the colours of the edges, $t_1$, $t_2$, $t_3$, $t_4$, and an element $e_P\in E$, colouring $P$. The fake flateness conditions says:
 $ \dG(e_P)=g_{t_4} g_{t_3} g_{t_2}^{-1} g_{t_1}^{-1}.$
\end{example}
Let $\PhiM$ denote the set of fake-flat {2-gauge} configurations.
Note that $\PhiM$ is non-empty. In particular the `naive vacuum'
$\vac$ given by $e_P=1_E$ for all plaquettes $P$ of $(M,L)$ and $g_t=1_G$ for all edges $t$ is fake-flat. Here $1_G$ and $1_E$ denote the identities of $G$ and $E$.

\subsection{
  On Whitehead theorem, 2-gauge configurations and the lattice 2-groupoid}
\label{revWhite}

 Let $(M,L)$ be a  2-lattice. Passing to the 0, 1 and 2-skeletons of the corresponding CW-decomposition of $M$, yields a triple $(M^2,M^1,M^0)$ of  locally path-connected spaces, where $M^0$
intersects {non-trivially any path-connected} component of $M^1$ and $M^2$. Utilising Def. \ref{defpi2}, we can form the fundamental crossed module $\Pi_2(M^2,M^1,M^0)$; Thm. \ref{defpi2-2}. This crossed module  plays the role of lattice 2-groupoid of $(M,L)$.

Observe that to make use of our {fake-flat 2-gauge configurations} we need
corresponding lifts of Lem. \ref{lem:gfree} and {Thm.} \ref{pr:freee}.
Analogously to {Thm.} \ref{pr:freee},
the crossed module $\Pi_2(M^2,M^1,M^0)$ is free
on the attaching maps of the geometric plaquettes {(i.e. closed 2-cells) of $(M,L)$.
This result (which holds in the general case of CW-complexes) is an
old result due to JHC Whitehead \cite{Whitehead}.
Modern treatments can be found in
\cite{brown_higgins_sivera,brown_hha,brown_whitehead,brown_2dvk,Baues}.}

{Consider groupoids $H=(\sigma,\tau \colon H_1 \to H_0)$ and $H'=(\sigma,\tau \colon H_1'\to H_0')$. Throughout this subsection, we use the following notation. If $f\colon H \to H'$ is a groupoid map,   put $f_{\rm MOR}\colon H_1 \to H_1'$ to be the  restriction of $f$ to morphisms and $f_{\rm OBJ}\colon H_0 \to H'_0$ to be the restriction of $f$ to objects.
If $(\d\colon E \to G)$ is a a crossed module of groupoids, thus $E$ and $G$ have the same set $C$ of objects, it holds that $\d_{\rm OBJ}\colon C \to C$ is the identity map.} 

In order not to excessively load our formulae, we use the same notation for the groupoid $\pi_2(M^2,M^1,M^0)$ and for its set of morphisms, and the same for  $\pi_1(M^1,M^0)$. Which one is meant is clear from the context. {The coinciding source and target maps in the groupoid $\pi_2(M^2,M^1,M^0)$ are given by the obvious map:} $${\beta\colon \pi_2(M^2,M^1,M^0)\doteq \bigsqcup_{v \in M^0}  \pi_2(M^2,M^1,v) \to M^0.}$$

First  we specify what ``crossed module freeness'' is.
Let $G=(\sigma,\tau \colon G_1 \to G_0)$ be a groupoid. 
Let also $K$ be a set mapping to $G_0$, through a map $\beta_0$.
Suppose also that we have a map $\d_0\colon K \to G_1$ that makes the
diagram below commute (therefore $\beta_0=\tau \circ \d_0$ and
 $\beta_0=\sigma \circ \d_0)$:
\begin{equation}\label{basepoints}
\xymatrix{ & K\ar[dr]_{\beta_0}\ar[r]^{\d_0}  & G_1 \ar@/^0.5pc/[d]^{\sigma} \ar@/_0.5pc/[d]_{\tau}\\
                     && G_0 }
\end{equation}
(This in particular means that $\d_0$ is a map from $K$ into the set
of arrows {(morphisms)} in $G$ that have the same source and target.)
Let  {$F=(\beta,\beta\colon F_1 \to G_0)$} be a totally intransitive groupoid
with the same set of objects as $G$.
We say that a crossed module of groupoids $(\d\colon F \to G,\trr)$
is free  on $\d_0\colon K \to G_1$ {(or more precisely on $\d_0\colon K \to G_1$ and $\beta_0\colon K \to G_0$ as in \eqref{basepoints})}
if there exists an inclusion {(set)} map ${\rm inc}\colon K \to F_1$ making the diagram below commute:
 \begin{equation}\label{fcm2}
{\xymatrix{ & K \ar[rd]|{\d_0}\ar[r]^{\rm inc} \ar@/_1pc/[rdd]_{\beta_0} &F_1\ar[d]|{\d_{\rm MOR}}\ar@/^2pc/[dd]^{\beta} \\
                          && G_1\ar@/^0.5pc/[d]^{\sigma} \ar@/_0.5pc/[d]_{\tau} \\
                          && G_0}}
\end{equation}
 such that the following universal property is satisfied:
{\it Given any crossed module $(\d'\colon E' \to G',\trr)$ of
  groupoids, where $G'=(\sigma',\tau'\colon G_1' \to G_0')$ and $E'=(\beta',\beta'\colon E_1' \to G_0')$, and  any groupoid map $\phi\colon G \to G'$, and any set map
  $\psi_0\colon K \to E'_1$, 
 such that
  $\d'_{\rm MOR}\circ \psi_0=\phi_{\rm MOR} \circ \d_0$, there exists a unique
  groupoid map $\psi\colon F \to E'$, {with $\psi_{\rm OBJ}=\phi_{\rm OBJ}$,} making the diagram below
  commutative:
\begin{equation}\label{fcm3}
 \hskip-1cm\xymatrix{
             &K\ar@/^1pc/[rr]^{\psi_0}\ar[rd]_{\d_0}\ar[r]|<<<<{\rm inc}&F\ar[d]_{\d} \ar@{-->}[r]_>>>>>\psi & E'\ar[d]^{\d'} \\
             & &G\ar[r]_>>>>>\phi &G'}\xymatrix@R=1pt{\\\textrm{  \quad or leaving in all information: }}
\hskip-0.7cm{\xymatrix{ & K \ar@/^1.5pc/[rrrr]^{\psi_0}\ar[rd]|{\d_0}\ar[r]|{\rm inc} \ar@/_1pc/[rdd]_{\beta_0} &F_1\ar@{-->}[rrr]_{\psi_{\rm MOR}} \ar[d]|{\d_{\rm MOR}}\ar@/^2pc/[dd]^<<<<<{\beta}  & & & E_1'\ar[d]|{\d'_{\rm MOR}} \ar@/^2pc/[dd]^<<<<<{\beta'} \\
                          && G_1\ar[rrr]_{\phi_{\rm MOR}}|<<<<<<\quad\ar@/^0.5pc/[d]^{\sigma} \ar@/_0.5pc/[d]_{\tau} &&& G'_1 \ar@/^0.5pc/[d]^{\sigma'} \ar@/_0.5pc/[d]_{\tau'}  \\
                          && G_0\ar[rrr]_{\phi_{\rm OBJ}=\psi_{\rm OBJ}}  &&& G'_0 }}
\end{equation}
and so that the pair $(\psi,\phi)$ of groupoid maps is a crossed
module map $(\d\colon F \to G) \to (\d'\colon E' \to G')$.
}

\begin{lemma}\label{modelfree}
Given    $\d_0\colon K \to G_1$ {as in \eqref{basepoints}, the totally intransitive groupoid   $F=(\beta,\beta\colon F_1 \to G_0)$, i.e. the top groupoid of the free crossed module on $\d_0\colon K \to G_1$,} is uniquely specified up to isomorphism by the  universal property above. A model for $F$ is the following. First of
 all note that we have a totally intransitive graph $K'$ having $G_0$ as set of vertices and the set of edges being the set of all pairs $(g,k)$, where $g \in G_1$ and $k\in K$ are such that $\tau(g)=\beta_0(k)$. The coinciding source and target maps {of $K'$} are given by $(g,k) \mapsto \sigma(g)$. Edges of $K'$ therefore take the form:
 $$
\sigma(g) \ra{(g,k)} \sigma(g),\textrm{ where } g \in G_1 \textrm{ and
} k \in K \textrm{ are such that } \tau(g)=\beta_0(k).
$$
We then form the free groupoid $FG(K')$, which is a totally
intransitive groupoid having $G_0$ as set of objects.
We have a groupoid map $\partial\colon FG(K') \to G$ which is the identity on objects and on {generating} morphisms is:
$${\partial_{\rm MOR}\big (\sigma(g) \ra{(g,k)} \sigma(g)\big )
\; = \; \sigma(g)
\ra{g \d_0(k) g^{-1}} \sigma(g).}
$$

The groupoid $FG(K')$ has a natural left action by automorphisms of
the groupoid $G$. On generators of $FG(K')$, {the action} takes the
following form: If $g,h \in {G_1}$ are such that {$\tau(h)=\sigma(g)$,
  and $k \in K$ is such that $\beta_0(k)=\tau(g)$, put
  $h.(g,k)=(hg,k)$.}
Then together with the map {$\d\colon FG(K') \to  G$,}
nearly all conditions that crossed modules of groupoids must
satisfy {(Def. \ref{cm} and \ref{cmg})} hold,
except for the {2nd Peiffer relation.}
The groupoid $F$ is obtained from $FG(K')$ by dividing out the 2nd Peiffer relations, in the obvious way.\end{lemma}

\begin{proof} Routine. Details are in
\cite[\S 7.3(ii)]{brown_higgins_sivera} and \cite{brown_hha}. \end{proof}

\medskip

Let $(M,L)$ be a 2-lattice. Recall Rem. \ref{notdelta} and \ref{incP}, and Def. \ref{2-lattice}. Going back to $\Pi_2(M^2,M^1,M^0)$, to each plaquette {$P\in L^2$} we can associate elements $\dL(P) \in \pi_1(M^1,x_P)\subset \pi_1(M^1,M^0)$ and $\iL(P) \in \pi_2(M^2,M^1,x_P)$, where $x_P=\psi_P^2(*)$ is the base-point of the {closed 2-cell $\overline{c_P^2}$} corresponding to $P$. Also $\d_{\rm MOR}(\iL(P))=\dL(P)$. {In particular we  have a commutative diagram as in (\ref{basepoints},\ref{fcm2},\ref{fcm3}):}
$${\xymatrix{ &L^2\ar@/^1pc/[rr]|\iL  \ar[dr]^{\dL}\ar@/_1pc/[ddr]_{P \mapsto x_P}&   & \pi_2(M^2,M^1,M^0)\ar@/^3pc/[dd]<2ex>^{\beta}\ar[d]^{\d_{\rm MOR}}\\ &   & \pi_1(M^1,M^0)  \ar@/^0.5pc/[d]^{\sigma} \ar@/_0.5pc/[d]_{\tau} \ar[r]|\id& \pi_1(M^1,M^0)\ar@/^0.5pc/[d]^{\sigma} \ar@/_0.5pc/[d]_{\tau}\\
                     && M^0 \ar[r]|\id &M^0 }}$$
Hence we can form the free crossed module
{$(\d\colon F\to \pi_1(M^1,M^0),\trr)$}
on $\dL\colon L^2 \to \pi_1(M^1,M^0)$,
where
{$F=(\beta,\beta\colon F_1 \to M^0)$.}
And we have a unique {groupoid} morphism $\iota\colon F \to \pi_2(M^2,M^1,M^0)$, {which is the identity on objects,} that makes the diagram below commutative, and so that {$(\iota,\id)\colon (F\to \pi_1(M^1,M^0),\trr) \to \Pi_2(M^2,M^1,M^0)$}  is a crossed module map:
{\begin{equation}\label{defiota}\xymatrix{ &L^2\ar[r]|{\rm inc}\ar@/^1pc/[rr]|\iL  \ar[dr]^{\dL}\ar@/_1pc/[ddr]_{P \mapsto x_P}& F_1\ar@/^2.5pc/[dd]<1ex>^<<<<<<<{\beta}|\hole\ar@{-->}[r]_<<<<<<<{\iota_{\rm MOR}} \ar[d]^{\d_{\rm MOR}}  & \pi_2(M^2,M^1,M^0)\ar@/^3pc/[dd]<2ex>^{\beta}\ar[d]^{\d_{\rm MOR}}\\ &   & \pi_1(M^1,M^0)  \ar@/^0.5pc/[d]^{\sigma} \ar@/_0.5pc/[d]_{\tau} \ar[r]|\id& \pi_1(M^1,M^0)\ar@/^0.5pc/[d]^{\sigma} \ar@/_0.5pc/[d]_{\tau}\\
                     && M^0 \ar[r]|\id &M^0 }\end{equation}}

\begin{Theorem}[Whitehead Theorem]\label{wt}
  Let $(M,L)$ be a 2-lattice,
  or indeed any CW-complex.
Then the crossed module
$
\Pi_2(M^2,M^1,M^0)
= (\d \colon\pi_2(M^2,M^1,M^0)
\rightarrow \pi_1(M^1,M^0) , \trr )
$
is free on
$\dL\colon L^2 \to \pi_1(M^1,M^0)$. {Specifically}, the map {$\iota\colon F \to \pi_2(M^2,M^1,M^0)$ {defined from \eqref{defiota}} is an isomorphism of groupoids, and, moreover, the pair $(\iota,\id)\colon (\d\colon F\to \pi_1(M^1,M^0),\trr) \to \Pi_2(M^2,M^1,M^0)$ is an isomorphism of crossed modules.}
\end{Theorem}
\begin{proof}{See \cite[\S 6]{brown_higgins_sivera} and \cite{brown_2dvk,brown_hha,brown_higgins_colimits,brown_higgins_cubes}, where
 Whitehead's theorem is deduced from the  more general 2-dimensional van
 Kampen theorem, and also \cite{brown_whitehead,Baues}. {(We note that Whitehead's original proof was done for crossed modules of groups rather than of groupoids, and only considered spaces with a single base-point.)}}\end{proof}

\begin{remark}\label{genpi2}
Note that Whitehead's theorem together with the construction of free
crossed modules (Lem. \ref{modelfree}) implies that the totally intransitive
groupoid $\pi_2(M^2,M^1,M^0)$ is generated by
$\sigma(g) \ra{g \trr  \iL(P)} \sigma(g)$,
where $g \in \pi_1(M^1,M^0)$ and $P \in L^2$ is
such that:
$ x_P= \tau (g)$; recall that $\sigma(g)$ and $\tau(g)$ are the initial and end-points of $g$ and $x_P$ is the base-point of $P$. {This will have a primary role in the construction of the 2-dimensional holonomy of a fake-flat 2-gauge configuration along a cellularly embebbed surface in \S \ref{atd}.}
\end{remark}

{We will only use the universal property \eqref{fcm3} in the case when
$\Gc=(\d' \colon E' \to G',\trr)$ is a crossed module of groups. In this
case there is not much to worry about maps on object sets of
groupoids, as $E'$ and $G'$ are groupoids with a single object. We hence will not display the related part of the commutative diagrams. Given groupoid maps $f\colon \pi_1(M^1,M^0) \to G'$ and $f'\colon \pi_2(M^2,M^1,M^0) \to E'$ we put $f_{\rm MOR}=f$ and $f'_{\rm MOR}=f'$. In $\Gc$, we put $\d'_{\rm MOR}=\d'$.  Recall that we use the same notation for the groupoid $\pi_2(M^2,M^1,M^0)$ and for its set of morphisms, and the same for  $\pi_1(M^1,M^0)$.}
 
{Whitehead's theorem implies the following. Consider the inclusion map $\iota_L\colon L^2 \to \pi_2(M^2,M^1,M^0)$,  with $\d_{\rm MOR} \circ {\iota_L} =\dL$, cf. Rem. \ref{notdelta} and \ref{incP}. If $G'$ is a group,
$(\d' \colon E' \to G',\trr)$ is a crossed module of groups,
 and
 $\phi\colon \pi_1(M^1,M^0) \to G'$ is a groupoid map, then given any
 {set} map $\psi_0\colon L^2 \to E'$ such that $\phi \circ \dL= \d' \circ
 \psi_0$, there exists a unique groupoid map $\psi\colon
 \pi_2(M^2,M^1,M^0) \to E'$ making the diagram below commutative and
 also making the pair $(\psi,\phi)$ a crossed module map $\Pi_2(M^2,M^1,M^0) \to \Gc$ (thus
 $(\psi,\phi)$ is compatible with boundaries and groupoid actions):}
\begin{equation}\label{partUniversal}
 {
\xymatrix{
             &L^2\ar@/^1pc/[rr]^{\psi_0}\ar[rd]_{\d_L}
                                        \ar[r]|>>>>>{\iota_L}&\pi_2(M^2,M^1,M^0)
                                        \ar[d]^{\d_{\rm MOR}}
                                        \ar@{-->}[r]_>>>>>\psi & E'
                                        \ar[d]_{\d'} \\
             & &\pi_1(M^1,M^0)\ar[r]^>>>>>\phi &G'} }.
\end{equation}

\subsubsection{The discrete 2-dimensional (2D) parallel transport of a fake-flat 2-gauge configuration}\label{2dpar}

{As promised in \ref{dptf}, we now state and prove the analogue of Thm. \ref{d1dh} for fake-flat 2-gauge configurations.}

{Let $(M,L)$ be a 2-lattice and $\Gc=(\dG\colon E\to G,\trr)$ be a group crossed module.}
\begin{Theorem}[The {discrete} 2D parallel transport of a fake-flat 2-gauge configuration]\label{d2dh}
{ There exists a one-to-one correspondence} between fake-flat 2-gauge configurations $\F=(\F^2,\F^1)$ in $(M,L)$ and crossed module maps $(\Psi_\F,\Phi_\F)\colon \Pi_2(M^2,M^1,M^0) \to \Gc. $ {(Note $\Psi_\F\colon \pi_2(M^2,M^1,M^0) \to E$ and $\Phi_\F\colon \pi_1(M^1,M^0) \to G$ therefore are groupoid maps, compatible with boundary maps and groupoid actions in the obvious way.)}
\end{Theorem}
\noindent {In analogy with the differential-geometric construction of  2-dimensional parallel transport 2-functors attached to 2-connections on 2-bundles \cite{schreiber_waldorf2,martins_picken}, the crossed module map $(\Psi_\F,\Phi_\F)\colon \Pi_2(M^2,M^1,M^0) \to \Gc$ associated to a fake-flat 2-gauge configurations $\F$ will  be called the discrete 2D parallel transport 2-functor of $\F$.}
\begin{proof} {Recall the notation introduced after Thm \ref{wt}.}
{A fake-flat 2-gauge configuration $\F=(\F^2\colon L^2 \to E\,\, ,\,\,\F^1\colon L^1 \to G)$ is an
assignment $\g\mapsto g_\g$ of an element of $G$ to each 1-cell $\g$ of $L$, and an assigment $P \mapsto e_P$ of an element of $E$ to  each 2-cell $P$, satisfying the
fake-flatness condition of Def. \ref{fakeflatconf}. Whitehead's theorem (Thm \ref{wt}) states that the fundamental crossed
module $\Pi_2(M,M^1,M^0)$, with set $M^0$ of base points -- a crossed
module of groupoids, is {isomorphic to} the free crossed module on the map $\dL\colon L^2 \to \pi_1(M^1,M^0)$.}

Since the groupoid {$\pi_1(M^1,M^0)$} is free on the 1-cells, the assignment
$\F^1\colon \gamma \in L^1 \mapsto g_\gamma \in G$
uniquely extends to a groupoid map
$\Phi_\F=\Phi_{\F^1}\colon {\pi_1(M^1,M^0)} \to G$.  The fake-flatness condition means that the
outer  part of the diagram below commutes:
$${
\xymatrix{
  &L^2\ar@/^1pc/[rr]^{\F^2}\ar[rd]_{\dL} \ar[r]|<<<<{\iota_L}
    &\pi_2(M^2,M^1,M^0)\ar[d]^{\d_{\rm MOR}} \ar@{-->}[r]_>>>>>>{\Psi_\F}
      & E\ar[d]_{\dG}
  \\
     & &\pi_1(M^1,M^0)\ar[r]^>>>>>{\Phi_\F} &G} .}
$$
And by applying the universal property defining free crossed modules
of groupoids, {in the particular form of \eqref{partUniversal}, we can see that a gauge configuration $\F=(\F^2,\F^1)$} can be
extended (uniquely) to a crossed module map
$(\Psi_\F,\Phi_\F)\colon \Pi_2(M,M^1,M^0) \to {\Gc}$, and all crossed module maps    $\Pi_2(M,M^1,M^0) \to {\Gc}$ arise this way.
\end{proof}

{Cf. \ref{dptf}. We have now explained the crossed module analogue of discrete parallel transport functors (Thm \ref{d1dh}), in terms of discrete 2D parallel transport 2-functors.
In the next two subsections \S \ref{atd} and \S \ref{cd}, we address how the 2D parallel transport of a fake-flat 2-gauge configuration can be used to define notions of discrete 2D holonomy along surfaces $\Sigma$ cellularly embedded in a 2-lattice $(M,L)$.   We will only deal with the case when $\Sigma$ is the 2-disk $D^2$ or the 2-sphere $S^2$. In these cases, which are the ones needed to define higher Kitaev models, a 2D holonomy can be associated to cellularly embedded surfaces $\Sigma \subset M$.  2D holonomy along {2-disks} and 2-spheres is particularly simple {to formulate, given that the} corresponding oriented mapping class groups are trivial.}  

{For surfaces $\Sigma$ not homeomorphic to $S^2$ or to $D^2$, additional information is needed to define a  meaningful 2D holonomy for a cellular embedding of $\Sigma$ into $(M,L)$. Namely (assuming orientability) we must choose an isotopy class of homeomorphisms $\Sigma' \to \Sigma$, where $\Sigma'$ is the boundary of an unknotted handlebody in $\mathbb{R}^3$;  see \cite{martins_picken,Zu1}. We will address this more general 2D holonomy in a forthcoming publication.}

\subsection{Algebraic topological definition of 2D holonomy along {2-disks} and 2-spheres}\label{atd}
Let us fix a crossed module of groups $\Gc=(\dG\colon E \to G,\trr)$.
In this subsection we use elementary algebraic topology  to define precisely and concisely the 2-dimensional (2D) holonomy of a fake-flat 2-gauge configuration along cellularly embedded 2-disks and 2-spheres; as such we present the 2D analogue of Rem. \ref{holalongcircle-conceptual}. A combinatorial definition of this 2D holonomy (therefore the analogue of Def. \ref{holalongcircle}) will be dealt with in \S \ref{cd}.
\subsubsection{The {2-disk} case}\label{dc}

Let $(M,L)$ be a 2-lattice. Let $\Sigma$ be a surface embedded in $M$. Suppose {that} $\Sigma$ is homeomorphic to the 2-disk $D^2$. Suppose in addition that $\Sigma$ is oriented.
Furthermore (cf. Def. \ref{relcw}) suppose that $L$ is a 2-lattice decomposition of the triple $(M,\Sigma,\bound (\Sigma))$, where $(\Sigma,\bound (\Sigma))$ is a pair homeomorphic to $(D^2,S^1)$. We have an induced relative CW-decomposition of $(\Sigma,\bound (\Sigma))$. Note $\Sigma=\Sigma^2$ (the 2-skeleton of $\Sigma$) and $\bound (\Sigma) \subset \Sigma^1.$

Choose a base point $v\in \bound(\Sigma)$.
Since {$\pi_2(\Sigma^1,\bound(\Sigma),v)$} is trivial, the homotopy exact sequences of the triple $(\Sigma,\Sigma^1,\bound (\Sigma))$  and of the pair $(\Sigma^1, \bound (\Sigma))$ {imply} that the inclusion $(\Sigma,\bound (\Sigma)) \to (\Sigma,\Sigma^1)$ yields injections $\pi_2(\Sigma,\bound (\Sigma),v) \to  \pi_2(\Sigma,\Sigma^1,v)$ and $\pi_1(\bound (\Sigma),v) \to \pi_1(\Sigma^1,v)$. We can thus see the crossed module $\Pi_2(\Sigma,\bound (\Sigma),v)\cong {(\id\colon \Z \to \Z)}$ (Ex. \ref{pi2d2}) as canonically included in $\Pi_2(\Sigma,\Sigma^1,v)$.

 Let $*=(0,0)$ be the common base point of $D^2$ and $S^1=\bound(D^2)$. By elementary algebraic topology -- since $\pi_2(D^2,S^1,*)=\Z$ -- any two pointed homeomorphisms $(D^2,S^1)\to (\Sigma,\bound (\Sigma))$ preserving orientation are homotopic as maps of pointed pairs $(D^2,S^1)\to (\Sigma,\bound (\Sigma))$. {This is used in the definition below.}

\begin{definition}[Notation:  $\dv(\Sigma,L)$ and $\iv(\Sigma,L)$]\label{iD} Cf Rem. \ref{notdelta} and \ref{incP}.  Let $\Sigma$ be {an} oriented surface homeomorphic to $D^2$.  Let {$v \in \bound(\Sigma)$, the boundary of $\Sigma$. {We will be mainly interested in the case when $v\in \bound(\Sigma) \cap M^0$.}} Consider a pointed orientation preserving homeomorphism {$j\colon (D^2,S^1,*) \to (\Sigma, \bound (\Sigma),v)$}. {We let $j_*\colon  \Pi_2(D^2,S^1,*) \to \Pi_2(\Sigma, \bound (\Sigma),v)$ be given by the induced map on homotopy groups.}
Let $\dv(\Sigma,v)\in \pi_1(\bound (\Sigma),v)\subset \pi_1(\Sigma^1,v)$ be {$\dv(\Sigma,v)=j_*(1),$} where $1$ is the positive (counterclockwise) generator of $\pi_1(S^1,v)\cong \Z$. Hence $\dv(\Sigma,v)$ is a {positively oriented} loop along the
boundary {$\bound(\Sigma)$} of the 2-disk $\Sigma$, starting and ending at $v$. Analogously put   {$\iv(\Sigma,L)=j_*(1) \in \pi_2(\Sigma,\bound (\Sigma),v)\subset \pi_2(\Sigma,\Sigma^1,v)$,} where $1$ is now the positive generator of $\pi_2(D^2,S^1,v)\cong \Z$.
\end{definition}
Note that by construction (cf. Ex \ref{pi2d2}):
\begin{equation}\label{di}
\begin{split}
\d \big (\iv(\Sigma,L)\big)&=\dv(\Sigma,L),\\ \dv(\Sigma,L) \trr  \iv(\Sigma,L)&=\iv(\Sigma,L).
\end{split}
\end{equation}
\begin{lemma}[Dependence of $\dv(\Sigma,L)$ and $\iv(\Sigma,L)$ on $v\in \bound (\Sigma)$]\label{iDv} Suppose that {$v\in M^0\cap \bound(\Sigma)$}.  Choose another {$v'\in M^0\cap \bound(\Sigma)$}. Consider a  path $\gamma$ in {$\bound(\Sigma)$}, connecting $v'$ to $v$. {(There are two different possible homotopy classes $[\g]$ for $\g$.) Then passing to the corresponding element  {$[\gamma]\in \pi_1(\Sigma^1,\Sigma^0)$,} it holds that:}
\begin{equation}\label{bpd}
{
\begin{split}
[\gamma] \trr \iv(\Sigma,L)&=\ivp(\Sigma,L), \textrm{ in } \pi_2(\Sigma,\Sigma^1,\Sigma^0);\\
[\gamma] \, \dv(\Sigma,L)\,[\gamma]^{-1}&=\dvp(\Sigma,L), \textrm{ in } \pi_1(\Sigma^1,\Sigma^0).
\end{split}
}
\end{equation}
\end{lemma}
\begin{proof} {Follows from geometric considerations and the fact that $\pi_1(S^1,*)$ acts trivially on $\pi_2(D^2,S^1,*)$.}  
\end{proof}

Let now $\Gc=(\dG\colon E \to G,\trr)$ be a crossed module of groups.
\begin{definition}[2D holonomy ${\rm Hol}_v(\F,\Sigma,L)$ of a fake-flat 2-gauge configuration $\F=(\F^2,\F^1)$ along $\Sigma\cong D^2$, with initial point $v$]\label{2dholatd} {Let $M$ be a topological manifold.} Let $\Sigma$ be an oriented disk embedded in $M$. Let $L$ be a 2-lattice decomposition of $(M,\Sigma, \bound (\Sigma))$; {see Def. \ref{relcw}}. Let $v \in \bound(\Sigma)\cap M^0$. Let $\F$ be a fake-flat 2-gauge configuration in $(M,L)$, and $\F_\Sigma$ be its  restriction  to the induced 2-lattice decomposition of $\Sigma$. Let $(\Psi_{\F_\Sigma},\Phi_{\F_\Sigma})\colon\Pi_2(\Sigma,\Sigma^1,\Sigma^0) \to \Gc$ be the 2D parallel transport 2-functor of $\F_\Sigma$;  Thm. \ref{d2dh}. We define the 2D holonomy of $\F$ along $\Sigma$, with initial point $v$, as:
$$ {\rm Hol}_v(\F,\Sigma,L)
   =\left ({\rm Hol}_v^2(\F,\Sigma,L),{\rm Hol}_v^1(\F,\Sigma,L) \right)=\Big(\Psi_{\F_\Sigma}(\iv(\Sigma,L)),\Phi_{\F_\Sigma}(\dv(\Sigma,L)) \Big) \in E \times G.
$$
\end{definition}
In the conditions of {Def.} \ref{2dholatd}, note that:
\begin{equation}\label{fake-flatness-preserved}
\dG\left ({\rm Hol}_v^2(\F,\Sigma,L)\right) ={\rm Hol}_v^1(\F,\Sigma,L).
\end{equation}
This is because, by \eqref{di} and the fact that $(\Psi_{\F_\Sigma},\Phi_{\F_\Sigma})\colon\Pi_2(\Sigma,\Sigma^1,\Sigma^0) \to \Gc$ is a crossed module map:
\begin{align*}
\dG\left ({\rm Hol}_v^2(\F,\Sigma,L)\right) &=\dG(\Psi_{\F_\Sigma}(\iv(\Sigma,L))=\Phi_{\F_\Sigma}\big (\d( \iv(\Sigma,L))\big)\\
 &=\Phi_{\F_\Sigma}(\dv(\Sigma,L))={\rm Hol}_v^1(\F,\Sigma,L).
\end{align*}

\begin{remark}[Dependence of {2D holonomy on} base points]
The 2D holonomy ${\rm Hol}_v(\F,\Sigma,L)$ of a {fake-flat} 2-gauge configuration along a cellularly embedded 2-disk depends  on the
choice of {a} base point {$v\in \bound(\Sigma) \cap M^0$.} However, the dependence is mild. Cf. Rem. \ref{iDv}.
Choose any quantised path $v' \ra{\gamma} v $, in the boundary of the disk $\Sigma$, from the new base point $v'$ to the initial base point $v$. Let $[\gamma]$ be the corresponding element of $\pi_1(\Sigma^1,\Sigma^0)\cong\fg{L^0,L^1}$.
Then, since $(\Psi_{\F_\Sigma},\Phi_{\F_\Sigma})\colon \Pi_2(\Sigma,\Sigma^1,\Sigma^0) \to \Gc$ is a crossed module map:
\begin{equation*}
\begin{split}
{\rm {Hol}}_{v'}^2(\F,\Sigma,L)&=\Psi_{\F_\Sigma}\big(\ivp(\Sigma,L)\big)\\
&=\Psi_{\F_\Sigma}\big([\gamma] \trr \iv(\Sigma,L)\big)\\
&=\Phi_{\F_\Sigma}([\gamma])\trr \Psi_{\F_\Sigma} \big(\iv(\Sigma,L)\big).
\end{split}
\end{equation*}
Hence:
\begin{equation}\label{basepoint}
{\rm {Hol}}_{v'}^2(\F,\Sigma,L)=\Phi_{\F_\Sigma}([\gamma])\trr {\rm {Hol}}_{v}^2(\F,\Sigma,L).
\end{equation}
\end{remark}

\subsubsection{The 2-sphere case}\label{2scat}

  We  resume the notation and ideas of \ref{dc}. Let $*=(0,0,0)$ be the base-point of $S^2=\bound (D^3)$. {Let $M$ be a topological manifold. Cf. Def. \ref{relcw},  let} $L$ be a 2-lattice decomposition of $(M,\Sigma)$, where  {$\Sigma\subset M$} is oriented and homeomorphic to the 2-sphere $S^2$. Choose a base point $v \in  \Sigma\cap M^0$. By elementary algebraic topology, any two orientation-preserving homeomorphisms $f\colon S^2 \to \Sigma$ preserving base-points are pointed homotopic.

  Since $\pi_2(\Sigma^1,v)\cong \{0\}$, the final bits of the homotopy  exact sequence of the pointed pair $(\Sigma,\Sigma^1)$, namely $\{0\}\cong\pi_2(\Sigma^1,v) \to \pi_2(\Sigma,v) \ra{i} \pi_2(\Sigma,\Sigma^1,v) \ra{\partial} \pi_1(\Sigma^1,v)$, yield a monomorphism {$i\colon \pi_2(\Sigma,v) \to \pi_2(\Sigma,\Sigma^1,v)$, thus an isomorphism $i\colon \pi_2(\Sigma,v)\to \ker(\partial)$.} Hence $\pi_2(\Sigma,v)$ can be seen as included in {the set of morphisms  of the groupoid} $\pi_2(\Sigma,\Sigma^1,\Sigma^0)$.

\begin{definition}[Notation: $\ivb(\Sigma)$]\label{defivb} Let $S^2$ carry the orientation arising from its embedding into $\mathbb{R}^3$. Let $\Sigma$ be an oriented manifold homeomorphic to $S^2$. Choose a base point $v\in \Sigma$. Choose an orientation preserving homeomorphism $f\colon (S^2,*) \to (\Sigma,v)$. Let $1$ be the  positive generator of $\pi_2(S^2,*)\cong \Z$.   We define $\ivb(\Sigma)\in \pi_2(\Sigma,v)\subset \pi_2(\Sigma,\Sigma^1,\Sigma^0)$,  to be  $\ivb(\Sigma)=f_*(1)$,
where $f_*\colon \pi_2(S^2,*) \to \pi_2(\Sigma,v)$ is the induced map on homotopy. Note that (cf. Def. \ref{de:fgog}),  {it hence follows that $ \d\big(\ivb(\Sigma)\big)=\emptyset_v,$} where $\d$ is the boundary map in the crossed module of groupoids $\Pi_2(\Sigma,\Sigma^1,\Sigma^0)={(\d\colon \pi_2(\Sigma,\Sigma^1,\Sigma^0) \to \pi_1(\Sigma^1,\Sigma^0)})$.
\end{definition}
In what follows, we will frequently not distinguish $\ivb(\Sigma)\in \pi_2(\Sigma,v)$ from $i(\ivb(\Sigma)) \in \pi_2(\Sigma,\Sigma^1,v)$.

\begin{remark}\label{difbase}
 Let $v$ and $v'$ be different points in the 0-skeleton $\Sigma^0=\Sigma\cap M^0$ of $\Sigma$. Let {$[\g]$} be any path in {$\Sigma^1$} connecting $v'$ to $v$, considered up to homotopy relative to the end-points. Then {$ \ivbp(\Sigma)=[\gamma] \trr \ivb(\Sigma)$}.
\end{remark}

\begin{definition}[2D holonomy ${\rm Hol}_v(\F,\Sigma,L)$ of a fake-flat 2-gauge configuration $\F$ along $\Sigma\cong S^2$, with initial point $v$]\label{2scc} Let $\Sigma$ be an oriented manifold homeomorphic to the 2-sphere. Let $L$ be a {2-lattice} decomposition of $(M,\Sigma)$. Let {$v \in \Sigma\cap M^0$}. Let $\F$ be a fake-flat 2-gauge configuration in {$(M,L)$}. Let $\F_\Sigma$ be the restriction of $\F$ to the induced 2-lattice decomposition of  $\Sigma$. Let $(\Psi_{\F_\Sigma},\Phi_{\F_\Sigma})\colon\Pi_2(\Sigma,\Sigma^1,\Sigma^0) \to \Gc$ be the   2D parallel transport 2-functor of $\F_\Sigma$; see Thm. \ref{d2dh}. We define the 2D holonomy of $\F$ along $\Sigma$, {with initial point $v$,} as:
$${\rm Hol}_v^2(\F,\Sigma,L)=\Psi_{\F_\Sigma}\big(i(\ivb(\Sigma,L))\big)\in E.
$$
\end{definition}

\begin{remark}\label{inker} Continuing  Def. \ref{2scc},
 note that since $(\Psi_{\F_\Sigma},\Phi_{\F_\Sigma})\colon\Pi_2(\Sigma,\Sigma^1,\Sigma^0) \to \Gc$ is a crossed module map:
\begin{align*}
\dG \big({\rm Hol}_v^2(\F,\Sigma,L)\big)& =\dG \circ \Psi_{\F_\Sigma} \big (i(\ivb(\Sigma))\big)\\
& =(\Phi_{\F_\Sigma} \circ \d)\big(i( \ivb(\Sigma))\big)\\
&=\Phi_{\F_\Sigma}(\emptyset_v)={1_G}.
\end{align*}
So it always holds that $ {\rm Hol}_v^2(\F,\Sigma,L) \in \ker(\partial_\Gc) \subset E,$ if $\Sigma\cong S^2.$ This is not the case for the 2-disk; cf. \eqref{fake-flatness-preserved}.
\end{remark}

\begin{lemma}[Dependence of 2D holonomy along 2-spheres on base points and orientations]\label{main3} We resume the notation of Def. \ref{2scc}. Let {$v,v' \in \bound(\Sigma)\cap M^0$} be two base points.
Let $\g=t_1^{\theta_1}\dots t_n^{\theta_n}$  be a quantised path in $\Sigma^1$, from $v'$ to $v$; {see Def. \ref{de:qp}}. Recall $g_\g=g_{t_1}^{\theta_1}\dots g_{t_n}^{\theta_n}=\Phi_{\F^1}([\gamma])$; {see} Def. \ref{n1}. We then have:
 $$ {\rm Hol}_{v'}^2(\F,\Sigma,L)=g_{\gamma} \trr {\rm Hol}_{v}^2(\F,\Sigma,L) .$$
Furthermore, if $\Sigma^*$ is $\Sigma$ with the opposite orientation, then:
 $${\rm Hol}_{v}^2(\F,\Sigma^*,L)=\left ({\rm Hol}_{v}^2(\F,\Sigma,L) \right)^{-1}.$$
\end{lemma}
\begin{proof}
Let $(\Psi_{\F_\Sigma},\Phi_{\F_\Sigma})\colon\Pi_2(\Sigma,\Sigma^1,\Sigma^0) \to \Gc=(\dG\colon E \to G,\trr)$ be (Thm. \ref{d2dh}) the crossed module map (i.e. the discrete parallel transport 2-functor) yielded by  the restriction $\F_\Sigma$ of $\F$ to $\Sigma$. Then:
\begin{align*}
 {\rm Hol}_{v'}^2(\F,\Sigma,L)&=\Psi_{\F_\Sigma}(i\big(\ivbp(\Sigma)\big))\\
 &={\Psi_{\F_\Sigma}\big([\gamma]\trr i\big( \ivb(\Sigma)\big)\big), \textrm{ by Rem. \ref{difbase}}}\\
 &=\Phi_{\F_\Sigma}([\gamma]) \trr \Psi_{\F_\Sigma}(i\big( \ivb(\Sigma)\big))\\
 &=g_{\gamma} \trr {\rm Hol}_{v}^2(\F,\Sigma,L).
\end{align*}
Let $\Sigma^*$ be $\Sigma$ with the opposite orientation. Then $i\big(\ivb(\Sigma^*)\big)=\big(i\big( \ivb(\Sigma)\big))^{-1}$. Hence:
$${\rm Hol}_{v}^2(\F,\Sigma^*,L)=\Psi_{\F_\Sigma}(i\big(\ivb(\Sigma^*)\big))=\Psi_{\F_\Sigma}(i\big(\ivb(\Sigma)^{-1}\big))=\left ({\rm Hol}_{v}^2(\F,\Sigma,L) \right)^{-1}.$$
\end{proof}

\subsection{Combinatorial definition of {2D holonomy along 2-disks} and 2-spheres}\label{cd}

We now prepare a combinatorial description of the 2D holonomy of a fake-flat 2-gauge configuration along {cellularly} embedded 2-disks and 2-spheres. Some algebraic topology preliminaries are yet still needed.
\subsubsection{Algebraic topology preliminaries for the {2-disk} case}\label{more-prelim0}
 {Let $\Sigma$ be an oriented manifold homeomorphic to the 2-disk {$D^2=[0,1]^2$.}} Hence we have an orientation of $\bound (\Sigma)\cong S^1$ as well. Let $L=(L^0,L^1,L^2)$ be a 2-lattice decomposition (Def. \ref{relcw}) of $(\Sigma,\bound (\Sigma))\cong (D^2,S^1)$.  Choose  $v \in \bound (\Sigma)$, to be a 0-cell of $L$.   It will look more or less like the pattern in   Fig. \ref{pattern}. {(Here and in other diagrams later, we put oriented circles inside the plaquettes in order to indicate the orientation of their attaching maps; this is redundant as orientations can be inferred from the form of their quantised boundary.)}

\begin{figure}[H]
\centerline{\relabelbox
\epsfysize 5.3cm
\epsfbox{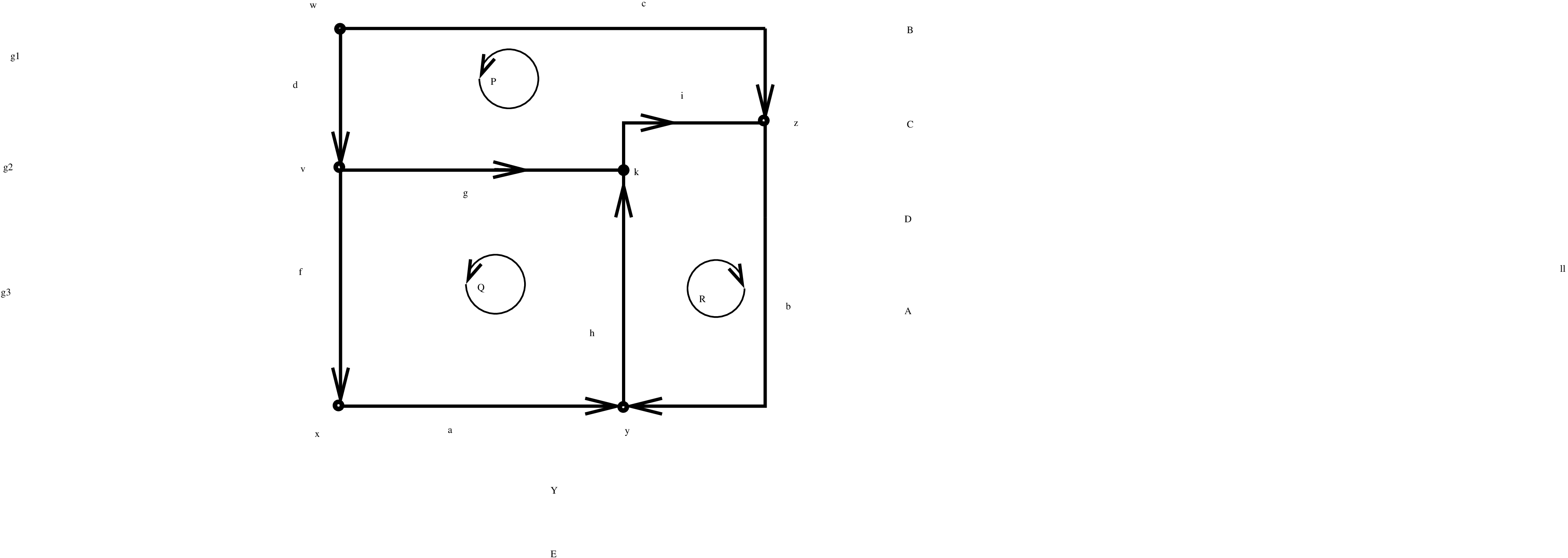}
\relabel{a}{$t_1$}
\relabel{b}{$t_2$}
\relabel{c}{$t_3$}
\relabel{d}{$t_4$}
\relabel{f}{$t_5$}
\relabel{g}{$t_6$}
\relabel{h}{$t_7$}
\relabel{i}{$t_8$}
\relabel{x}{$x$}
\relabel{y}{$x_{P_1}$}
\relabel{z}{$x_{P_3}$}
\relabel{w}{$w$}
\relabel{v}{$v$}
\relabel{k}{$x_{P_2}$}
\relabel{P}{$\small{P_3}$}
\relabel{Q}{$\small{P_1}$}
\relabel{R}{$\small{P_2}$}
\relabel{A}{${\dvQ(\Sigma,L)=t_5 t_1t_2^{-1} t_3^{-1}t_4}$}
\relabel{B}{$\dL^Q(P_3)=t_3^{-1} t_4 t_6 t_8$}
\relabel{C}{$\dL^Q(P_1)=t_7 t_6^{-1}t_5 t_1$}
\relabel{D}{$\dL^Q(P_2)=t_8t_2 t_7$}
\relabel{g1}{$ \gamma_1=t_5 t_1$}
\relabel{g2}{$\gamma_2=t_6$}
\relabel{g3}{$\gamma_3=t_6 t_8 $}
\relabel{E}{${\iv(\Sigma,L)=\big ([\gamma_1] \trr \iL(P_1)\big)\, \big ([\gamma_2] \trr \iL(P_2)^{-1}\big)\, \big ([\gamma_3] \trr \iL(P_3)\big)}$}
\relabel{Y}{${\dvQ(\Sigma,L)\cong\big(\gamma_1 \dL^Q(P_1) \gamma_1^{-1}\big ) \,\,\big(\gamma_2 \dL^Q(P_2)^{-1} \gamma_2^{-1}\big )\,\,\big(\gamma_3 \dL^Q(P_3) \gamma_3^{-1}\big ) }$}
\endrelabelbox}
\caption{A 2-lattice decomposition of $(\Sigma,\bound(\Sigma))\cong(D^2,S^1)$. As shown, the attaching maps of the plaquettes $P_1$ and $P_3$ are oriented counterclockwise, whereas $P_2$ attaches clockwise. The base point of $P_i$ is $x_{P_i}.$ {The quantised boundaries $\dL^Q(P_i)$ of the plaquettes $P_i,i=1,2,3$ are also shown {Def. \ref{qbp}}. The remaining information in the figure will be explained in Def. \ref{pb} and Ex. \ref{patt}.}\label{pattern}}
\end{figure}

Recall that the definition of $\dv(\Sigma,L)$ and $\iv(\Sigma,L)$, which are given in Def. \ref{iD}.
\begin{remark}\label{unicity}
 The homotopy
exact sequence of the pointed pair $(\Sigma,\Sigma^1,v)$ gives
an exact sequence:
\begin{equation}\label{es}
\{0\}\cong\pi_2(\Sigma,v) \to \pi_2(\Sigma,\Sigma^1,v) \ra{\d} \pi_1(\Sigma^1,v) \to
    \pi_1(\Sigma,v)\cong \{1\}.
\end{equation}(see \cite[pp 344]{hatcher}). Therefore $\d\colon \pi_2(\Sigma,\Sigma^1,v) \to \pi_1(\Sigma^1,v)$ is an isomorphism. Hence,
 if $e \in \pi_2(\Sigma,\Sigma^1,v)$:
\begin{align}\label{unicity-form}
 \d(e)=\dv(\Sigma,L) \Longleftrightarrow e=\iv(\Sigma,L).
\end{align}
\end{remark}

\begin{definition}[Quantised boundary $\dvQ(\Sigma,L)$ {of a 2-disk $\Sigma$}]\label{pb}
Choose a base-point $v \in \bound (\Sigma)$, to be a 0-cell. Now go around $\bound (\Sigma)$, following its orientation, starting at $v$ until you go back to $v$. Along the way we pass by the geometric 1-cells $t_1,t_2,\dots,t_n$ of $\bound (\Sigma)$, in that order. Put {$\theta_i=1$} if the characteristic map $\phi_{t_i}^1 \colon [0,1] \to \bound (\Sigma)$ of $t_i$ is oriented positively, and {$\theta_i=-1$} otherwise.
Cf. Fig \ref{pattern}, the quantised boundary $\dvQ(\Sigma,L)$ of $\Sigma$  is
the following quantised path (cf. Defs. \ref{de:fgog}, \ref{de:qp}) in $\bound (\Sigma)$, from $v$ to $v$:
$$
\dvQ(D^2,L)=t_1^{\theta_1} \dots t_n^{\theta_n}.
$$
\end{definition}
\noindent {By passing to morphisms in $\pi_1(\Sigma^1,\Sigma^0)\cong\fg{\Sigma^0,\Sigma^1}$, we hence have $[\dvQ(\Sigma,L)]=\dv(\Sigma,L)$; Def. \ref{de:fgog}, \ref{iD}.}

If we allow the cancellation of consecutive pairs of a 1-cell
and its inverse, thus considering quantised paths up to equivalence {(Def. \ref{de:fgog},)} we can express -- however not uniquely --
$[\dvQ(\Sigma,L)]$ as a product of quantised boundaries (cf. {Def. \ref{qbp}} and {Thm.} \ref{pr:freee}) of plaquettes (or their
inverses), each of which is in addition conjugated by a
(possibly trivial) quantised path in $\Sigma^1$, connecting the base-point $v$ of $\bound (\Sigma)$ with the base-point
of each plaquette.
More precisely:

\begin{lemma}\label{main1} Let $L$ be a relative 2-lattice decomposition of $(\Sigma,\bound (\Sigma))$. {Choose a base point $v \in \bound (\Sigma)$,} to be a 0-cell. {There exists a positive integer $N$, plaquettes $P_i,i =1,\dots, N$
in $L^2$}
(plaquettes can be repeated), integers $\theta_i\in \{\pm 1\}$
(where $i=1,\dots, N$), as well as quantised paths $\gamma_1,\dots,\gamma_N$ in
$\Sigma^1$, connecting $v$ to the base point of each plaquette $P_i$, such that the following equivalence between quantised paths holds:
\begin{equation}\label{main1for}
{\dvQ(\Sigma,L)} \; \cong\; \left (\gamma_1
  \dL^Q(P_1)^{\theta_1}\gamma_1^{-1}\right)\,\,\left(\gamma_2
  \dL^Q(P_2)^{\theta_2}\gamma_2^{-1}\right )  \dots \left(\gamma_N
  \dL^Q(P_N)^{\theta_N}\gamma_N^{-1}\right).
\end{equation}
Here $\cong$ is the equivalence relation on quantised paths in Def. \ref{de:fgog}. Hence we can pass from the left-hand-side of \eqref{main1for} to the right-hand-side by sucessfully inserting or removing pairs $t^{\pm 1} t^{\mp 1}$ where $t$ is a 1-cell of $\Sigma$.
\end{lemma}
\begin{remark}\label{refn}Note that, by Rem. \ref{unicity},  equation \eqref{main1for} holds if, and only if, in $\pi_2(\Sigma,\Sigma^1,\Sigma^0)$ {(cf. Def. \ref{iD}, \ref{incP}):}
$$\iv(\Sigma,L)=([\gamma_1] \trr \iL(P_1))^{\theta_1} ([\gamma_2] \trr \iL(P_2))^{\theta_2} \dots
     ([\gamma_N] \trr \iL(P_N))^{\theta_N}.$$
   {This is because by \eqref{unicity-form}, equation above holds if, and only if, in $\pi_1(\Sigma^1,\Sigma^0)$ we have}:
  \begin{multline*}
  \d \big([\gamma_1] \trr \iL(P_1)^{\theta_1}\,\,[\gamma_2] \trr \iL(P_2)^{\theta_2}\,\, \dots\,\, {[\gamma_N] \trr \iL(P_N)^{\theta_N} }\big) \\=\big([\gamma_1] \dL(P_1)^{\theta_1} [\gamma_1^{-1}]\big ) \,\,\big([\gamma_2] \dL(P_2)^{\theta_2} [\gamma_2^{-1}]\big )\,\,\dots \big([\gamma_N] \dL(P_N)^{\theta_N} [\gamma_N^{-1}]\big )=\dv(\Sigma,L)={\d \big ( \iv(\Sigma,L)\big).}
\end{multline*}
\end{remark}

\begin{example}\label{patt}
{In Fig. \ref{pattern}, we can put $N=3$, $\gamma_1=t_5 t_1$, $\gamma_2=t_6$ and $\gamma_3=t_6 t_8$. Also put $\theta_1=1$, $\theta_2=-1$ and $\theta_3=1$. Then, as indicated in Fig. \ref{pattern}:}
\begin{equation}\label{qw}{\dvQ(\Sigma,L)\cong \big(\gamma_1 \dL^Q(P_1)^{\theta_1} \gamma_1^{-1}\big ) \,\,\big(\gamma_2 \dL^Q(P_2)^{\theta_2} \gamma_2^{-1}\big )\,\,\big(\gamma_3 \dL^Q(P_3)^{\theta_3} \gamma_3^{-1}\big ).}\end{equation}
{This follows from a simple calculation, which we recommend the reader to do.} {From \eqref{qw} it follows that:}
$${\iv(\Sigma,L)=\big([\gamma_1] \trr \iL(P_1)^{\theta_1} \big ) \,\,\big(\gamma_2 \trr \iL(P_2)^{\theta_2} \gamma_2^{-1}\big )\,\,\big(\gamma_3 \iL(P_3)^{\theta_3}\big ).}$$
\end{example}
\begin{proof}{\bf (Lemma \ref{main1})}
Consider the fundamental crossed modules $\Pi_2(\Sigma,\Sigma^1,v)\subset \Pi_2(\Sigma,\Sigma^1,\Sigma^0)$, and the elements $\iv(\Sigma,L)$ and $\dv(\Sigma,L)$ of $\Pi_2(\Sigma,\Sigma^1,\Sigma^0)$; see Rem. \ref{iD}. Recall $\d\big(\iv(\Sigma,L)\big)=\dv(\Sigma,L)$ in $\pi_1(\Sigma^1,\Sigma^0)$.

{Cf. Rem. \ref{notdelta} and Thm. \ref{wt}. We know that $\Pi_2(\Sigma,\Sigma^1,\Sigma^0)$ is a free-crossed module on the map {$\dL\colon L^2 \to \pi_1(\Sigma^1,\Sigma^0)$, where $\pi_1(\Sigma^1,\Sigma^0)$ is the free groupoid on the 1-cells; see Thm. \ref{pr:freee}}. Let us  apply Rem. \ref{genpi2}. Hence there exists a positive integer $N$, such that we can  choose plaquettes $P_{i},i=1,2,\dots, N$, quantised paths  $\gamma_i,i=1,2,\dots, N$,  from the base-point
$v$ of $\Sigma$ to the base point $x_{P_i}$ of $P_i,$ and integers {$\theta_i\in \{\pm 1\},i=1,2,\dots, N$,} such that, in $\pi_2(\Sigma,\Sigma^1,\Sigma^0)$:}
$$
\iv(\Sigma,L)=([\gamma_1] \trr \iL(P_1))^{\theta_1} ([\gamma_2] \trr \iL(P_2))^{\theta_2}
    \dots ([\gamma_N] \trr \iL(P_N))^{\theta_N}.
$$
By using the first Peiffer Law in Def. \ref{cm}, and Rem. \ref{notdelta} and \ref{incP}, it follows that in $\pi_1(\Sigma^1,\Sigma^0)\cong \fg{L^0,L^1}$, and where $[\,\, ]$ means equivalence class of quantised paths (Def. \ref{de:fgog}):
\begin{align*}
[\dvQ(\Sigma,L)]&=\dv(\Sigma,L)\\
&=\d\big(\iv(\Sigma,L)\big)\\
&=\d \big ( ([\gamma_1] \trr \iL(P_1))^{\theta_1}\,\, ([\gamma_2] \trr \iL(P_2))^{\theta_2}
    \,\, \dots\,\, ([\gamma_N]\trr \iL(P_N))^{\theta_N}\big)\\
&=[\gamma_1] \d(\iL(P_1))^{\theta_1}[\gamma_1^{-1}]\,\,
 [\gamma_2] \d(\iL(P_2))^{\theta_2}[\gamma_2^{-1}]
 \dots [\gamma_N] \d(\iL(P_N))^{\theta_N}[\gamma_N^{-1}]\\&=
 [\gamma_1] \dL(P_1)^{\theta_1}[\gamma_1^{-1}]\,\,
 [\gamma_2] \dL(P_2)^{\theta_2}[\gamma_2^{-1}]\,\,
 \dots\,\, [\gamma_N] \dL(P_N)^{\theta_N}[\gamma_N^{-1}]\\
 &=\big[
 \gamma_1 \dL^Q(P_1)^{\theta_1}\gamma_1^{-1}\,\,
 \gamma_2 \dL^Q(P_2)^{\theta_2}\gamma_2^{-1}
 \dots \gamma_N \dL^Q(P_N)^{\theta_N}\gamma_N^{-1}].
\end{align*}
Hence  we can go from  $\dvQ(\Sigma,L)$ to the quantised path
 $\gamma_1 \dL^Q(P_1)^{\theta_1}\gamma_1^{-1}\,\,
 \gamma_2 \dL^Q(P_2)^{\theta_2}\gamma_2^{-1}
 \dots \gamma_N \dL^Q(P_N)^{\theta_N}\gamma_N^{-1}$ in a finite number of steps by inserting, or removing pairs  $t^{\pm 1} t^{\mp 1}$, where $t$ is any 1-cell of $\Sigma$.
\end{proof}

\begin{remark}\label{import-remark}
The choice of a positive integer $N$ and of an assignment:
 \begin{equation}\label{assignment}
\begin{split}
 i &\mapsto P_i, \textrm{ a plaquete},\\
  i &\mapsto \gamma_i, \textrm{ a quantised path from } v \textrm{ to the base point } x_{P_i} \textrm{ of the plaquette } P_i, \\
  i &\mapsto \theta_i\in \{\pm 1\},
  \end{split}
\end{equation}
{where $  i \in \{1, \dots , N\},$} such that we have an equivalence of quantised paths: \begin{equation} \label{refernow}
 {\dvQ(\Sigma,L)}\cong\left (\gamma_1
 \dL^Q(P_1)^{\theta_1}\gamma_1^{-1}\right)\,\,\left(\gamma_2
 \dL^Q(P_2)^{\theta_2}\gamma_2^{-1}\right )  \dots \left(\gamma_N
 \dL^Q(P_N)^{\theta_N}\gamma_N^{-1}\right)
 \end{equation}
-- equivalently (cf. Rem. \ref{unicity}) such that, {in $\pi_2(\Sigma,\Sigma^1,v)$}:
\begin{equation}\label{refernow2}
\iv(\Sigma,L)=([\gamma_1] \trr \iL(P_1))^{\theta_1} ([\gamma_2] \trr \iL(P_2))^{\theta_2} \dots
     ([\gamma_N] \trr \iL(P_N))^{\theta_N}
\end{equation}
or {equivalently} such that, {in $\pi_1(\Sigma^1,v)$:}
\begin{equation}
\dv(\Sigma,L)=\big([\gamma_1] \dL(P_1)^{\theta_1} [\gamma_1^{-1}]\big ) \,\,\big([\gamma_2] \dL(P_2)^{\theta_2} [\gamma_2^{-1}]\big )\,\,\dots \big([\gamma_N] \dL(P_N)^{\theta_N} [\gamma_N^{-1}]\big ),
\end{equation}
-- is far from being unique.
\end{remark}

\subsubsection{A combinatorial description  of the  2D holonomy  along embebbed {2-disks}}\label{dc-0}
Cf. \cite{PorterTuraev}. Let  $(M,L)$ be a 2-lattice. Suppose that $\Sigma$ is homeomorphic to the 2-disk $D^2$ and that $L$ is a decomposition of the triple $(M,\Sigma,\bound (\Sigma))$; {Def. \ref{relcw}}. Fix an orientation on $\Sigma$.  Choose a base-point $v\in \bound (\Sigma)\cap M^0$. Consider a fake-flat 2-gauge configuration $\F=(\F^2,\F^1)$ in $(M,L)$.
Let $L_\Sigma$ be the induced 2-lattice decomposition of $(\Sigma,\bound (\Sigma))\cong (D^2,S^1)$. Let $\dvQ(\Sigma,L)$ be the quantised boundary of $\Sigma$; {Def. \ref{pb}.}

\medskip

\begin{minipage}{0.93\textwidth}{\it Cf. \ref{more-prelim0}, choose a positive integer $N$ and
plaquettes $P_i \in {L_\Sigma^2},i =1,\dots, N$ (plaquettes might be repeated), integers $\theta_i\in \{\pm 1\}$ (where $i=1,\dots ,N$), and quantised paths $\gamma_1,\dots,\gamma_N$ in $\Sigma^1$, from $v$ to the base point $x_{P_i}$ of  $P_i$, such that we have an equivalence of quantised paths {(cf. Def. \ref{de:fgog}, \ref{de:qp}, \ref{qbp})}:}
\begin{equation}\label{to-reduce}
\dvQ(\Sigma,L)\cong \left (\gamma_1
\dL^Q(P_1)^{\theta_1}\gamma_1^{-1}\right)\,\,\left(\gamma_2
\dL^Q(P_2)^{\theta_2}\gamma_2^{-1}\right )  \dots \left(\gamma_N
\dL^Q(P_N)^{\theta_N}\gamma_N^{-1}\right).
\end{equation}
\end{minipage}
 \begin{minipage}{0.095\textwidth}\vskip-0cm \quad (*) \end{minipage}

\medskip
\noindent{Recall that by Rem. \ref{unicity} and \ref{import-remark},  {equation \eqref{to-reduce}} is the same as saying that in $\pi_2(\Sigma,\Sigma^1,\Sigma^0)$ (cf. Rem. \ref{iD}):}
$$\iL(D^2,L)=([\gamma_1] \trr \iL(P_1))^{\theta_1} ([\gamma_2] \trr \iL(P_2))^{\theta_2} \dots
     ([\gamma_N] \trr \iL(P_N))^{\theta_N}.$$

Fix a crossed module $\Gc  = (\dG:E\rightarrow G , \trr)$.
Recall the construction of the 2D holonomy ${\rm Hol}_v(\F,\Sigma,L)$ of a fake-flat 2-gauge configuration (cf. Def. \ref{fakeflatconf}) along $\Sigma \cong D^2$, with initial point $v$; Def. \ref{2dholatd}.
\begin{Theorem}\label{2dholD2}
Suppose that {$L$ is a 2-lattice decomposition} of $(M,\Sigma,\bound (\Sigma))$, {where $\Sigma \cong D^2$ is a surface cellularly embebbed in $M$}. {Let $v \in \bound{(\Sigma)} \cap M^0$.} Let $i\in\{1,\dots,N\}\mapsto (P_i,\theta_i,\gamma_i)$ be as in $(*)$. If 
{$$\F=({\F^1}\colon t \in  L^1 \to g_t \in G\,\,,\,\,{\F^2}\colon P \in L^2 \mapsto e_P \in E) $$}
is a fake-flat 2-gauge configuration {on $(M,L)$, then}  ${\rm Hol}_v(\F,\Sigma,L)\in E\times G$ can be calculated as:
\begin{equation}\label{combfor}
{\rm Hol}_v(\F,\Sigma,L) = \left ({\rm Hol}_v^2(\F,\Sigma,L),{\rm Hol}_v^1(\F,\Sigma,L) \right)=
\left ( \htwo  , \;  g_{\dvQ(\Sigma,L)} \right).
\end{equation}
Cf. Def. \ref{n1}, here $g_{\gamma_i}$ is the product of the elements of $G$ assigned to
the 1-cells of the quantised path $\gamma_i$ (or their inverses), and the same for $g_{\dvQ(\Sigma,L)}$.  In other words  $g_{\gamma_i}=\Phi_{\F^1}([\gamma_i])$ and {$g_{\dvQ(\Sigma,L)}=\Phi_{\F^1}([\dv^Q(\Sigma,L)])$.}
\end{Theorem}
{As an immediate consequence, we have the promised independence theorem of the 2D holonomy of a fake-flat 2-gauge configuration along a pointed 2-disk on the way we combine the group elements associated to the edges and plaquetes, as long as the rules of the assigment (*) are followed.}
\begin{Theorem}\label{independence1}
{Fix $v\in \bound(\Sigma)\cap M^0$. The evaluation ${\rm Hol}_v(\F,\Sigma,L)$ in \eqref{combfor} does not depend on the  assignment $i\mapsto (P_i,\theta_i,\gamma_i)$ as in (*) chosen; see Rem. \ref{import-remark}. Moreover  \eqref{fake-flatness-preserved} holds, i.e. $\dG\left ({\rm Hol}_v^2(\F,\Sigma,L)\right) ={\rm Hol}_v^1(\F,\Sigma,L).$}

\end{Theorem}
\begin{example} Let us consider a fake-flat {2-gauge} configuration on the 2-lattice decomposition of $D^2$ in Fig. \ref{pattern}. In the figure below, we put $g_i=g_{t_i}=\F^1(t_i)\in G$ and $e_i=e_{P_i}=\F^2(P_i) \in E$.  For $(*)$ to hold, we can put (see Ex. \ref{patt}): $\gamma_1=t_5 t_1$, $\gamma_2=t_6$ and $\gamma_3=t_6 t_8$; and $\theta_1=1$, $\theta_2=-1$ and $\theta_3=1$. Hence:
$${\rm Hol}_v(\F,D^2,L)=\left ({\rm Hol}_v^2(\F,D^2,L),{\rm Hol}_v^1(\F,D^2,L) \right)={\Big ( g_{\g_1} \trr e_1 \,\,\, g_{\g_2} \trr e_2^{-1}\,\,\, g_{\g_3} \trr e_3\,\,,\,\, g_{5} g_1 g_2^{-1} g_3^{-1} g_4 \Big).} $$
\begin{figure}[H]
\centerline{\relabelbox
\epsfysize 4.5cm
\epsfbox{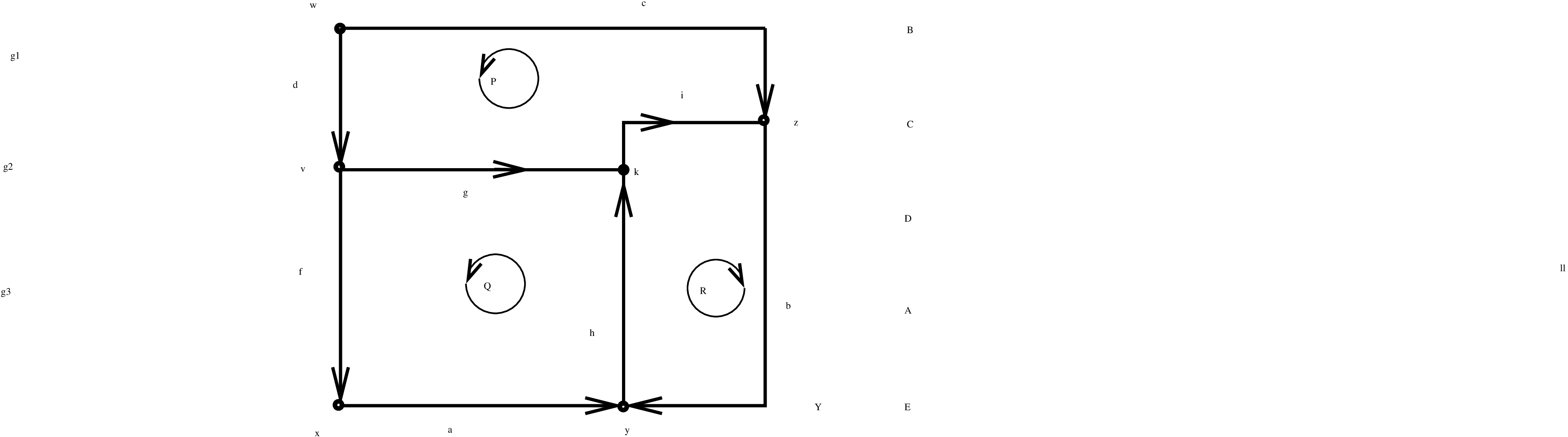}
\relabel{a}{$g_1$}
\relabel{b}{$g_2$}
\relabel{c}{$g_3$}
\relabel{d}{$g_4$}
\relabel{f}{$g_5$}
\relabel{g}{$g_6$}
\relabel{h}{$g_7$}
\relabel{i}{$g_8$}
\relabel{y}{$x_{P_1}$}
\relabel{z}{$x_{P_3}$}
\relabel{v}{$v$}
\relabel{k}{$x_{P_2}$}
\relabel{P}{${e_3}$}
\relabel{Q}{${e_1}$}
\relabel{R}{${e_2}$}
\relabel{B}{$\d(e_3)=g_3^{-1} g_4 g_6 g_8$}
\relabel{C}{$\d(e_1)=g_7 g_6^{-1}g_5 g_1$}
\relabel{D}{$\d(e_2)=g_8g_2 g_7$}
\relabel{E}{${{\rm Hol}^2_v(\F, D^2,L)=g_{\g_1} \trr e_1 \,\,\, g_{\g_2} \trr e_2^{-1}\,\,\, g_{\g_3} \trr e_3}$}
\relabel{A}{${\rm Hol}^1_v(\F, D^2,L)=g_{5} g_1 g_2^{-1} g_3^{-1} g_4$}
\relabel{g1}{$ g_{\gamma_1}=g_5 g_1$}
\relabel{g2}{$g_{\gamma_2}=g_6$}
\relabel{g3}{$g_{\gamma_3}=g_6 g_8 $}
\endrelabelbox}
\caption{ A\,\, fake-flat configuration $\F$ on $(D^2,L)$ and its 2-dimensional holonomy. \label{pattern2}}
\end{figure}
\end{example}

\begin{proof}{ \bf (Theorem \ref{2dholD2})} We use  Rem. \ref{refn}, \ref{unicity}. Given an assignment {$i\in \{1,\dots,N\} \mapsto (P_i,\gamma_i,\theta_i)$ as in $(*)$, then}
\begin{align*}
&[\gamma_1] \trr \iL(P_1)^{\theta_1}\,\,[\gamma_2]\trr \iL(P_2)^{\theta_2} \dots {[\gamma_N]\trr  \iL(P_N)^{\theta_N}}=\iv(\Sigma,L),\end{align*}
{where this holds in  $\pi_2(\Sigma,\Sigma^1,\Sigma^0)$, and, now in $\pi_1(\Sigma^1,\Sigma^0)$:}
\begin{align*} \d \big([\gamma_1] &\trr \iL(P_1)^{\theta_1}\,\,[\gamma_2]\trr \iL(P_2)^{\theta_2} \dots {[\gamma_N]\trr  \iL(P_N)^{\theta_N}\big)}=
  \\&=\big([\gamma_1] \dL(P_1)^{\theta_1} [\gamma_1^{-1}]\big ) \,\,\big([\gamma_2] \dL(P_2)^{\theta_2} [\gamma_2^{-1}]\big )\,\,\dots \big([\gamma_N] \dL(P_N)^{\theta_N} [\gamma_N^{-1}]\big )=\dv(\Sigma,L).
\end{align*}
{The restriction} $\F_\Sigma$ of  $\F$ to $\Sigma$ gives a crossed module map $(\Psi_{\F_\Sigma},\Phi_{\F_\Sigma})\colon\Pi_2(\Sigma,\Sigma^1,\Sigma^0) \to \Gc$. 
 Thus:
\begin{align*}
{\rm Hol}_v^2(\F,\Sigma,L)&\doteq\Psi_{\F_\Sigma}(\iv(\Sigma,L))\\
&= \Psi_{\F_\Sigma}\Big( [\gamma_1] \trr \iL(P_1)^{\theta_1}\,\,[\gamma_2]\trr \iL(P_2)^{\theta_2} \dots {[\gamma_N]\trr  \iL(P_N)^{\theta_N} }\Big)\\
                                         &=   \Phi_{\F_\Sigma}( [\gamma_1] )\trr  \Psi_{\F_\Sigma}(\iL(P_1)^{\theta_1})\,\,\Phi_{\F_\Sigma}([\gamma_2])\trr \Psi_{\F_\Sigma}(\iL(P_2)^{\theta_2}) \dots {\Phi_{\F_\Sigma}([\gamma_N])\trr \Psi_{\F_\Sigma}( \iL(P_N)^{\theta_N}} )\\
 &\doteq g_{\gamma_1} \trr
e_{P_1}^{\theta_1}\,\,g_{\gamma_2} \trr  e_{P_2}^{\theta_2}\,\,
\dots\,\, {g_{\gamma_N} \trr  e_{P_N}^{\theta_N}},
\end{align*}
where, N.B., {firstly $\trr$ is in $\Pi_2(\Sigma,\Sigma^1,\Sigma^0)$ and then it is in  $\Gc$.} Analogously:
\begin{align*}
 {\rm Hol}_v^1(\F,\Sigma,L) &=\Psi_{\F_\Sigma}(\dv(\Sigma,L))=g_{\dvQ(\Sigma,L)}.
\end{align*}
\end{proof}

\subsubsection{Algebraic topology preliminaries for the 2-sphere case}\label{more-prelim}
 Let $(\Sigma,L)$ be a 2-lattice. Suppose that $\Sigma$ is {an oriented surface homeomorphic to the 2-sphere.} Choose a base-point $v\in \Sigma$, to be a 0-cell. Recall that the homotopy  exact sequence of $(\Sigma,\Sigma^1)$ yields:
\begin{equation}\label{ess2}
\pi_2(\Sigma^1,v)\cong \{0\} \to \pi_2(\Sigma,v)\cong \Z \ra{i} \pi_2(\Sigma,\Sigma^1,v) \ra{\d} \pi_1(\Sigma^1,v) \to  \pi_1(\Sigma,v)\cong \{1\}
\end{equation}
(which is exact), and {hence  we have an isomorphism $i\colon \pi_2(\Sigma,v)\cong \Z \to \ker(\d) \subset \pi_2(\Sigma,\Sigma^1,v)$.}

{Cf. Def.  \ref{defivb}, \ref{2scc}, Rem. \ref{incP}} and the construction in \ref{dc-0}. In order to find  a combinatorial definition of ${\rm Hol}_v^2(\F,\Sigma,L)$, we must express $i(\ivb(\Sigma))\in  \pi_2(\Sigma,\Sigma^1,v)$ (see Def. \ref{defivb}) as a product of terms like $[\gamma]\trr \iL(P)$, where $\gamma$ is a quantised path from $v$ to the base point $x_P$ of the plaquette $P$, and $[\g]$ is the element in $\pi_1(\Sigma^1,\Sigma^0)$ it yields; Def. \ref{de:qp} and \ref{de:fgog}.
A crucial point in \ref{more-prelim0} is  that the kernel of the boundary map $\d\colon \pi_2(N,N^1,v) \to \pi_1(N^1,v)$ is trivial if $N\cong D^2$, whereas if $\Sigma \cong S^2$ we have $\ker(\d)=\pi_2(\Sigma,v)\cong \mathbb{Z}$; see  \eqref{ess2}.
{In order to identify $i(\ivb(\Sigma))\in  \pi_2(\Sigma,\Sigma^1,v)$,} we will need to use the Hurewicz map between homotopy and homology long exact sequences of pairs; see \cite[pp 374]{hatcher}.  Our main tool is the fact that the Hurewicz map $h\colon \pi_2(\Sigma,v) \to H_2(\Sigma)$ is an isomorphism, {if $\Sigma\cong S^2$.}

Before continuing, let us define, given a plaquette $P \in L^2$, with characteristic map  {$\phi_P^2\colon D^2 \to \overline{c^2_P} \subset \Sigma$}: 
\begin{equation}\label{defP}
{{\rm sgn}(P)=\begin{cases} 1, &\textrm{ if  the restriction of  } \phi_P^2\colon D^2 \to \Sigma \textrm{ to } {{\rm int}(D^2)} \textrm{ is orientation preserving};\\ -1,  &\textrm{ if  the restriction of  } \phi_P^2\colon D^2 \to {\Sigma \textrm{ to } {\rm int}(D^2)} \textrm{ is orientation reversing}.\end{cases}
}\end{equation}
\begin{lemma}\label{mainS2}
There exist a positive integer $N$, and an assignment $i\in \{1,\dots,N\} \to (P_i,\gamma_i,\theta_i)$, where $P_i\in L^2$, $\gamma_i$ is a quantised path from $v$ to the base point $x_{P_i}$ of the plaquette $P_i$, and $\theta_i \in  \{\pm 1\}$, such that:
\begin{enumerate}
\item we have an equivalence of quantised paths (Def. \ref{de:qp} and \ref{de:fgog}):
$$\emptyset_v \cong\; \left (\gamma_1
  \dL^Q(P_1)^{\theta_1}\gamma_1^{-1}\right)\,\,\left(\gamma_2
  \dL^Q(P_2)^{\theta_2}\gamma_2^{-1}\right )  \dots \left(\gamma_N
  \dL^Q(P_N)^{\theta_N}\gamma_N^{-1}\right),$$
\item given any $P \in L^2$, $\displaystyle\sum_{i \in \{1,\dots, N\} \textrm{ such that } P_i=P} \theta_i={\rm sgn} (P)$.
\end{enumerate}

Moreover, given a positive integer $N$ and an assignment $i\in \{1,\dots,N\} \to (P_i,\gamma_i,\theta_i)$, where $P_i\in L^2$, $\gamma_i$ is a quantised path from $v$ to the base point $x_{P_i}$ of the plaquette $P_i$, and $\theta_i \in  \{\pm 1\}$,
then:
$$[\gamma_1] \trr \iL(P_1)^{\theta_1}\,\,[\gamma_2]\trr \iL(P_2)^{\theta_2} \dots {[\gamma_N]\trr  \iL(P_N)^{\theta_N}}=i(\ivb(\Sigma)),$$
{(cf. Def. \ref{defivb}  and Rem. \ref{incP}) happens if, and only if,} conditions 1. and 2. of the lemma are satisfied.
\end{lemma}

\begin{proof}
Cf. Rem. \ref{notdelta}, Thm. \ref{wt} and Rem. \ref{genpi2}.
Since $\Pi_2(\Sigma,\Sigma^1,\Sigma^0)$ is the free crossed module on  $\dL\colon L^2 \to \pi_1(\Sigma^1,\Sigma^0)$, there exist a positive integer $N$, and an assignment $i\in \{1,\dots,N\} \mapsto (P_i,\gamma_i,\theta_i)$, where $P_i\in L^2$, $\gamma_i$ is a quantised path from $v$ to the base point $x_{P_i}$ of the plaquette $P_i$, and $\theta_i \in  \{\pm 1\}$, such that:
$$([\gamma_1] \trr \iL(P_1))^{\theta_1} ([\gamma_2] \trr \iL(P_2))^{\theta_2}
    \dots ([\gamma_N] \trr \iL(P_N))^{\theta_N}=i\big(\ivb(\Sigma,L)\big).
$$
We claim that $i\in \{1,\dots,N\} \mapsto (P_i,\gamma_i,\theta_i)$ satisfies items 1 and 2, {of the statement of the lemma}.

\noindent {\bf Item 1.} Since $\d( i\big(\ivb(\Sigma,L)\big))=\emptyset_v$, combining with:
\begin{multline*}\d \big(([\gamma_1] \trr \iL(P_1))^{\theta_1} ([\gamma_2] \trr \iL(P_2))^{\theta_2}
    \dots ([\gamma_N] \trr \iL(P_N))^{\theta_N}\big)\\=[\gamma_1] \dL(P_1)^{\theta_1} [\gamma_1]^{-1}\,\, [\gamma_2] \dL(P_2))^{\theta_2}[\gamma_2]^{-1} \dots
[\gamma_N]  \dL(P_N))^{\theta_N}[\gamma_N]^{-1},
\end{multline*}
yields that   $i\in \{1,\dots,N\} \mapsto (P_i,\gamma_i,\theta_i)$ satisfies item 1.

\noindent {\bf Item 2.}
Consider the map of exact sequences obtained from the Hurewicz map between homotopy and homology
long exact sequences, \cite[pp 374]{hatcher}:
\begin{equation}\label{mapexact}
\xymatrix{ &\pi_2(\Sigma^1,v) \cong \{0\}\ar[r]\ar[d]^\cong & \mathbb{Z} \cong \pi_2{(\Sigma,v)}\ar[d]^h_\cong\ar@{^{(}->}[r]^i &\pi_2(\Sigma,\Sigma^1,v)\ar[d]^{h_r} \ar[r]^{\d} &\pi_1(\Sigma^1,v)\ar[d]^h \ar[r]^p& \pi_1(\Sigma,v)\cong \{1\}\ar[d]^\cong\\
              & H_2(\Sigma^1) \cong \{0\}\ar[r] &\mathbb{Z} \cong H_2{(\Sigma)}\ar@{^{(}->}[r]^i &H_2(\Sigma,\Sigma^1) \ar[r]^{\d} &H_1(\Sigma^1) \ar[r]^p& H_1(\Sigma)\cong \{0\}
}.\end{equation}
 The group $H_2(\Sigma,\Sigma^1)$ is the free abelian group on the relative homology classes $a(P)=h_r(\iL(P))$ determined by the plaquettes $P\in L^2$; \cite[pp 137]{hatcher}. Moreover $h_r(\gamma \trr \iL(P))=a(P)$, for
each plaquette $P$ and each path $\gamma\in \pi_1(\Sigma^1,\Sigma^0)$ connecting $v$ to the
base-point of $P$. We let {$K=h(\ivb(\Sigma))\in H_2(\Sigma)$.} Then $K$ is the positive generator of $H_2(\Sigma)\cong \Z$. We now need the following claim:

\noindent{\bf Claim} $\displaystyle i(K)=\sum_{P \in L^2} {\rm sgn}(P)a(P)\in H_2(\Sigma,\Sigma^1)$.\\
\noindent {\bf Proof of the claim (sketch)} This is seemingly well know, however we could not find a proof anywhere. Since  $H_2(\Sigma,\Sigma^1)$ is the free abelian group on the  $a(P)=h_r(\iL(P))$, we know that there exist unique $\lambda_P\in \Z$, where $P \in L^2$, such that
$i(K)=\sum_{P \in L^2} \lambda_P\,\, a(P)$. We need to prove that $\lambda_P={\rm sgn}(P)$, for each $P \in L^2$.

 Let $P \in L^2$. Let $x$ be an interior point of open cell $c_P^2$ corresponding to $P$. We have a commutative diagram {\eqref{abx}}, where all morphisms are induced by inclusion. {The vertical line $p_x$} corresponds to the identity map $\Z \to \Z$, in the sense that it sends the positive generator $K\in H_2(\Sigma)\cong \Z$ to the positive generator $K_x$ of $H_2(\Sigma,\Sigma\setminus\{x\})\cong \Z$. (Note $\Sigma$ is oriented, so it makes sense to speak about those positive generators.)
\begin{equation}\label{abx}\xymatrix{&\Z\cong H_2(\Sigma)= H_2(\Sigma,\emptyset) \ar@{^{(}->}[r]^>>>>>>i\ar[d]^{p_x} & H_2(\Sigma,\Sigma^1)\ar[dl]^{p'_x}\\ & \Z \cong H_2(\Sigma,\Sigma\setminus\{x\}) }.
\end{equation}
Then $p'_x(a(P))={\rm sgn}(P)K_x$, by definition of ${\rm sgn}(P)$. Whereas if $Q\in L^2$ is another plaquette, then since the corresponding {closed 2-cell $\overline{e^2_Q}$} is contained in $\Sigma\setminus \{x\}$ it holds $p'_x(a(Q))=0$. Hence ${\rm sgn}(P)=\lambda_P$. {\bf QED.}
\medskip

Having proven the claim, {item 2 of the statement of the lemma} now follows from the fact that:
\begin{align*}
 \smash{\sum_{P \in L^2} {\rm sgn}(P)a(P)}={i(K)}&=i\circ h \big (\ivb(\Sigma)\big)\\
                                  &=h_r\circ i(\ivb(\Sigma))\\
                                  &=h_r\left(([\gamma_1] \trr \iL(P_1))^{\theta_1} ([\gamma_2] \trr \iL(P_2))^{\theta_2}
    \dots ([\gamma_N] \trr \iL(P_N))^{\theta_N}\right)\\
    &=\sum_{i=1}^N \theta_i a(P_i).
\end{align*}
We note that  $H_2(\Sigma,\Sigma^1)$ is the free abelian group on the  $a(P)$, where $P \in L^2.$

To finalise, let us be given an assignment $i\in \{1,\dots,N\} \mapsto (P_i,\gamma_i,\theta_i)$, where $P_i\in L^2$, $\gamma_i$ is a quantised path from $v$ to  $x_{P_i}$, and $\theta_i \in  \{\pm 1\}$.
Let $A=[\gamma_1] \trr \iL(P_1)^{\theta_1}\,\,[\gamma_2]\trr \iL(P_2)^{\theta_2} \dots [\gamma_m]\trr  \iL(P_m)^{\theta_m}\in \pi_2(\Sigma,\Sigma^1,v){\in \pi_2(\Sigma,\Sigma^1,\Sigma^0)}$. From \eqref{mapexact} it follows:
\begin{align*}
A=i(\ivb(\Sigma))&\Leftrightarrow \d(A)=\emptyset_v \textrm{ and } h_r(A)=(i\circ h)(\ivb(\Sigma)) \\
            & \Leftrightarrow\textrm{conditions of item 1 and item 2 each are satisfied}.
\end{align*}
\end{proof}

\subsubsection{A combinatorial description  of the  2D holonomy  along embedded 2-spheres}\label{sc}

Let $\Sigma$ be an oriented {$S^2$} embedded in a manifold $M$. {Let $L$ be a 2-lattice decomposition of $(M,\Sigma)$. {Let $v\in \Sigma\cap M^0$.} Let
$\F=({\F^1}\colon t \in  L^1 \to g_t \in G\,\,,\,\,{\F^2}\colon P \in L^2 \mapsto e_P \in E) $
be a fake-flat 2-gauge configuration in $(M,L)$.
Recall the definition of the 2D holonomy of  $\F$ along $\Sigma$ as ${\rm Hol}_v^2(\F,\Sigma,L)=\Psi_{\F_\Sigma}\big(i(\ivb(\Sigma,L)\big)\in \ker(\partial) \subset E$; Def. \ref{2scc}.}

\begin{Theorem}\label{main-3}
 Let $L$ be a 2-lattice decomposition of $(M,\Sigma)$. Let $L_\Sigma$ be the induced {2-lattice} decomposition of $\Sigma$. Let $\F$ be a fake-flat gauge 2-configuration  in $(M,L)$. Let $\F_\Sigma$ be its restriction to $L_\Sigma$. Recall Lem. \ref{mainS2}.

\medskip
\begin{minipage}{0.93\textwidth}{\it Find a positive integer $N$, and, for each $i \in \{1,\dots,N\}$,  a plaquette $P_i\in {L_\Sigma^2}$ (plaquettes might be repeated), an integer $\theta_i\in \{\pm 1\}$, and a quantised path $\gamma_i$, connecting $v$ to the base point $x_{P_i}$ of  $P_i$, such that:}
\begin{enumerate}
\item we have an equivalence of quantised paths (Def. \ref{de:qp} and \ref{de:fgog}):
$$\emptyset_v \cong\; \left (\gamma_1
  \dL^Q(P_1)^{\theta_1}\gamma_1^{-1}\right)\,\,\left(\gamma_2
  \dL^Q(P_2)^{\theta_2}\gamma_2^{-1}\right )  \dots \left(\gamma_N
  \dL^Q(P_N)^{\theta_N}\gamma_N^{-1}\right),$$
\item given any $P \in L^2$, $\displaystyle\sum_{i \in \{1,\dots, N\} \textrm{ such that } P_i=P} \theta_i={\rm sgn} (P)$; cf. \eqref{defP}.
\end{enumerate}
\end{minipage}
\begin{minipage}{0.095\textwidth}\vskip1.9cm \quad (**) \end{minipage}

\medskip
\noindent Then we have the following combinatorial formula for ${\rm Hol}_v^2(\F,\Sigma,L)\in \ker(\partial)\subset E$
\begin{equation}\label{2holcomb}
{\rm Hol}_v^2(\F,\Sigma,L)= g_{\gamma_1} \trr
e_{P_1}^{\theta_1}\,\,g_{\gamma_2} \trr  e_{P_2}^{\theta_2}\,\,
\dots\,\, {g_{\gamma_N} \trr  e_{P_N}^{\theta_N}.}
\end{equation}
 {Here $g_{\gamma_i}$ is the product of the elements of $G$ assigned to
the 1-cells of the quantised path $\gamma_i$ (or their inverses).}
And in particular, fixing $v \in \Sigma$, to be a 0-cell of $L$, the expression \eqref{2holcomb} for ${\rm Hol}_v^2(\F,\Sigma,L)$ does not depend on the assignment $i\in \{1,\dots,N\} \mapsto (P_i,\gamma_i,\theta_i)$ as in $(**)$ chosen. {Moreover, Lem. \ref{main3} holds.}
\end{Theorem}
\begin{proof}
By Lem. \ref{mainS2}, {condition (**)}  is equivalent to:
$[\gamma_1] \trr \iL(P_1)^{\theta_1}\,\,[\gamma_2]\trr \iL(P_2)^{\theta_2} \dots {[\gamma_N]\trr  \iL(P_N)^{\theta_N}}=i\big(\ivb(\Sigma)\big)$.
Hence:
\begin{align*}
{\rm Hol}_v^2(\F,\Sigma,L)&= \Psi_{\F_\Sigma}\big(i(\ivb(\Sigma,L))\big)= \Psi_{\F_\Sigma}\Big( [\gamma_1] \trr \iL(P_1)^{\theta_1}\,\,[\gamma_2]\trr \iL(P_2)^{\theta_2} \dots {[\gamma_N]\trr  \iL(P_N)^{\theta_N}} \Big)\\
                                         &=   \Phi_{\F_\Sigma}( [\gamma_1] )\trr  \Psi_{\F_\Sigma}(\iL(P_1)^{\theta_1})\,\,\Phi_{\F_\Sigma}([\gamma_2])\trr \Psi_{\F_\Sigma}(\iL(P_2)^{\theta_2}) \dots {\Phi_{\F_\Sigma}([\gamma_N])\trr \Psi_{\F_\Sigma}( \iL(P_N)^{\theta_N}} )\\
 &\doteq g_{\gamma_1} \trr
e_{P_1}^{\theta_1}\,\,g_{\gamma_2} \trr  e_{P_2}^{\theta_2}\,\,
\dots\,\, {g_{\gamma_N} \trr  e_{P_N}^{\theta_N}},
\end{align*}
since $(\Psi_{\F_\Sigma},\Phi_{\F_\Sigma})\colon\Pi_2(\Sigma,\Sigma^1,\Sigma^0) \to \Gc$ is a crossed module map.
\end{proof}

\begin{example}\label{L0}
 Consider the 2-lattice decomposition $L_0$ of the 2-sphere $S^2$ with a single 0-cell $v$ and a single 2-cell $P$, whose characteristic map $\phi_P^2\colon D^2 \to S^2$ is positively oriented; cf. Ex. \ref{LandL0}. The base point of $P$ is $v$. A  2-gauge configuration $\F$ is simply an element  $m_P \in E$, colouring its unique plaquette $P$. Fake-flateness imposes  {$m_P \in \ker(\partial).$} Let $\Sigma=S^2$, positively oriented. An assignment as in $(**)$ is such that $N=1$, $P_1=P$, $\gamma_1=\emptyset_v$ and $\theta_1 =1.$
Hence {${\rm Hol}_v^2(\F,\Sigma,L)=m_P$,} as it should.
\end{example}

\begin{example}
To facilitate {drawing} diagrams, let us now see the 2-sphere $S^2$ has being the square $D^2$, where we squash the upper edge and the lower edge to be single points (the north and south poles $v_N$ and $v_S$), and we identify the left and right boundary edges. We give $S^2$ the reverse orientation to the one induced by $[0,1]^2$.
Consider the 2-lattice decomposition $L$ of the 2-sphere, with two zero cells, at  $v_N$ and  $v_S$, and four one cells $s,t,u,v$, all connecting $v_S$ to $v_N$.
We have 2-cells $P,Q,R,S$, indicated in figure below. All plaquettes are based at the south pole. The characteristic map of each plaquette preserves orientation, so $\sgn(P),\sgn(Q),\sgn(R), \sgn(S)=1$. The quantised  boundary of each plaquette is  indicated in figure below.
\begin{figure}[H]
\centerline{\relabelbox
\epsfysize 4.5cm
\epsfbox{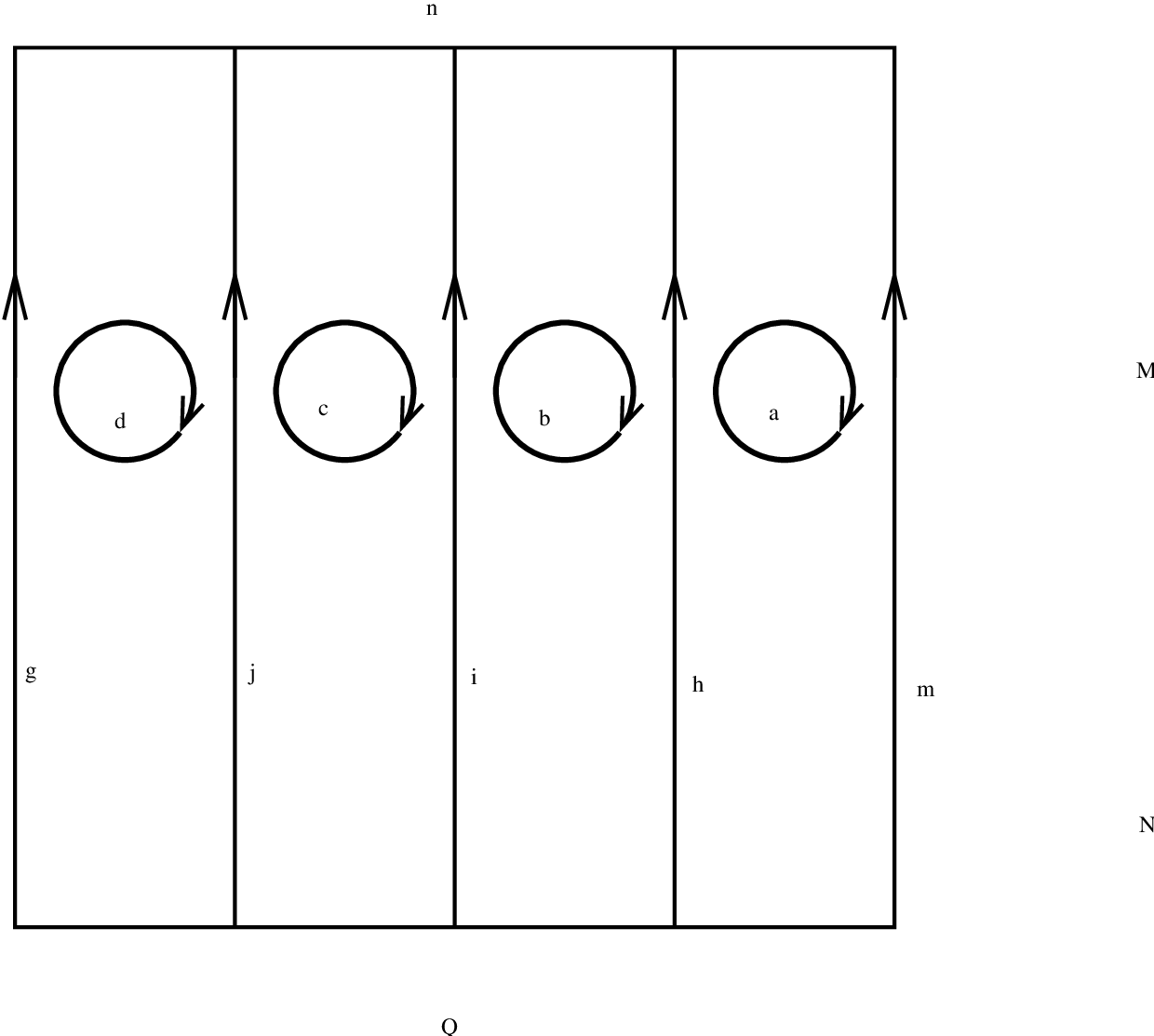}
\relabel{a}{$\small{P}$}
\relabel{b}{$\small{Q}$}
\relabel{c}{$\small{R}$}
\relabel{d}{$\small{S}$}
\relabel{n}{$\small{v_N}$}
\relabel{Q}{$\small{v_S}$}
\relabel{g}{$\small{s}$}
\relabel{m}{$\small{s}$}
\relabel{h}{$\small{t}$}
\relabel{i}{$\small{u}$}
\relabel{j}{$\small{v}$}
\relabel{M}{$\begin{matrix} \dL^Q(P)= ts^{-1}\\
                           \dL^Q(Q)=ut^{-1}\\
                           \dL^Q(R)=vu^{-1}\\
                           \dL^Q(S)=sv^{-1}\end{matrix}$}
\relabel{N}{$\dL^Q(S) \,\, \dL^Q(R)\,\, \dL^Q(Q) \,\, \dL^Q(P)\cong\emptyset_{v_S}$}
\endrelabelbox}
\end{figure}
\noindent{Let $\Sigma=S^2$, with the same orientation. Let $v=v_S$.} An assignment $i\mapsto (P_i,\gamma_i,\theta_i)$ (for $N=4$) satisfying $(**)$ can be $1 \mapsto (S,\emptyset_{v_s},1)$, $2 \mapsto (R,\emptyset_{v_s},1)$, $3 \mapsto (Q,\emptyset_{v_s},1)$ and $4 \mapsto (P,\emptyset_{v_s},1)$. Any cyclic permutation will also work.

A 2-gauge configuration is given by elements $g_s,g_t,g_u,g_v\in G$, colouring the edges $s,t,u,v$, and elements  $d,c,b,a \in E$ colouring $S,R,Q,P$, as indicated in the figure   below. Conditions for fake-flatness to hold {are also made explicit in figure {below}}.
Hence {${\rm Hol}^2_{v_S}(\F,\Sigma,L)=dcba\in \ker(\d) \subset E$.}
\begin{figure}[H]
\centerline{\relabelbox
\epsfysize 4cm
\epsfbox{globe2.eps}
\relabel{a}{$\small{a}$}
\relabel{b}{$\small{b}$}
\relabel{c}{$\small{c}$}
\relabel{d}{$\small{d}$}
\relabel{g}{$\small{g_s}$}
\relabel{m}{$\small{g_s}$}
\relabel{h}{$\small{g_t}$}
\relabel{i}{$\small{g_u}$}
\relabel{j}{$\small{g_v}$}
\relabel{M}{$\begin{cases} \d(a)= g_tg_v^{-1}\\
                           \d(b)=g_ug_t^{-1}\\
                           \d(c)=g_vg_u^{-1}\\
                           \d(d)=g_s g_v^{-1}\end{cases}$}
\relabel{N}{${{\rm Hol}^2_{v_S}(\F,D^2,L)=dcba}$}
\relabel{n}{$\small{v_N}$}
\relabel{Q}{$\small{v_S}$}
\endrelabelbox}\label{globe}
\end{figure}
\noindent By {Thm. \ref{main-3},} cyclic permutations of $i \mapsto (P_i,\gamma_i,\theta_i)$ {must} yield the same value for {${\rm Hol}^2_{v_S}(\F,\Sigma,L)$, {as (**) is still satisfied. The former can be directly proven:} note $\d(dcba)=1_G$,} thus $dcba$ is central, by the second Peiffer law of the definition of crossed modules (Def. \ref{cm}). Hence  $dcba=d^{-1} dcba d=cbad$.

\end{example}

\begin{example}\label{tet}
  Consider the standard tetrahedron $T\subset \mathbb{R}^3$ displayed   below. Hence the boundary $\Sigma$,  of $T$, with the induced orientation, is given by {the two triangles below} identified along their boundaries. We give $T$ a 2-lattice decomposition derived from the obvious triangulation of $T$. The quantised boundary of each plaquette is indicated in the figure below. {Note,} $P_1,P_2$ and $P_3$ are based in $v_0$, whereas $P_4$ is based in $v_1$.
 $$
 \centerline{\relabelbox
\epsfysize 4.5cm
\epsfbox{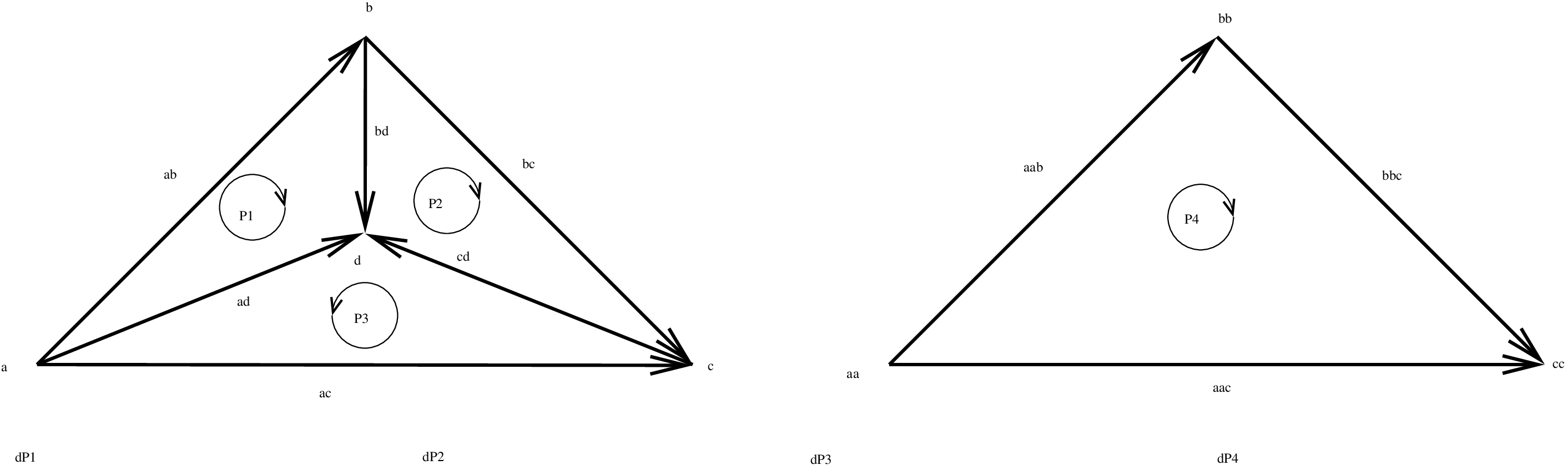}
\relabel{a}{$v_0$}
\relabel{b}{$v_1$}
\relabel{c}{$v_2$}
\relabel{d}{$v_3$}
\relabel{aa}{$v_0$}
\relabel{bb}{$v_1$}
\relabel{cc}{$v_2$}
\relabel{P1}{$P_1$}
\relabel{P2}{$P_4$}
\relabel{P3}{$P_2$}
\relabel{P4}{$P_3$}
\relabel{ab}{$t_{01}$}
\relabel{ad}{$t_{03}$}
\relabel{bc}{$t_{12}$}
\relabel{cd}{$t_{23}$}
\relabel{ac}{$t_{02}$}
\relabel{ab}{$t_{01}$}
\relabel{bd}{$t_{13}$}
\relabel{aab}{$t_{01}$}
\relabel{bbc}{$t_{12}$}
\relabel{aac}{$t_{02}$}
\relabel{dP1}{$\d_L^Q(P_1)=t_{01}t_{13}(t_{03})^{-1}$}
\relabel{dP2}{$\d_L^Q(P_4)=t_{12}t_{23}(t_{13})^{-1}$}
\relabel{dP3}{$\d_L^Q(P_2)=t_{02}t_{23}(t_{03})^{-1}$}
\relabel{dP4}{$\d_L^Q(P_3)=t_{01}t_{12}(t_{02})^{-1}$}
\endrelabelbox}
$$
Let consider 2D holonomy along $\Sigma$ based at $v_0$. An assignment satisfying (**) is such that $N=4$, and:
\begin{align*}
&1\mapsto (P_1,\emptyset_{v_0},1), && 2\mapsto (P_2,\emptyset_{v_0},-1),\\
&3\mapsto (P_3,\emptyset_{v_0},-1), && 4\mapsto (P_4,t_{01},1).
\end{align*}
The general form of a  fake-flat 2-gauge configuration $\F$ of $(T,L)$ is presented below:
 $$
 \centerline{\relabelbox
\epsfysize 4.5cm
\epsfbox{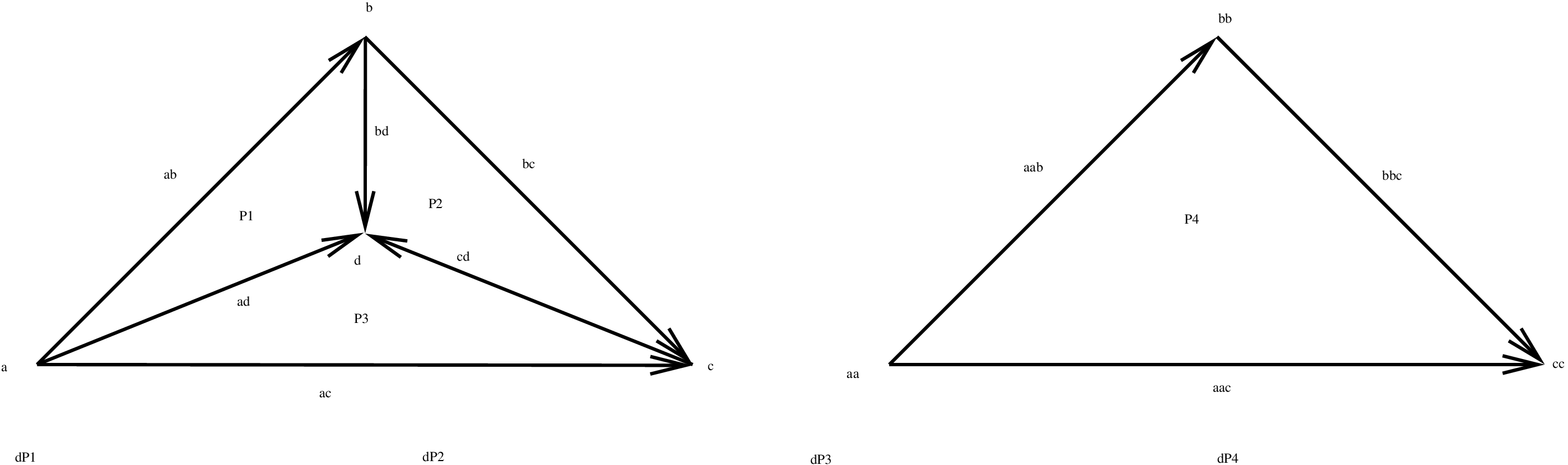}
\relabel{P1}{$e_1$}
\relabel{P2}{$e_4$}
\relabel{P3}{$e_2$}
\relabel{P4}{$e_3$}
\relabel{ab}{$g_{01}$}
\relabel{ad}{$g_{03}$}
\relabel{bc}{$g_{12}$}
\relabel{cd}{$g_{23}$}
\relabel{ac}{$g_{02}$}
\relabel{ab}{$g_{01}$}
\relabel{bd}{$g_{13}$}
\relabel{aab}{$g_{01}$}
\relabel{bbc}{$g_{12}$}
\relabel{aac}{$g_{02}$}
\relabel{dP1}{$\dG(e_1)=g_{01}g_{13}(g_{03})^{-1}$}
\relabel{dP2}{$\dG(e_4)=g_{12}g_{23}(g_{13})^{-1}$}
\relabel{dP3}{$\dG(e_2)=g_{02}g_{23}(g_{03})^{-1}$}
\relabel{dP4}{$\dG(e_3)=g_{01}g_{12}(g_{02})^{-1}$}
\endrelabelbox}
$$
Hence, by \eqref{2holcomb} it follows that: ${\rm Hol}_{v_0}^2(\F,\Sigma,L)=e_1\,\, e_2^{-1}\,\,  e_3^{-1}\,\, g_{12} \trr e_4$.
\end{example}

\subsection{{2-flat 2-gauge configurations}}\label{2flatn}

Let $(M,L)$ be a 2-lattice. Let $b\in L^3$. The corresponding closed 3-cell {(called a blob)} is {also} denoted by $b=\overline{c_b^3}$. From the definition of 2-lattices (Def. \ref{2-lattice}), the attaching map $\psi_b^3\colon S^2 \to M^2$ of $b$ is an embedding {and}  $\psi_b^3(S^2)=\bound(b)\cong S^2$ is a subcomplex of $M^2$, {called the boundary of the blob $b$.}  Orient $\bound(b)$ by using  $\psi_b^3\colon S^2 \to \bound(b)$.
 Cf. Rem. \ref{notdelta}, \ref{incP} and Def. \ref{defivb}, we have $\ivb\big (\bound(b) \big )= \dL(b)$, in {$\pi_2(M^2,v) \subset \pi_2(M^2,M^1,v)\subset \pi_2(M^2,M^1,M^0)$,} where $v=\psi_b^3(*)$ is the base-point of {$b=\overline{c_b^3}$.}

\begin{definition}[2-flat  2-gauge configuration]\label{2flatconf}
Let ${\Gc}=(\d\colon E \to G,\trr)$ be a crossed module.
Consider a fake-flat {2-gauge} configuration $\F$ {based on
$(M,L)$.}
The boundary $\bound(b)$ of each blob {$b\in L^3$} inherits a 2-lattice {decomposition}. {The} fake-flat 2-gauge configuration $\F$ is said to be 2-flat if for
every blob $b$, we have:
$$
{\rm Hol}^2_v(\F,\bound(b),L)=1_E, \textrm{ where } 1_E \textrm{ is the identity of } E.
$$
Recalling {(Def. \ref{fakeflatconf})} that $\Theta(M,L,\Gc)$ denotes the set of fake-flat 2-gauge configurations {in $(M,L)$}, the set of 2-flat {2-gauge} configurations is denoted  $\Theta_{\rm 2flat}(M,L,\Gc)$.

\end{definition}
{More generally, a fake-flat 2-gauge configuration $\F$ is said to be 2-flat along a cellularly embedded 2-sphere $\Sigma \subset  M$ if, for some $v \in \Sigma \cap M^0$, hence -- by Lem. \ref{main3} -- for all $v \in \Sigma\cap M^0$, it holds that ${\rm Hol}^2_v(\F,\Sigma,L)=1_E$.}

\begin{example}\label{nv} The fake-flat 2-gauge configuration $\vac$ from {the end of \S\ref{ff2gc}} {(the naive vacuum)} is 2-flat. \end{example}
\begin{example}The fake-flat 2-gauge configurations in  {Ex.} \ref{tet} is 2-flat if, and only if, $e_1\,\, e_2^{-1}\,\,  e_3^{-1}\,\, g_{12} \trr e_4 =1_E.$
\end{example}

{Let us provide an algebraic-topological interpretation of 2-flat 2-gauge configurations.}
 Let $\F$ be a {fake-flat 2-gauge configuration in $(M,L)$.} Cf. the construction of ${\rm Hol}^2_v(\F,\bound(b),L)$ in \ref{2scat}. Consider the {discrete 2D parallel transport 2-functor $(\Psi_\F,\Phi_\F)\colon \Pi_2(M^2,M^1,M^0) \to \Gc$ of $\F$; see Thm. \ref{d2dh}.} By the construction in {\ref{2scat} and \ref{sc}}, it holds that ${\rm Hol}^2_v(\F,\bound(b),L)=\Psi_\F(\dL(b))$.
Recall  that $\F \mapsto   (\Psi_\F,\Phi_\F)$ gives a one-to-one correspondence between fake-flat 2-gauge configurations and crossed module maps $\Pi_2(M^2,M^1,M^0) \to \Gc$.

The map of crossed modules  induced by the inclusion $(M^2,M^1,M^0) \to (M^3,M^1,M^0)$ is denoted by $p_2\colon \Pi_2(M^2,M^1,M^0) \to \Pi_2(M^3,M^1,M^0)$. In components $p_2=(p_2',\id)$, where  $p_2'\colon \pi_2(M^2,M^1,M^0)\to {\pi_2(M^3,M^1,M^0)}$ is a surjection and $\id$ is the identity   $\pi_1(M^1,M^0)\to \pi_1(M^1,M^0)$.

 {Cf. Rem. \ref{notdelta}.} Given any 3-cell $b$, note that {$p_2'(\dL(b))=1_{\pi_2(M^3,x_b)}$, the identity of $\pi_2(M^3,x_b)$,} where $x_b$ is the base-point of $b$. These are the only  relations we need {to} impose in order to pass from $\pi_2(M^2,M^1,M^0)$ to $\pi_2(M^3,M^1,M^0)\cong\pi_2(M,M^1,M^0)$; {what is meant by this is in Lem. \ref{desc}. This follows from} the long homotopy exact sequence of the triple $(M^3,M^2,M^1)$, applied to each choice of base point $x \in M^0$, namely:
$${\pi_3(M^3,M^2,x)\to \pi_2(M^2,M^1,x) \ra{p_2'} \pi_2(M^3,M^1,x) \to \pi_2(M^3,M^2,x) \cong \{0\},}$$ together with the fact that the group $\pi_3(M^3,M^2,x)$ is isomorphic to the free $\mathbb{Z}\big(\pi_1(M^1,x)\big)$-module on $L^3$; \cite[Lemma 4.38]{hatcher}.

Cf. the diagram below, a crossed module map $f\colon \Pi_2(M^2,M^1,M^0)\to \Gc$ is said to descend to  $\Pi_2(M,M^1,M^0)$ if there exists a (necessarily unique) crossed module map $f^\flat \colon \Pi_2(M^3,M^1,M^0) \to \Gc$ such that $f^\flat\circ p_2=f$.  $$\xymatrix{&\Pi_2(M^2,M^1,M^0) \ar[d]_{p_2}\ar[r]|>>>>>>f & \Gc \\ & \Pi_2(M^3,M^1,M^0)\ar[ur]|{f^\flat} } .$$
\begin{lemma}\label{desc}
 A crossed module map {$f=(f_2,f_1)\colon \Pi_2(M^2,M^1,M^0)\to \Gc$} descends to  $\Pi_2(M,M^1,M^0)$ if, and only if, for each blob {$b\in L^3$ we have $f_2(\dL(b))=1_E$.}
\end{lemma}
{Note that $\Pi_2(M,M^1,M^0) = \Pi_2(M^3,M^1,M^0)$, by the cellular approximation theorem.}
\begin{proof}
{As mentioned above, this follows from the long homotopy exact sequence of the triple $(M^3,M^2,M^1)$, applied to each choice of base point $x \in M^0$; details can be found in \cite{martins_crossed}, for the case of CW-complexes with a single base-point.}
Alternatively we can also use the higher-dimensional van Kampen theorem of Brown and
Higgins; see \cite{brown_2dvk,brown_higgins_colimits,brown_higgins_cubes,brown_hha} and \cite[\S 6]{brown_higgins_sivera}, stating that (under mild conditions) the fundamental crossed module functor preserves colimits. {Note that the conditions of 2-lattices (Def. \ref{2-lattice}) imply that for each $b\in L^3$, the corresponding closed 3-cell $\overline{c_b^3}=b$ is  a subcomplex of $M$, homeomorphic to $D^3$. {Moreover $b^2=\bound(b)$,  $b^1=\bound(b)^1$ and $b^0=\bound(b)^0$}. From \cite[\S 6.3]{brown_hha}, it  follows that the diagram {\eqref{po}} below is a pushout diagram in the
category of crossed modules of groupoids:}
\begin{equation}\label{po}
{\xymatrix{& {\displaystyle\bigsqcup_{b \in L^3}} \Pi_2(\bound(b),\bound(b)^1,\bound(b)^0)  \ar[rrr]^{p_1}\ar[d]_{i_1}&&& {\displaystyle\bigsqcup_{b \in L^3}}\Pi_2(b,b^1,b^0)\ar[d]^{i_2}\\
  &\Pi_2(M^2,M^1,M^0)\ar[rrr]^{p_2} &&&\Pi_2(M^3,M^1,M^0)   }}
\end{equation}
{In the diagram \eqref{po} above, all arrows are induced by inclusions. Also, $\pi_1(b^1,b^0)=\pi_1(\bound(b)^1,\bound(b)^0)$. Given $x \in b^0=\bound(b)^0$, we pass from $\pi_2(\bound(b),\bound(b)^1,x)$ to $\pi_2(b,b^1,x)$ by quotienting by the normal closure of $\ivx(\bound(b))\in \pi_2(\bound(b),x) \subset \pi_2(\bound(b),\bound(b)^1,x) $; we are using the notation of Def \ref{defivb}.   
By inspecting \eqref{po}, and applying the universal property of pushouts, it hence follows that a crossed module map $f\colon \Pi_2(M^2,M^1,M^0)\to \Gc$ descends to  $\Pi_2(M,M^1,M^0)$ if, and only if, for each $b \in L^3$, and for each $x \in b^0$, it holds that $f_2(\ivx(\bound(b)))=1_E$. {by Lem. \ref{main3}, in order for the latter to happen, it suffices to check that $f_2(\ivb(\bound(b)))=1_E$, if $v$ is the base-point of $b$,} which is the same as saying that $f_2(\dL(b))=1_E$.} 
\end{proof}

{Combining Lem. \ref{desc} with Thm. \ref{d2dh},  yields the following interpretation of fake-flat 2-gauge configurations.} 
\begin{Theorem}[2-flat 2D parallel transport 2-functors]\label{2-flat-conf-theorem}
 The bijection $\F \in  \Theta(M,L,\Gc)\mapsto (\Psi_\F,\Phi_\F)$ of Thm. \ref{d2dh} yields a bijection between 2-flat  configurations $\F\in \Theta_{2{\rm flat}}(M,L,\Gc)$ and crossed module maps $\Pi_2(M,M^1,M^0) \to \Gc$, from now on called 2-flat 2D parallel transport 2-functors. The correspondence sends $\F\in  \Theta_{2{\rm flat}}(M,L,\Gc)$ to the crossed module map  $(\Psi_\F^\flat,\Phi_\F^\flat)\colon \Pi_2(M,M^1,M^0) \to \Gc$ that $(\Psi_\F,\Phi_\F)$ descends to.
\end{Theorem}

\newcommand{\op}{operator}
\newcommand{\Tgo}{{\cal T}}   

\section{Gauge transformations}\label{GTRANS}
Throughout this section,  we fix a crossed module $\Gc=(\dG\colon E \to G,\trr)$    of groups \S \ref{crossed_modules_def} and a 2-lattice $(M,L)$  \S \ref{sec:2-lattices}. Recall that  $(L^0,L^1)$ has the structure of a directed graph $\s,\t\colon L^1 \to L^0$, see \S \ref{ss:POL}. We  denote the edges (1-cells) of $L$ as $x \ra{t} y$, where $x=\s(t)$ and $y=\t(t)$  are the source and target of $t$. (It may be that $x=y$.)
\subsection{The group {$\Tgo=\Tgo(M,L,{\Gc})$} of gauge \op s}\label{2gaugetrans}

If $G$ is a group and  $S$ is a set we put $G^S$ to denote the group $\prod_{s \in S} G$, with pointwise multiplication.

\newcommand{\sdp}{\rtimes_\bullet} 

\begin{definition}[The group $\Tgo(M,L,{\Gc})$ of gauge \op s]\label{2gt}
There is a left-action $\bullet$ of
${\cal V}(M,L,{\Gc}) = G^{L^0}$ on
 ${\cal E}(M,L,{\Gc}) = E^{L^1}$
by automorphisms.
Given $\eta \in  {\cal E}(M,L,{\Gc})$ and $u \in {\cal V}(M,L,{\Gc})$, the action $\bullet$ has the form:
\begin{equation}\label{bullet}
(u \bullet \eta)\left (\sigma(t) \ra{t} \tau{(t)} \right)
                  = u\big(\sigma(t)\big) \trr \big (\eta \left(\sigma(t) \ra{t} \tau{(t)} \right)\big),
\end{equation}
for each $\sigma(t) \ra{t} \tau{(t)} $ in $L^1$. (Note that $\trr$ denotes the underlying action of $G$ on $E$, which exists since $(\dG\colon E \to G, \trr)$ is a crossed module; Def. \ref{cm}.)
We define the group $\Tgo={\cal T}(M,L,{\Gc})$ of gauge \op s to be:
\begin{equation}\label{eq:defT}
\Tgo(M,L,{\Gc})
 \;  = \; {\cal E}(M,L,{\Gc}) \sdp  {\cal  V}(M,L,{\Gc})
 \;  = \; E^{L^1} \sdp G^{L^0} .
\end{equation}
Here $\sdp$ denotes semidirect product.
In particular we take:
\begin{equation}\label{semidef}(\eta,u)(\eta',u') \; = \; \big( \eta\, u \bullet \eta', u\,u'\big).\end{equation}
\end{definition}

\begin{example}
Recall the 2-lattice decomposition $L$ of $S^2$ from Fig. \ref{L1}. We have a unique edge and a unique vertex. Hence
{$ \Tgo(S^2, L,\Gc) = E \rtimes_{\trr} G.$} If we extend  $L$ to be a 2-lattice decomposition $L_\mathfrak{g}$ of $S^3$, by adding 3-cells (Ex. \ref{LandL0}), it also holds  {that $ \Tgo(S^3, L_\mathfrak{g},\Gc)= E \rtimes_{\trr} G$. This is because groups of gauge \op s on 2-lattices depend only on 1-skeletons.}
\end{example}

The group ${\cal T}(M,L,{\Gc})$ of gauge operators
acts on the set  $\Theta(M,L,{\Gc})$ of fake-flat 2-gauge configuration {on $(M,L)$}  in a way such that the 2D
holonomy is preserved; see \S \ref{gaugeaction}, below.  Moreover, this action restricts to an action of  ${\cal T}(M,L,{\Gc})$ on the set   $\Theta_{\rm 2flat}(M,L,{\Gc})$ of 2-flat configurations; Def. \ref{2flatconf}.
In order to  present the action, we now define a double
groupoid ${\cal D}({\Gc})$ out of the crossed module ${\Gc}$; see \cite[\S 6.6]{brown_higgins_sivera}, \cite{brown_higgins_cubes} and \cite{martins_picken}.

\subsection{The double groupoid ${\cal D}(\Gc)$}\label{adg}

The definition of a double groupoid appears e.g. in \cite[\S 6.1]{brown_higgins_sivera} and \cite{martins_picken,Zu1}. In this paper double groupoids are edge-symmetric and have a unique  object. We explain the definition of such double groupoids as we elaborate how  a crossed module $\Gc=(\dG\colon E \to G, \trr)$ of groups gives rise to one, denoted $\D(\Gc)$; \cite[\S 6.6]{brown_higgins_sivera}.

We have a unique object $*$, and sets $\D^1_H(\Gc)$ and  $\D^1_V(\Gc)$ of horizontal and vertical 1-squares in $\Gc$; \cite{martins_picken}. These sets of horizontal and vertical 1-squares in $\Gc$ {consist of}  diagrams of the form:
$$\left ( * \ra{X} * \right) \quad \textrm{ and } \quad \left (\,\,\begin{CD} &* \\
              &@A ZAA \\
              &*
  \end{CD}\right), \textrm{ where } X,Z \in G.  $$
Horizontal and vertical 1-squares  in $\Gc$ are  composed in the obvious way, here shown for horizontal 1-squares:
$${\big(
* \ra{\quad X \quad } *\big) \circ \big(* \ra{\quad Y \quad } *\big) =\big(*\ra{\quad XY \quad }* \big), \textrm{ where } X,Y \in G.}
$$  We therefore have horizontal and vertical groupoids, also denoted $\D^1_H(\Gc)$ and  $\D^1_V(\Gc)$, with a single object. The sets of morphisms are in one-to-one correspondence with $G$.  The groupoids $\D^1_H(\Gc)$ and  $\D^1_V(\Gc)$ are isomorphic.  An obvious isomorphism $\D^1_V(\Gc) \to \D^1_H(\Gc)$ is obtained by clockwise rotation.

We have a set $\D^2(\Gc)$ of {squares} in $\Gc$. This set consists of  diagrams $K$ of the form below:
\begin{equation} \label{square}K=\quad\begin{CD} &* @>W>> &*\\
              &@A ZAA \hskip-1.3cm \scriptstyle{e}  &@AA YA\\
              &* @>> X> &*
  \end{CD} \textrm{ \quad \quad \quad where } X,Y,Z,W \in G \textrm{ and } e \in E \textrm{ are such that } \d(e)=XYW^{-1}Z^{-1}.\end{equation}
\begin{remark}[{Squares in $\Gc$ and fake-flat 2-gauge configurations of $[0,1]^2 $}]\label{transport}{Elements $K \in \D^2(\Gc)$  hence can be seen as fake-flat 2-gauge
  configurations on a certain 2-lattice decomposition of
   $D^2=[0,1]^2$; see Ex. \ref{fake0}.}
   \end{remark}

 Several maps exist connecting $\D^1_H(\Gc)$, $\D^1_V(\Gc)$ and $\D^2(\Gc)$; see  below {($K$ is as in \eqref{square})}:
\begin{equation}\label{bounds}
d_l (K)=\left(\,\,\,\begin{CD} &* \\
              &@A ZAA \\
              &*
  \end{CD}\right), \quad d_r(K)=\left(\,\,\,\begin{CD} &* \\
              &@A YAA \\
              &*
  \end{CD}\right), \quad d_u(K) = \left( \begin{CD} &* @>W>> &*\end{CD}\right) \,\,\textrm{ and } \quad d_d(K)\quad =\quad \left(\begin{CD} &* @>X>> &*\end{CD}\right).
\end{equation}
\begin{equation}\label{ids}
\id_V(\begin{CD} &* @>X>> &*\end{CD})\quad=\quad\begin{CD} &* @>X>> &*\\
              &@A 1_G AA \hskip-1.3cm \scriptstyle{1_E}  &@AA 1_G A\\
              &* @>> X> &*\end{CD} \quad \quad \textrm{ and  } \quad  \id_H\left(\,\, \begin{CD} &* \\
              &@A YAA \\
              &* \end{CD} \right)\quad=\quad \begin{CD} &* @>1_G>> &*\\
              &@A YAA \hskip-1.3cm \scriptstyle{1_E}  &@AA YA\\
              &* @>> 1_G> &*
  \end{CD}\,\,\,\,\,. \end{equation}

 Horizontal and vertical compositions of 2-squares can be done when
  squares match on the relevant sides:
\begin{minipage}[v]{0.50\textwidth}
$$\hskip-0.4cm {\begin{CD} &* @>W>> &*\\
              &@A ZAA \hskip-1.3cm \scriptstyle{e}  &@AA YA\\
              &* @>> X> &*
  \end{CD}  \quad \quad {\begin{CD} &* @>W'>> &*\\
              &@A YAA \hskip-1.3cm \scriptstyle{e'}  &@AA Y'A\\
              &* @>> X'> &*
  \end{CD}}
  \quad   =  \quad  \begin{CD} &* @>WW'>> &*\\
              &@A ZAA \hskip-1.3cm \scriptstyle{(X \trr e') e} &@AA Y'A\\
              &* @>> XX'> &*
\end{CD}}$$
\end{minipage}
\begin{minipage}[v]{0.01\textwidth}
\quad and \end{minipage}
  \begin{minipage}[v]{0.47\textwidth}
 \begin{equation}\label{vertcomp}{
\begin{CD}  \begin{CD} &* @>W'>> &*\\
              &@A Z' AA \hskip-1.3cm \scriptstyle{e'}  &@AA Y'A\\
              &* @>> W> &*
 \end{CD} \\ \begin{CD} &* @>W>> &*\\
              &@A ZAA \hskip-1.3cm \scriptstyle{e}  &@AA YA\\
              &* @>> X> &*
  \end{CD}  \end{CD}\quad \quad  =  \quad \quad  \begin{CD} &* @>W'>> &*\\
              &@A ZZ' AA \hskip-1.3cm \scriptstyle{e Z \trr e'} &@AA YY'A\\
              &* @>> X> &*
  \end{CD}}
\end{equation}
\end{minipage}

\noindent These compositions are associative.
Hence the set $\D^2(\Gc)$ of squares in $\Gc$ is the set of morphisms of two categories, called horizontal and vertical categories. The correspondent sets of objects are the sets of vertical and horizontal squares in $\Gc$, respectively. Source and target maps are in \eqref{bounds}. Unit maps are in \eqref{ids}.

\begin{remark}[Interchange law in $\D(\Gc)$]\label{il}
{Horizontal and vertical
compositions  in $\D(\Gc)$ satisfy the interchange law, which says that the
  composition  indicated below does not depend on the order whereby it is done. } 
$$
\begin{CD}
 \begin{CD} &* @>W'>> &*\\
              &@A Z' AA \hskip-1.3cm \scriptstyle{f}  &@AA C A\\
              &* @>> W> &*
 \end{CD} \qquad  \begin{CD} &* @>W''>> &*\\
              &@A CAA \hskip-1.3cm \scriptstyle{f'}  &@AA Y''A\\
              &* @>> W'''> &*
 \end{CD} \\ \begin{CD} &* @>W>> &*\\
              &@A ZAA \hskip-1.3cm \scriptstyle{e}  &@AA BA\\
              &* @>> X> &*
  \end{CD} \qquad \begin{CD} &* @>W'''>> &*\\
              &@A BAA \hskip-1.3cm \scriptstyle{e'}  &@AA Y'''A\\
              &* @>> X'> &*
  \end{CD}
  \end{CD}
$$
{This means that we can either first perform horizontal compositions, and then vertical compositions, or vice-versa, yielding the same result.}
To prove the interchange law  we must make explicit use of the 2nd Peiffer
condition  in  Def. \ref{cm}, for crossed modules of groups. (All other mentioned properties follow from the 1st Peiffer relation in Def. \ref{cm}, and the fact that $G$ acts on $E$ by automorphisms.)
\end{remark}

{The horizontal and vertical categories are both groupoids. Given $K\in \D^2(\Gc)$, {the inverses $r_V$ and $r_H$ of $K$, with respect to the vertical and horizontal compositions, {called vertical and horizontal reverses of $K$,  are given in  \eqref{transfor}, below.} This finishes the construction of the double groupoid $\D(\Gc)$.} }
\begin{equation}\label{transfor}  {r_V\left(\quad \begin{CD} &* @>W>> &*\\
              &@A ZAA \hskip-1.3cm \scriptstyle{e}  &@AA YA\\
              &* @>> X> &*
  \end{CD}\quad \right)=\qquad\begin{CD} &* @>X>> &*\\
              &@A Z^{-1}AA \hskip-1.3cm \scriptstyle{Z \trr e^{-1}}  &@AA Y^{-1}A\\
              &* @>> W> &*
  \end{CD}\quad\quad,\quad \quad r_H\left ( \quad\begin{CD} &* @>W>> &*\\
              &@A ZAA \hskip-1.3cm \scriptstyle{e}  &@AA YA\\
              &* @>> X> &*
  \end{CD}\quad\right)=\qquad\begin{CD} &* @>W^{-1}>> &*\\
              &@A YAA \hskip-1.3cm \scriptstyle{X\trr e^{-1}}  &@AA ZA\\
              &* @>> X^{-1}> &*
  \end{CD}}\end{equation}

\subsection{Full gauge transformations between  {fake-flat 2-gauge} configurations}\label{fgtb}
 The action of the group  ${\cal T}(M,L,{\Gc})$ {\S \ref{2gaugetrans}} of gauge operators on the set  $\Theta(M,L,{\Gc})$ {\S \ref{ff2gc}} of fake flat 2-gauge configurations
is given in \S\ref{gaugeaction}.
We still  need  some
technicalities.
\subsubsection{Groupoid $\Theta^\#(M,L,\Gc)$ of fake-flat 2-gauge configurations and full gauge transformations}\label{2fgroupoid}

\begin{definition}[Full gauge transformation]\label{fgt} 
A full gauge transformation $\U=(U_2,U_1)$, starting in the  fake-flat 2-gauge configuration $\F=({\F^1}\colon L^1 \to G\,\,,\,\,{\F^2}\colon L^2 \to E)$, is given by a pair of maps:
$U_2: L^1 \rightarrow \D^2(\Gc)$ and $U_1\colon L^0 \to \D^1_V(\Gc)$,
such that:
\begin{itemize}
\item
{Let $v \in L^0$. Put: 
 $U_1(v)=\left(\,\,\, \begin{CD} &* \\
              &@A \,\,g_v AA \\
              &* \end{CD}\right)$. 
Let $\E_v$ be the set of edges {of $L$} incident to $v$. If  $t \in \E_v$ then:}
\begin{equation}
{\hskip-.3cm\s(t)=v \implies d_l \big( U_2(t)\big)=\left(\,\,\,\begin{CD} &* \\
              &@A \,\,g_v AA \\
              &* \end{CD}\right)  \textrm{ and }  \t(t)=v \implies d_r \big( U_2(t)\big)=\left(\,\,\, \begin{CD} &* \\
              &@A g_v AA \\
              &* \end{CD}\right)}\,\,.
\end{equation} Hence if two edges $t$ and $t'$ share a vertex, the corresponding vertical sides of
   $U_2(t)$ and $U_2(t')$ match.
\item For each edge $t \in L^1$,  it holds that $\dd_u\big(U_2(t))= {\F^1}(t)$. {(For notation see \eqref{bounds}.)}
\end{itemize}
\end{definition}

\begin{definition}\label{idf}An example of a full gauge transformation starting in $\F$ is $\id_\F$. It is such that $g_v=1_G$, for each $v \in L^0$, {hence $U_1(v)$ is an identity vertical 1-square. $\id_\F$ assigns $\id_V({\F^1}(t))$ in \eqref{ids} to each $t \in L^1$.}
\end{definition}

\begin{remark}[Full gauge transformations and crossed module homotopies]
\label{bh}
In the language of \cite[\S 2.1]{BrownIcen} and \cite{BROWNHOL}, full gauge transformations, starting at $\F$, boil down to  crossed module homotopies {starting} on the associated crossed module map $(\Psi_\F,\Phi_\F)\colon \Pi_2(M^2,M^1,M^0) \to \Gc$ of Thm. {\ref{d2dh}}.  (An explanation of crossed module homotopies in a language close to this paper's is in \cite{martins_porter,martins_cw_complex,martins_gohla,martins_CW_complex2}.) In order to prove this fact, we must use the fact that the groupoid $\pi_1(M,M^0)$ is free on the set of 1-cells {of $M$}. Therefore, a {crossed module homotopy starting on the crossed module map}  $(\Psi_\F,\Phi_\F)\colon \Pi_2(M^2,M^1,M^0) \to \Gc$ can be arbitrarily (and uniquely) specified by its value on the set of 0  and 1-cells of $M$, yielding our full gauge transformations.   \end{remark}

For an explanation of crossed module homotopy in the general framework of {\em crossed complexes}, {see \cite[\S 9]{brown_higgins_sivera} and \cite{Brown_tensor}.}

\medskip

A full gauge transformation $\U$, starting {in} the fake-flat 
2-gauge configuration $\F=({\F^1},{\F^2})$, {\em transforms} $\F$ into 
another fake-flat 2-gauge configuration,
denoted 
$$
\U \trr \F=(\U \trr {\F^1},\U \trr {\F^2}) .
$$ 
The definition of $ \U \trr \F$ is given below.
\footnote{
This is a consequence of the general construction in 
\cite[\S 2.1]{BrownIcen}, \cite{BROWNHOL} and \cite[\S 9]{brown_higgins_sivera} of crossed module and crossed complex homotopy. {Some explicit calculations are in \cite{martins_porter,martins_CW_complex2,martins_gohla}.}
}

\medskip

We give a combinatorial explanation of $\U \trr \F=(\U \trr {\F^1},\U \trr {\F^2})$, based on the framework of this paper.

At the level of edge
colourings, if $t\in L^1$ then $(\U\trr{\F^1})(t)$
is the bottom colour of the square $U_2(t)\in \D^2(\Gc)$, 
i.e.   $(\U\trr{\F^1})(t)=\dd_d(U_2(t))$; see \eqref{bounds}.
Let us describe  $\U\trr \F^2(P)$, where $P$ is a plaquette.  We  consider two cases, depending on the attaching map $\psi_P^2\colon S^1 \to M^1$ of the  closed 2-cell $\overline{c_P^2}$ (also denoted $P$); cf. Def. \ref{2-lattice}.
\begin{enumerate}
 \item If $\psi_P^2\colon S^1 \to M^1$ is constant, then  $\psi_P^2(S^1)=\{v\}$, where $v$ is a 0-cell of $L$. And we then put:
$$(\U\trr \F^2)(P)=g_v \trr \F^2(P),$$
where $g_v$ is defined in {Def. \ref{fgt}}.
Since $\dG( \F^2(P))=1_G$ (cf. Prop. \ref{fake-flat-conf-inpractice}),  $\U\trr \F$ is fake-flat at $P$, because: $$\dG\big((\U\trr \F^2)(P)\big)=\dG(g_v \trr \F^2(P))=g_v\dG( \F^2(P)) g_v^{-1}=1_G.$$
\item Otherwise, we now make explicit use of the fact that  $ \psi_P^2\colon S^1 \to M^1$ must then be an embedding; as such the characteristic map $\phi_P^2\colon D^2 \to \overline{e^2_P}=P$ {of $P$} is a homeomorphism. Consider $P\times [0,1]\subset M^2 \times [0,1]$, with the obvious product lattice decomposition, where $[0,1]$ has unique 0-cells at $0$ and $1$. The 2-dimensional lattice made out of the top and lateral sides of $P\times [0,1]$ can be given a fake-flat 2-gauge configuration, obtained by putting together $\F$ and $\U$.   {We refer to Fig. \ref{full_trans}.} It depicts a fake-flat 2-gauge configuration in the boundary of the cylinder $P\times [0,1]$. The base-point of plaquette $P$, whose attaching map is oriented counterclockwise, is {$v=v_1$}. The top {$P \times \{1\}$} of the cylinder is coloured by the restriction of $\F$ to $P$. The sides of the cylinder are coloured by the  1 and 2-squares $U_1(x)\in \D^1_V(\Gc)$ and $U_2(t)\in \D^2(\Gc)$, where $x$ is a vertex of $\bound(P)$ and $t$ is an edge of $\bound(P)$. {(Each $U_2(t)$ can be seen as a fake-flat 2-gauge configuration of the 2-disk $[0,1]^2$; see Rem. \ref{transport}. The direction of the edges of $\bound(P)$ gives an unambiguous way to transport the fake-flat 2-gauge configuration $U_2(t)$ of $[0,1]^2$ onto the correspondent lateral square in Fig \ref{full_trans}.)}  {Finally, the bottom $P \times \{0\}$  of the cylinder $P \times [0,1]$ is coloured with $\U\trr \F=\F'.$}

By definition, the `gauge-transformed' colour {$e_P'=\U \trr {\F^2}(P)$} of the plaquette $P$ {(which is
based at $v$)} is the 2D holonomy, based at $v'$, along the {2-disk}
consisting of the top and lateral sides of the cylinder in Fig. \ref{full_trans}, with the fake-flat 2-gauge configuration obtained by putting together $\F$ and $\U$. By Thm.
\ref{independence1}, the 2D holonomy {along} this {2-disk} is well
defined. Also by \eqref{fake-flatness-preserved} {$\U\trr \F$} is fake-flat at $P$.
\end{enumerate}

\begin{figure}
\centerline{\relabelbox
\epsfysize 6.8cm
\epsfbox{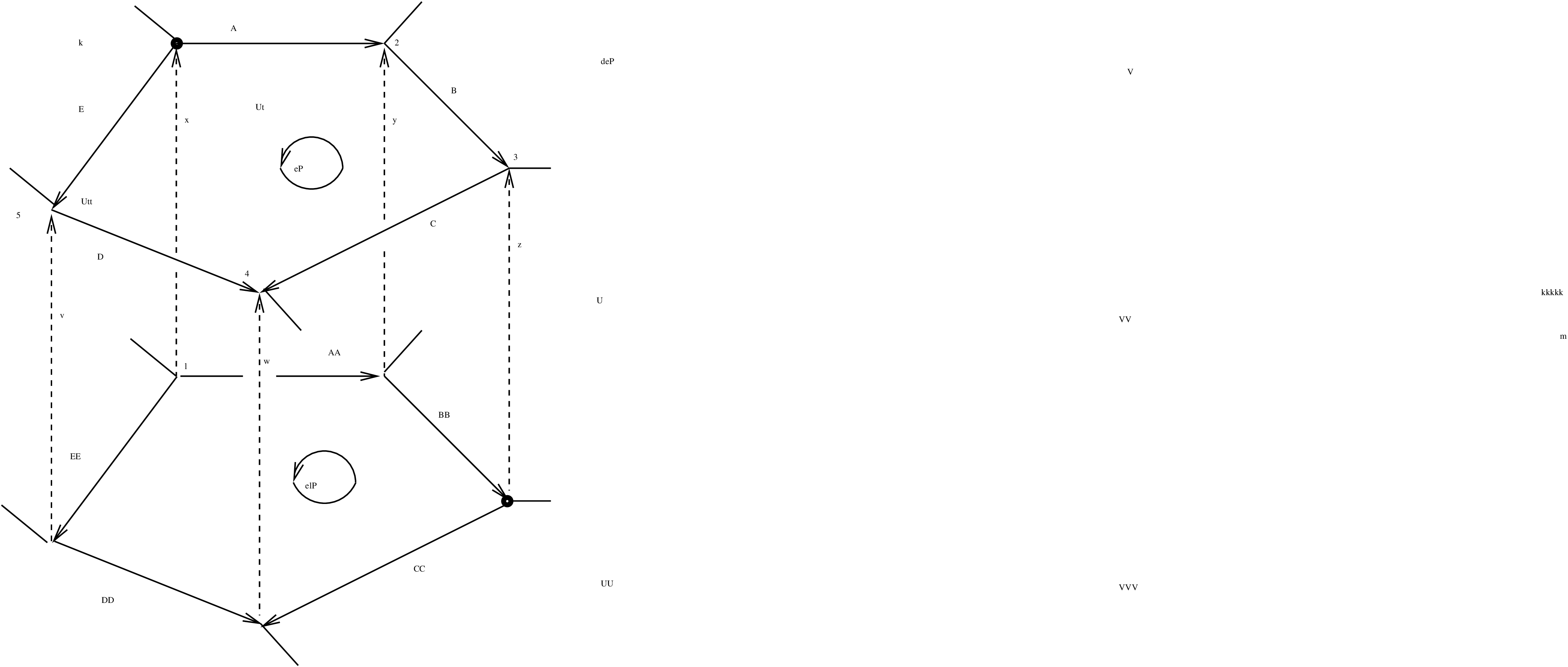}
\relabel{A}{$g_{t_1}$}
\relabel{AA}{$g_{t_1}'$}
\relabel{B}{$g_{t_2}$}
\relabel{BB}{$g_{t_2}'$}
\relabel{C}{$g_{t_3}$}
\relabel{CC}{$g_{t_3}'$}
\relabel{D}{$g_{t_4}$}
\relabel{DD}{$g_{t_4}'$}
\relabel{E}{$g_{t_5}$}
\relabel{EE}{$g_{t_5}'$}
\relabel{k}{$v=v_1$}
\relabel{2}{$v_2$}
\relabel{3}{$v_3$}
\relabel{4}{$v_4$}
\relabel{5}{$v_5$}
\relabel{l}{$v'$}
\relabel{x}{$u_1$}
\relabel{y}{$u_2$}
\relabel{z}{$u_3$}
\relabel{w}{$u_4$}
\relabel{v}{$u_5$}
\relabel{eP}{$e_P$}
\relabel{elP}{$e_P'$}
\relabel{U}{$U_2(t_1)=\qquad\begin{CD} &* @>  \quad  g_{t_1} \quad   >> &*\\
              &@A u_1AA \hskip-2.1cm \scriptstyle{\eta(t_1)}  &@AA u_2A\\
              &* @>> \quad   g_{t_1}'  \quad > &* \end{CD}$}
\relabel{V}{$U_2(t_2)=\qquad\begin{CD} &* @>  \quad  g_{t_2} \quad   >> &*\\
              &@A u_2AA \hskip-2.1cm \scriptstyle{\eta(t_2)}  &@AA u_3A\\
              &* @>> \quad   g_{t_2}'  \quad > &* \end{CD}$}
\relabel{VV}{$U_2(t_3)=\qquad\begin{CD} &* @>  \quad  g_{t_3} \quad   >> &*\\
              &@A u_3AA \hskip-2.1cm \scriptstyle{\eta(t_1)}  &@AA u_4A\\
              &* @>> \quad   g_{t_3}'  \quad > &* \end{CD}$}
\relabel{VVV}{$U_2(t_4)=\qquad\begin{CD} &* @>  \quad  g_{t_4} \quad   >> &*\\
              &@A u_5AA \hskip-2.1cm \scriptstyle{\eta(t_4)}  &@AA u_4A\\
              &* @>> \quad   g_{t_4}'  \quad > &* \end{CD}$}
\relabel{UU}{$U_2(t_5)=\qquad\begin{CD} &* @>  \quad  g_{t_5} \quad   >> &*\\
              &@A u_1AA \hskip-2.1cm \scriptstyle{\eta(t_5)}  &@AA u_5A\\
              &* @>> \quad   g_{t_5}'  \quad > &* \end{CD}$}
\relabel{deP}{$\dG(e_P)=g_{t_5}g_{t_4}g_{t_3}^{-1} g_{t_2}^{-1} g_{t_1}^{-1} $}
              \endrelabelbox}
\caption{\label{full_trans} {A full gauge transformation $\U$, transforming  $\F$ into $\U\trr \F=\F'$, in the vicinity of a plaquette $P$. The  2-gauge configurations $\F$ and $\F'=\U \trr \F$  are  (respectively) at the top and at the bottom of the cylinder $P\times I$.} {Note that, given an edge $t \in L^1$, the element of $E$ associated to $U_2(t)$ is here denoted by $\eta(t)$. Also $u_1,\dots, u_5\in G$ are given by $g_{v_1},\dots g_{v_5}$. The squares in $\Gc$ on the right are used to give $E$ labellings to the lateral squares of the cylinder on the left, in the obvious way.}}
\end{figure}

\begin{remark}\label{concrete}A more concrete expression for {$e_P'=\U \trr {\F^2}(P)$} can be {derived by using the double groupoid $\D^2(\Gc)$.} In the example in {Fig. \ref{full_trans},} we evaluate the following composition in $\D^2(\Gc)$. (We note that the elements of $E$ assigned to the three squares in the bottom right arise from the horizontal reverses of the squares $U_2(t_3)$, $U_2(t_2)$ and $U_2(t_1)$, above; {cf. \eqref{transfor}}.) And then $e_P'$ is the element of $E$ assigned to the resulting square in $\Gc$.
\begin{equation}\label{compconcr}
{\hskip-1cm\centerline{\relabelbox
\epsfysize 4cm
\epsfbox{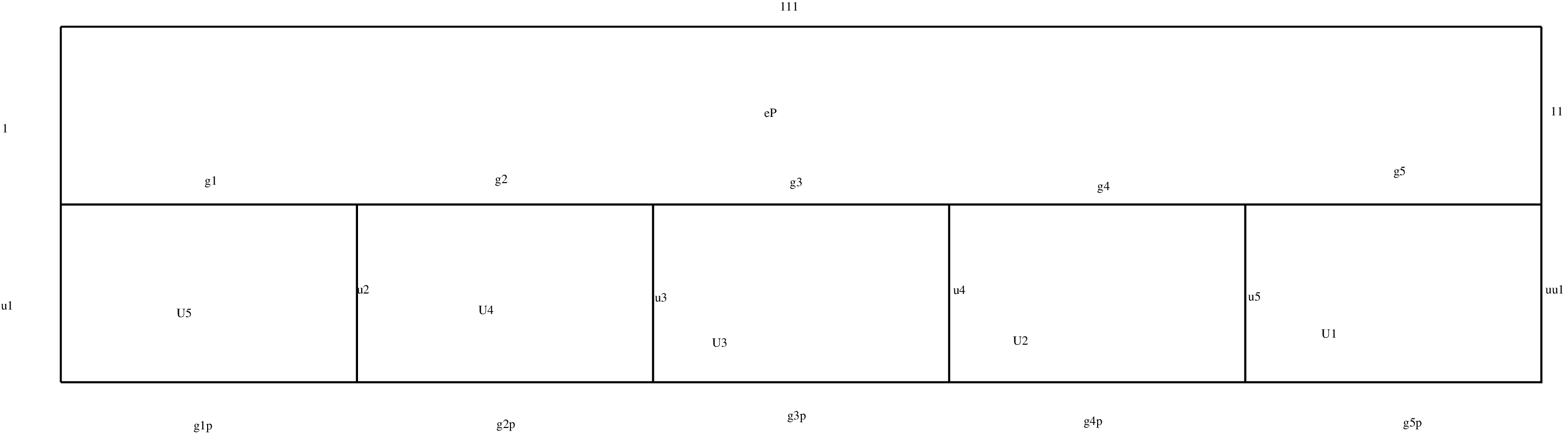}
\relabel{1}{$1_G$}
\relabel{11}{$1_G$}
\relabel{111}{$1_G$}
\relabel{g1}{$g_{t_5}$}
\relabel{g1p}{$g_{t_5}'$}
\relabel{g2}{$g_{t_4}$}
\relabel{g2p}{$g_{t_4}'$}
\relabel{g3}{$g_{t_3}^{-1}$}
\relabel{g3p}{${g_{t_3}'}^{-1}$}
\relabel{g4}{$g_{t_2}^{-1}$}
\relabel{g4p}{${g_{t_2}'}^{-1}$}
\relabel{g5}{$g_{t_1}^{-1}$}
\relabel{g5p}{${g_{t_1}'}^{-1}$}
\relabel{u1}{$u_1$}
\relabel{u2}{$u_5$}
\relabel{u3}{$u_4$}
\relabel{u4}{$u_3$}
\relabel{u5}{$u_2$}
\relabel{U1}{$g'_{t_1}\trr \eta(t_1)^{-1}$}
\relabel{U2}{$g'_{t_2}\trr \eta(t_2)^{-1}$}
\relabel{U3}{$g'_{t_3}\trr \eta(t_3)^{-1}$}
\relabel{U4}{$\eta(t_4)$}
\relabel{U5}{$\eta(t_5)$}
\relabel{uu1}{$u_1$}
\relabel{eP}{$e_P$}
\endrelabelbox}}
\end{equation}
\end{remark}

\begin{remark}[Notation]If $\U$ is a full gauge transformation starting at $\F$, and
transforming $\F$ into $\F'=\U \trr \F$, we  {use the notation}:
$
\F \ra{\U} \F' .
$ {By construction it clearly follows that $\F \ra{\id_\F} \F$; see Def. \ref{idf}.}
\end{remark}
\begin{lemma}\label{composition}{Consider a sequence of full gauge transformations $\F \ra{\U} \F' \ra{\U'} \F''$. A full gauge transformation $\U'*\U$, starting in $\F$, can be defined. Its underlying 1 and 2-squares in $\Gc$  are obtained by vertically composing the 1- and 2-squares in $\Gc$ of $\U$ and $\U'$, in the obvious way. {Therefore the squares in $\Gc$ making $\U'$ will be put under the squares in $\Gc$ making $\U$ in \eqref{vertcomp}}. Then ${\U'*\U}$ connects $\F$ to $\F''$; i.e. $\F \ra{\U'*\U} \F''$.}
\end{lemma}
\begin{proof} {\bf (Sketch)} {We must prove that $\big((\U'*\U) \trr (\F^1)\big)(t)={\F''}^1(t)$ for each $t \in L^1$, and that $\big((\U'*\U) \trr (\F^2)\big)(P)={\F''}^2(P)$, for each $P \in L^2$.}
 This is trivial to verify for edges, and for plaquettes attaching along constant maps. Otherwise,  cf. Fig. \ref{full_trans}.  Put the squares of the full gauge
 transformation $\U'$ on the bottom of the ones of $\U$. {This yields a fake-flat 2-gauge configuration $\M$, defined on the 2-disk $\Sigma$ made out of the top and lateral faces of $P\times [0,2]$, where $[0,2]$ has 0-cells at $0,1$ and 2.  There are two
 different ways to explicitly compute the 2D holonomy of $\M$ along $\Sigma$. They must yield the same element of $E$; {see Thm. \ref{independence1}}.} a)  Either we firstly multiply the
 squares standing on top of each other, and then compose with the top
 2-disk, and in this case the {result} will be
 $\big((\U'*\U) \trr (\F^2)\big)(P)$. Or b) compose $\U$ with $\F$ and only
 after that compose with $\U'$; and then  the {result}
  will be  ${\big(\U'  \trr (\U \trr \F^2)\big)(P)=  \big(\U' \trr  {\F'}^2\big)(P)={\F''}^2(P).}$

 {Cf. Rem. \ref{concrete}, if we put the squares of $\U'$ under those of $\U$ {in \eqref{compconcr}, then} the statement of the lemma also follows from the interchange law for the vertical and horizontal {compositions} in $\D^2(\Gc)$; see Rem. \ref{il}.}
\end{proof}

{Cf. \eqref{transfor}. By using the vertical reverse of 2-squares in $\Gc$,  we conclude that full gauge transformations can be
reversed.  Namely if we have $\F \ra{\U} \F'$, then $\U^{-1}$, obtained by applying vertical reverses to the 1-squares $U_1(x)$, $x \in L^0$, and  the 2-squares $U_2(t)$, $t \in L^1$, is such that $\F' \ra{\U^{-1}} \F$. Also $\U*\U^{-1}=\id_{\F'}$ and $\U^{-1}*\U=\id_\F$; see Def. \ref{idf}. Therefore we have the following result.}

\begin{Theorem}[The groupoid $\Theta^\#(M,L,\Gc)$ of fake-flat 2-gauge
configurations and full gauge transformations]\label{group-gauge-transf}
Let $(M,L)$ be a 2-lattice and $\Gc$  a crossed module.  We have a groupoid {$\Theta^\#(M,L,\Gc)$, whose objects are the
 fake-flat 2-gauge configurations $\F \in \Theta(M,L,\Gc)$.} The
morphisms are the full gauge transformations $\F \ra{\U} \F'$.
\end{Theorem}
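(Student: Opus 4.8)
The statement is essentially a bookkeeping consequence of what has already been set up, so the plan is to check the four groupoid axioms in turn, pushing almost all of the work onto Lemma~\ref{composition}, Lemma~\ref{independence1}, and the double-category structure of $\D(\Gc)$. For \emph{well-defined composition}: given $\F \ra{U} \F'$ and $\F' \ra{U'} \F''$, Lemma~\ref{composition} already produces a full gauge transformation $\F \ra{U'*U} \F''$ whose underlying map $L^1 \to \D^2(\Gc)$ is the track-wise vertical composite of $U$ and $U'$ in $\D(\Gc)$ and whose action on plaquettes is independent of how the faces of the associated cylinder are composed, by Lemma~\ref{independence1}. Thus $*$ is a well-defined composition of morphisms landing in the correct morphism set.

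For \emph{associativity}: given $\F\ra{U}\F'\ra{U'}\F''\ra{U''}\F'''$, both $(U''*U')*U$ and $U''*(U'*U)$ are, on each track $t$, the triple vertical composite of $U(t),U'(t),U''(t)$, and these agree because vertical composition of $2$-squares in $\D(\Gc)$ is associative; iterating Lemma~\ref{composition}, the induced plaquette colourings both equal $U''\trr\big(U'\trr(U\trr\F)\big)$, so the two composites coincide. For \emph{identities}, let $\id_\F(t)$ be the square \eqref{square} with top and bottom edge $\F(t)$, both vertical edges $1_G$, and $2$-cell label $1_E$; it satisfies the fake-flatness constraint trivially and defines a full gauge transformation starting at $\F$. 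Since its lateral faces in Figure~\ref{full_trans} are trivial, the $2$D holonomy of "top plus sides" equals that of the top plaquette alone, so $\id_\F\trr\F=\F$; and the unit laws $U*\id_\F=U=\id_{\F'}*U$ follow from the unit law for vertical composition in $\D(\Gc)$.

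Finally, \emph{inverses}, which is where the only genuine (if short) computation lies. Given $\F\ra{U}\F'$, define $U^{-1}$ track-wise as the vertical reverse of the square $U(t)$ (Definition~\ref{d4}); since this exchanges the top and bottom edges, $U^{-1}$ is a full gauge transformation starting at $\F'$. The one non-formal point is the square-level identity: vertically stacking $U(t)$ above its vertical reverse $U^{-1}(t)$ must reproduce the identity $2$-square on $\F(t)$ (and, stacking the other way, the identity $2$-square on $\F'(t)$). Here one uses the first Peiffer relation to evaluate $\d$ of the reversed $2$-cell label, so that the constraint of \eqref{square} holds with trivial vertical edges, and uses that $G$ acts on $E$ by automorphisms to collapse the resulting product $(g\trr e^{-1})(g\trr e)=g\trr 1_E=1_E$ of $G$-conjugated $E$-elements; one must also match the geometric stacking of cylinders in Figure~\ref{full_trans} with the order of factors in the vertical-composition rule of $\D(\Gc)$, so that the $G$-action correction terms cancel exactly — this alignment is the main obstacle, everything else being inherited from the double-category axioms and the earlier lemmas. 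Granting it, $U^{-1}*U=\id_\F$ and $U*U^{-1}=\id_{\F'}$ as maps $L^1\to\D^2(\Gc)$; combined with Lemma~\ref{composition} and the identity $\id_\F\trr\F=\F$ above, this forces $U^{-1}\trr\F'=\F$ (so $U^{-1}$ is a morphism $\F'\to\F$) and makes $U^{-1}$ a two-sided inverse of $U$. Hence $\Phi^\sharp(M,L,\Gc)$ is a groupoid.
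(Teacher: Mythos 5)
Your proposal is correct and follows essentially the same route as the paper, which disposes of the theorem by citing Lemma~\ref{composition} for composition and the vertical reverse of Definition~\ref{d4} for inverses; you merely make explicit the routine checks (associativity via associativity of vertical composition in $\D(\Gc)$, identity squares, and the square-level cancellation) that the paper leaves to the reader. One small remark: your cancellation $(g\trr e^{-1})(g\trr e)=1_E$ tacitly uses $Z^{-1}\trr e^{-1}$ as the label of the vertical reverse, which is the form that actually satisfies the fake-flatness constraint of \eqref{square} (the label $Z\trr e^{-1}$ printed in Definition~\ref{d4} appears to be a typo), so your alignment of the stacking order with the $G$-action correction terms is exactly right.
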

\begin{remark}[Algebraic topological definition of $\Theta^\#(M,L,\Gc)$ -- following Brown and Higgins]\label{bh2} Recall (Thm. \ref{d2dh}) that we have a bijection $\F \mapsto f_\F$,  between fake-flat 2-gauge configurations $\F$ and crossed module maps $f_\F\colon \Pi_2(M^2,M^1,M^0) \to \Gc.$ As mentioned in Rem. \ref{bh}, there exists a relation of homotopy between crossed module maps $f,f'\colon \Gc'\to \Gc$, where $\Gc$ and $\Gc'$ are crossed modules of groupoids, discussed in \cite{BrownIcen,BROWNHOL,martins_porter}. This relation is a particular case of  homotopy of  crossed complex maps \cite[\S 7.1.vii, \S 9.3]{brown_higgins_sivera}, \cite{Brown_tensor}. By using the notation in \cite[\S 9.3.i]{brown_higgins_sivera}, given crossed modules $\Gc'$ and $\Gc$, we have a groupoid ${\rm CRS_1}(\Gc',\Gc)$ of crossed module maps {$\Gc' \to \Gc$} and homotopies between them. When $\Gc'=\Pi_2(M^2,M^1,M^0)$, where $M$ a CW-complex, and $\Gc$ is a crossed module of groups, {a homotopy  connecting $f$ and $f'$ boils down to a full gauge transformation $f\ra{\U} f'$.} For discussion see \cite{martins_porter,martins_cw_complex,martins_CW_complex2}. Hence   $\Theta^\#(M,L,\Gc)={\rm CRS}_1(\Pi_2(M^2,M^1,M^0), \Gc)$ in \cite[\S 9.3.i]{brown_higgins_sivera}.
\end{remark}

\begin{remark}[A 2-groupoid of fake-flat  2-gauge configurations, full gauge transformations and 2-fold gauge transformations]\label{bh3} The groupoid  $\Theta^\sharp(M,L,\Gc)$
is part of a
more general construction. Let $\Gc,\Gc'$ be crossed module of groupoids. By considering 2-fold homotopies between crossed module homotopies (see \cite{BrownIcen,BROWNHOL,martins_porter} and \cite[\S 9.3.i]{brown_higgins_sivera}), {we can furthermore define a 2-groupoid ${\rm CRS}_2(\Gc',\Gc)$, whose objects are crossed module maps $\Gc' \to \Gc$, 1-morphisms are homotopies between 2-crossed module maps, and 2-morphisms are 2-fold homotopies between homotopies. Explicit formulae are in \cite{martins_porter,martins_cw_complex,martins_gohla,martins_CW_complex2}.  This leads to a notion of 2-fold gauge transformation between full gauge transformations, prominent in higher gauge theory \cite{martins_picken,baez_schreiber,schreiber_waldorf2}.
These  2-fold gauge transformation between full gauge transformations do not appear to have
large importance for this paper.} However they have prime importance
  for addressing  algebraic topology descriptions of 
higher gauge theory invariants of
  manifolds (namely Yetter invariant \cite{yetter_tqft,porter_tqft}), as
  explained in \cite{martins_porter,martins_cw_complex}.
\end{remark}
\subsubsection{{Full gauge transformations preserve 2D holonomy along embedded 2-spheres}}\label{ph1}
{Full gauge transformations preserve the 2D holonomy of fake-flat 2-gauge configurations along embedded surfaces. Let us address how to prove this using some basic algebraic topology, closely  following the work of Brown and Higgins \cite{Brown_tensor,brown_higgins_cubes,brown_2dvk,brown_classifying}. Our proof is done  in very identical lines to the one of \cite{martins_picken}, which was done for the case of differential-geometric 2-connections.} 
 
We temporarily denote the fundamental crossed module of a CW-complex $X$ by $\Pi_2(X)=\Pi_2(X,X^1,X^0)$. If $X$ is a CW-complex, let $X \times [0,1]$ be the product CW-complex, where $[0,1]$ is given the obvious CW-decomposition with 0-cells at $0$ and $1$. If $\F$ is a fake-flat 2-gauge configuration in a 2-lattice $(M,L)$, we hence denote the {(Thm. \ref{d2dh})} 2D parallel transport 2-functor of $\F$ by $(\Psi_\F,\Phi_\F)\colon \Pi_2(M^2) \to \Gc$.

The main result underpinning our discussion is the following interpretation of full gauge transformations between fake-flat 2-gauge configurations. It essentially appears in \cite[\S 9.3.i, \S 9.7 and \S 9.8]{brown_higgins_sivera} and \cite{Brown_tensor,brown_classifying}, in the more general case of crossed complexes. 

\begin{lemma}Let $(M,L)$ be a 2-lattice. Let $\Gc=(\dG\colon E \to G,\trr)$ be a group crossed module.
Let $\F \ra{\U} \F'$ be a full gauge transformation connecting $\F$ and $\F'$.  We then have a crossed module map $H_\U\colon \Pi_2(M^2 \times [0,1]) \to \Gc$,  making the diagram below {(in the category of crossed modules)} commute: 
\begin{equation}\label{keycom}
 \xymatrix{  &  \Pi_2(M^2)\ar[drr]^{ (\Psi_\F,\Phi_\F)}\ar[d]_{i_1}\\
& \Pi_2\big(M^2 \times [0,1]) \ar[rr]|>>>>>>>>>>>{H_\U} && \Gc\\
&  \Pi_2(M^2)\ar[urr]_{ (\Psi_{\F'},\Phi_{\F'})}\ar[u]^{i_0}}
\end{equation}
Here $i_0$ and $i_1$ are induced by  $m\in M^2 \mapsto (m,0)\in M^2 \times [0,1]$ and $m\in M^2 \mapsto (m,1)\in M^2 \times [0,1]$.   
\end{lemma}
\begin{proof}
Let us for simplicity assume that the product CW-decomposition of $M^2 \times [0,1]$ is a 2-lattice $J$. The general case in analogous. Recall the construction of the usual CW-decomposition of $M^2 \times [0,1]$, where {$M^2\times \{0\}$ and $M^2 \times \{1\}$} embed as subcomplexes, and we have an additional $(i+1)$-cell $c \times [0,1]$ of $M^2 \times [0,1]$ for each $i$-cell $c$ of $M^2$. {(See \cite[Page 523]{hatcher}.)}

{Cf. the discussion in \S \ref{2fgroupoid}, particularly the construction of $\F'=\U \trr \F$, and Fig. \ref{full_trans}.} The fake-flat 2-gauge configurations $\F$, $\F'$, and the full gauge transformation $\U$, can together be assembled to yield a fake-flat 2-gauge configuration $\M$ of $(M^2 \times [0,1], J)$. The restriction of $\M$ to $M\times \{1\}$ is obtained from $\F$, and the restriction of $\M$ to  $M\times \{0\}$ is obtained from $\F'$. Finally the restriction of $\M$ to the 1- and 2-cells $v \times [0,1]$ (where $v \in L^0$) and $t \times [0,1]$ (where $t \in L^1$) is obtained from $U_1(v)$ and $U_2(t)$; for conventions on how to do this see the discussion in  \S \ref{2fgroupoid}.

We have a 3-cell $P\times I$ of $J$ for each 2-cell $P \in L^2$. And $\M$ is 2-flat along $P \times I$, given the explicit construction of the value of $\F'=\U \trr \F$ at $P$. Since there are no more 3-cells in $M^2 \times [0,1]$, we hence conclude that $\M$ is a 2-flat 2-gauge configuration in $(M \times [0,1],J)$. We now just need to apply Thm \ref{2-flat-conf-theorem} to $\M$.  Clearly 
$H_\U=(\Psi_\M,\Phi_\M)\colon \Pi_2\big(M^2 \times [0,1])\to \Gc$ makes the diagram in \eqref{keycom} commute. \end{proof}
\begin{remark}A stronger result can be proved, and is  implicit in \cite{Brown_tensor} and \cite[Chapter 9]{brown_higgins_sivera}.  Namely, there is a  one-to-one correspondence between full-gauge transformations $\F  \ra{\U} \F'$ and maps $H_\U\colon \Pi_2(M^2 \times [0,1]) \to \Gc$, making \eqref{keycom} commute. This can be inferred by combining the beginning of \cite[\S 9.3.i]{brown_higgins_sivera} with Thm \cite[9.8.1]{brown_higgins_sivera}. 
 
\end{remark}

We state the result concerning invariance of 2D holonomy under full gauge transformations for a
2-sphere cellularly embedded in a 2-lattice, only; {see \S \ref {2scat}, \S \ref{sc}.} This is the case whose behaviour under full gauge transformations is the neatest and it is the generality needed to formulate higher Kitaev models in \S\ref{HamCalc}.

\begin{Theorem}\label{main4}{Let $\Gc=(\dG\colon E \to G, \trr)$ be a crossed module of groups.}
Let $(M,L)$ be a 2-lattice. Let $\Sigma$ be a 2-sphere  cellularly embedded in $M$. Let $\F$ be a fake-flat 2-gauge configuration {on $(M,L)$.} Let $\U$ be a full gauge transformation starting in $\F$. Let $v \in \Sigma$, to be a 0-cell of $M$. {Let  $g_v\in G$ be the element of $G$ associated to $U_1(v)$; see Def. \ref{fgt}}. Then: $${\hol}_v^2(\U \trr \F , \Sigma,L ) \; = \;  g_v \trr {\hol}_v^2( \F ,\Sigma,L ).$$
\end{Theorem}
\begin{proof} We strongly use the previous lemma, and resume the notation there introduced. 

Let $\F'= \U \trr \F$. Cf. \eqref{keycom}, put $H_\U=(H_\U^2,H_\U^1)$. Let $\ivb(\Sigma) \in \pi_2(M^2,v){\subset \pi_2(M^2,M^1,v)}$ be as in Def. \ref{defivb}. 
By Def. \ref{2scc}, {we have that:}
$${\hol}_v^2( \F , \Sigma,L ) =\Psi_{\F}(\ivb(\Sigma))=H_\U^2\big({i_1}(\ivb(\Sigma))\big) \textrm{ and }
{\hol}_v^2(\U \trr \F , \Sigma,L ) =\Psi_{\F'}(\ivb(\Sigma))=H_\U^2\big({i_0}(\ivb(\Sigma))\big).
  $$
{Let $\gamma_v$ be the following path in {$M^2 \times [0,1]$,} connecting {$(v,0)$ to $(v,1)$}:} 
$${t \in [0,1] \mapsto {(v,t)} \in {M^2 \times [0,1]}}.$$
Then, passing to the correspondent element $[\gamma_v]$ in the underlying groupoid $\pi_1\big ( ( M^2 \times [0,1])^1,  (M^2 \times [0,1])^0 \big)$ of {the crossed module $\Pi_2(M^2 \times [0,1])$,} it holds that:
$${i_0(\ivb(\Sigma))= [\gamma_v] \trr (i_1(\ivb(\Sigma)), \textrm{ in } \pi_2\big(M^2 \times [0,1], ( M^2 \times [0,1])^1,  (M^2 \times [0,1])^0 \big) . }$$
 By construction  we have that $g_v=H_\U^1([\gamma_v])$. Cf. \eqref{keycom}, it hence follows that:
\begin{align*}
 {\hol}_v^2( \F' , \Sigma,L ) &=\Psi_{\F'}(\ivb(\Sigma))= H_\U^2\big({i_0}(\ivb(\Sigma))\big)\\
&=H_\U^2\big( [\gamma_v] \trr ({i_1}(\ivb(\Sigma))  \big)\\
&=H_\U^1\big( [\gamma_v]\big) \trr H_\U^2\big(({i_1}(\ivb(\Sigma))  \big)\\
&=g_v \trr  \Psi_{\F}\big(\ivb(\Sigma)\big)=g_v \trr {\hol}_v^2( \F ,\Sigma,L ).
\end{align*}

\end{proof}
\subsubsection{Groupoid $\Theta^\#_{{\rm flat}}(M,L,\Gc)$ of 2-flat 2-gauge configurations and full gauge transformations}\label{2flatgroup}
{Let $\Gc=(\dG \colon E \to G,\trr)$ be a crossed module of groups. Let $(M,L)$ be a 2-lattice}. Recall the definition of a 2-flat 2-gauge configuration in \S \ref{2flatn} and details therein. Let $\F$ be a fake-flat 2-gauge configuration. The 2D holonomy ${\rm Hol}^2_v(\F,\bound(b),L)$ of $\F$  along the boundary $\bound(b)$ of a 3-cell $b$ is invariant under full gauge transformations, in the sense of Thm. \ref{main4}. Suppose that $\F$ is 2-flat, hence that ${\rm Hol}^2_v(\F,\bound(b),L)=1_E$, for each $b \in L^3$. Since $G$ acts on $E$ by automorphisms, if  $\U$ is {any full gauge transformation, starting in $\F$,} it {follows} that  ${\rm Hol}^2_v(\U \trr \F,\bound(b),L)=1_E$, for each $b \in L^3$. Hence full gauge transformations transform 2-flat 2-gauge configurations into 2-flat 2-gauge configurations.

In particular, the groupoid $\Theta^\sharp(M,L,\Gc)$ of fake-flat 2-gauge
configurations and full gauge transformations of Thm \ref{group-gauge-transf} has a full subgroupoid $\Theta^\sharp_{{\rm flat}}(M,L,\Gc)$, whose objects are the 2-flat 2-gauge transformations.
\begin{remark}[Algebraic topological definition of $\Theta^\#_{{\rm flat}}(M,L,\Gc)$ -- following Brown and Higgins]\label{bh6} Cf. Rem. \ref{bh2}. Recall (Thm. \ref{2-flat-conf-theorem}) that we have a bijection $\F \mapsto f_\F$,  between 2-flat 2-gauge configurations $\F$ and crossed module maps $f_\F\colon \Pi_2(M,M^1,M^0) \to \Gc.$  {Cf. \cite[\S 9.3.i]{brown_higgins_sivera},} given crossed modules $\Gc'$ and $\Gc$ we have a groupoid ${\rm CRS_1}(\Gc',\Gc)$ of crossed module maps {$\Gc' \to \Gc$} and homotopies between them. When $\Gc'=\Pi_2(M,M^1,M^0)$, where $M$ a CW-complex,  $\Gc$ is a group crossed module, {and $\F,\F'$ are 2-flat 2-gauge configurations,  a homotopy $H$, connecting $f_{\F}$ to $f_{\F'}$, boils down to a full gauge transformation $\F\ra{\U} \F'$. Hence   $\Theta^\#_{{\rm flat}}(M,L,\Gc)\cong {\rm CRS}_1(\Pi_2(M,M^1,M^0), \Gc)$ in \cite[\S 9.3.i]{brown_higgins_sivera}.}
\end{remark}

\subsection{Gauge operators on fake-flat 2-gauge  configurations}
\label{gaugeaction}

{Let $\Gc=(\dG \colon E \to G,\trr)$ be a crossed module of groups.} {We now finally define a left-action {$\trrr$} of the group  of gauge operators
${\cal  T}(M,L,{\Gc})$ of \S \ref{2gaugetrans}
on the set of fake-flat 2-gauge configurations
$\Theta(M,L,\Gc)$. Our main tool is the groupoid $\Theta^\sharp(M,L,\Gc)$ of fake-flat 2-gauge
configurations and full gauge transformations between them; see Thm \ref{group-gauge-transf}. We will define a {left-action} whose action-groupoid  is isomorphic to $\Theta^\sharp(M,L,\Gc)$.}

The main observation is that given a {fake-flat 2-gauge} configuration {$\F=(\F^2\colon L^2 \to E,\F^1\colon L^1 \to G )$,}  the set of full gauge
transformations starting in $\F$ can be put in one-to-one
correspondence with elements
$(\eta,u)\in {\cal T}(M,L,{\Gc})
  = {\cal  E}(M,L,{\Gc}) \rtimes_\bullet  {\cal V}(M,L,{\Gc})$. {Given $\F\in \Theta(M,L,\Gc)$ and $(\eta,u)\in {\cal  T}(M,L,{\Gc})$,} we have a full gauge transformation $\U_{(\eta,u,\F)}=\U=(U_2,U_1)$,  starting at $\F$, defined as:
\begin{equation}\label{conventions}
\begin{split}
{v \in L^0} &\stackrel{U_1}{\longmapsto}  \left(\begin{CD} &* \\
              &@AA u(v)A  \\
              &*  \end{CD}\,\,\,\,\,\,\,\,\right) \\
\left (v \ra{t} v' \right)\in L^1 & \stackrel{U_2}{\longmapsto}  \quad \left  (\,\quad \begin{CD} &* @>  \qquad  {\F^1}(t) \qquad   >> &*\\
              &@A u(v)AA \hskip-3.1cm \scriptstyle{\eta(t)}  &@AA u(v')A\\
              &* @>> \qquad  \U \trr {\F^1}(t)  \qquad  > &* \end{CD} \quad\,\,\,  \right);\textrm{ for } \U\trr {\F^1}(t)=\d\big (\eta(t) \big) \, u(v)\, {\F^1}(t)\, u(v')^{-1} .
              \end{split}
\end{equation}
\begin{lemma}\label{lem:9}
Let $(\eta,u),(\eta',u')\in {\cal T}(M,L,{\Gc})$. Let {$\F \in \Theta(M,L,\Gc)$.} We have:
\begin{equation}
  {\U_{\big( (\eta,u)(\eta',u'), \F\big)}=   \U_{\big(  \eta\, u \bullet \eta', u\,u' , \F\big)}=  \U_{ \big( \eta, u, \U_{(\eta',u',\F)}\trr \F\big) }  * \U_{ \big( \eta', u',   \F\big)}.}
\end{equation}
\end{lemma}
\noindent{{N.B.}: See \S \ref{2gaugetrans} for conventions on the product  $(\eta,u)(\eta',u')=  (\eta\, u \bullet \eta', u\,u')$ of  $(\eta,u),(\eta',u')\in {\cal T}(M,L,{\Gc}).$ The composition of $*$  of full gauge transformation is made explicit in Lem. \ref{composition}.}
\begin{proof} {This follows by construction, by looking at the conventions \eqref{conventions} for
$\U_{(\eta,u,\F)}$. Just compare the explicit definition of the group operation in ${\cal T}(M,L,{\Gc})$ in \eqref{semidef} with the explicit form of the vertical composition of squares in $\Gc$ in \eqref{vertcomp}. The latter yields the composition $*$ of full gauge transformations in  Lem. \ref{composition}.}
\end{proof}

An operation {$\trrr$} of the group ${\cal T}(M,L,{\Gc})$  
on  $\Theta(M,L,\Gc)$ {can then be} defined by:
\begin{equation}\label{eq:TonP}
(\eta,u) \trrr \F  \; =  \;  \U_{(\eta,u,\F)}\trr \F .
\end{equation}
{By Lem. \ref{lem:9}, $\trrr$} is indeed a left-action of ${\cal T}(M,L,{\Gc})$ on
$\Theta(M,L,\Gc)$ and  $\Theta^\sharp(M,L,\Gc)$ is its action groupoid.

By construction of $\U_{(\eta,u,\F)}$ and Thm. \ref{main4}, it follows that:
\begin{Theorem}\label{main5}
 Let $(M,L)$ be a 2-lattice. Let $\Sigma$ be a 2-sphere  cellularly embedded in $M$. Let $v \in \Sigma \cap M^0$. Let $\F$ be a fake-flat 2-gauge configuration {in $(M,L)$.} Let $(\eta,u) \in {\cal T}(M,L,{\Gc})$.
  Then: $${\hol}_v^2\big((\eta,u) \trrr \F , \Sigma,L \big) \; = \;  u(v) \trr {\hol}_v^2( \F ,\Sigma,L ).$$
\end{Theorem}
{Cf. \ref{2flatgroup}, in particular $\trrr$ restricts to an action of ${\cal T}(M,L,\Gc)$ on the set of 2-flat 2-gauge configurations.}

\newcommand{\Tc}{{\cal T}}
\newcommand{\Hc}{{\cal H}}
\newcommand{\TcM}{\Tc (M,L,\Gc ) }  
\newcommand{\HcM}{\Hc (M,L,\Gc ) }  

\newcommand{\spike}[3]{\setlength{\unitlength}{.1mm} \begin{picture}(110,110)
    \put(10,0){\vector(0,1){100}}
      \put(-7,45){$#1$}
    \put(10,0){\vector(1,0){100}}
      \put(50,45){$#2$}
    \put(10,100){\vector(1,0){100}}
    \put(110,0){\vector(0,1){100}}
      \put(114,45){$#3$}
  \end{picture} }

\newcommand{\spikes}[6]{\setlength{\unitlength}{.1mm} \begin{picture}(110,210)
    \put(10,100){\vector(0,1){100}}
      \put(-7,145){$#4$}
    \put(10,100){\vector(1,0){100}}
      \put(50,145){$#5$}
    \put(10,200){\vector(1,0){100}}
    \put(110,100){\vector(0,1){100}}
      \put(114,145){$#6$}
             \put(10,0){\vector(0,1){100}}
      \put(-7,45){$#1$}
    \put(10,0){\vector(1,0){100}}
      \put(50,45){$#2$}
    \put(10,100){\vector(1,0){100}}
    \put(110,0){\vector(0,1){100}}
      \put(114,45){$#3$}
  \end{picture} }

\section{The Hamiltonian models}\label{Hamiltonian}

\subsection{A Hamiltonian model for higher gauge theory}\label{HamCalc}
{In this subsection, we fix a manifold $M$, a 2-lattice decomposition $(M,L)$ of $M$ (Def. \ref{2-lattice}), and a  crossed module of groups $\Gc=(\dG\colon E \to G, \trr)$; Def. \ref{cm}.
We suppose that $M$ is compact (thus that $L$ is finite) and that $\Gc$ is finite, meaning that both $G$ and $E$ are finite {groups}. Hence the set $\Theta(M,L,\Gc)$ of fake-flat 2-gauge configurations is finite. Note that $\Theta(M,L,\Gc)$  is non-empty, as the naive vacuum
 is in $\Theta(M,L,\Gc)$; see \S\ref{ff2gc}.}

Recall from \S\ref{2gaugetrans}  the group
$
\TcM \; =
{\cal E}(M,L,{\Gc}) \rtimes_\bullet  {\cal  V}(M,L,{\Gc})
$
of gauge operators.
We have  \S \ref{gaugeaction} a left-action  $\trrr$
of the group of gauge operators on $\PhiM$,
preserving 2D holonomy, as in  Thm. \ref{main5}.

\begin{definition}[Hilbert space $\mathcal{H}(M,L,\mathcal{G})$]\label{hs}The Hilbert space for Hamiltonian higher gauge theory is the free vector space $\mathcal{H}(M,L,\mathcal{G})  = \CC \Theta(M,L,\Gc)$
 on the set of fake-flat 2-gauge configurations, with the unique inner product
$\langle -, - \rangle$
 that renders different {fake-flat 2-gauge configurations orthonormal.}
\end{definition}
{The group algebra  $\CC \TcM$ of $\TcM$ has a representation on $\mathcal{H}(M,L,\mathcal{G})$, obtained by linearising the action of  $\TcM$ on  $\Theta(M,L,\Gc)$. Given $m \in \CC \TcM$, we denote the corresponding linear operator  by $\widehat m\colon  \mathcal{H}(M,L,\mathcal{G})\to \mathcal{H}(M,L,\mathcal{G}).$ Thus  $\widehat m(\F)= m \trrr \F$, for each fake-flat 2-gauge configuration $\F$.}

{By construction, if $m \in \TcM$, then the operator $\widehat m \colon \mathcal{H}(M,L,\mathcal{G})\to \mathcal{H}(M,L,\mathcal{G})$ is unitary, i.e. $\widehat{m}^{\dagger}=\widehat{m}^{-1}=\widehat{m^{-1}}$, where $\dagger$ denotes Hermitian adjoint.}

\subsubsection{Vertex and edge gauge spikes $U^g_v$ and  $ U^e_t$}\label{exp}

\begin{definition}[Edge gauge spikes and vertex gauge spikes] A  vertex gauge spike, supported in
  $v \in L^0$, is a gauge operator $(\eta,u)\in\TcM $ such that $\eta(t)=1_E$
  for each $t \in L^1$, {and such that for each $w \in L^0$ it holds that $u(w)=1_G$, unless $w=v$.}
  Analogously, given an edge $t\in L^1$, a gauge spike
  supported in $t$ is a gauge operator $(\eta,u)$ such that $u(v)=1_G$
  for each $v \in L^0$, and {such that for each $s \in L^1$, it holds that $\eta(s)=1_E$,} unless $s=t$.
  
{For a vertex $v\in L^0$} and $g \in G$, we let $U_v^g$ be the unique vertex gauge spike
  supported in $v$ such that $u(v)=g$.  {For $t\in L^1$} and $e \in E$,
  we let $U_t^e$ be the unique edge gauge spike, supported in $t$, such that $\eta(t)=e$.

  The  linear operators $\widehat{ U_v^g},\widehat{ U_t^e}\colon \mathcal{H}(M,L,\mathcal{G})\to \mathcal{H}(M,L,\mathcal{G}) $ are also called vertex and edge gauge spikes.
\end{definition}

\begin{lemma}\label{comm1}
{(I) Vertex gauge spikes supported in different vertices commute, and edge gauge spikes supported in different edges commute. I.e., given $v,v' \in L^0$, with $v\neq v'$, and $t,t' \in L^1$, with $t \neq t'$ (including the case when $t$ and $t'$ share a vertex), it holds that,  given $e,e' \in E$ and $g,g' \in G$:}
$${ [U^g_v, U^{g'}_{v'}]=0\qquad \textrm { and } \qquad [U^e_t, U^{e'}_{t'}]=0. }$$
(II) {For each vertex $v\in L^0$ and each edge  {$t\in L^1$,} then given any $g,h \in G$ and any
$e,f \in E$, we have:}
  $${ U^g_v U^h_v  = U^{gh}_v \qquad \textrm{ and } \qquad    U^e_t U^f_t   = U^{ef}_t.  }$$ 
(III) {If $t=\big(\sigma(t) \ra{t} \tau(t)\big) \in L^1$, including the case when $\sigma(t)=\tau(t)$, and $v\in L^0$ is such that  $v\neq \sigma(t)$, then given $g \in G$ and $e \in E$, it follows that
$U^g_v$ and  $ U^e_t$ commute: $[U^g_v,U^e_t]=0$}.\\
(IV) {Given an edge  $t=\big(\sigma(t) \ra{t} \tau(t)\big) \in L^1$, including the case $\sigma(t)=\tau(t)$,   $g \in G$ and $e \in E$, we have:}
$$
  U^e_t U^g_{\sigma(t)}   =
     U^g_{\sigma(t)}  U^{g^{-1} \trr e}_t .
$$
(V) Moreover, any gauge operator can be obtained as a product of vertex and edge gauge spikes.
 \end{lemma}
\begin{proof}
This all follows immediately from the definition of the group of gauge
\op s as a semidirect product {in \S\ref{2gaugetrans}. Equations  \eqref{bullet} and \eqref{semidef} translate exacty to the formulae in the lemma.}
\end{proof}
\begin{remark}
  {Note that  (I--IV) also hold for the associated operators $\widehat{ U_v^g}, \widehat{ U_t^e}\colon \mathcal{H}(M,L,\mathcal{G})\to \mathcal{H}(M,L,\mathcal{G})$, since they are constructed from a representation of the group ${\cal  T}(M,L,{\Gc})$ on the Hilbert space  $\mathcal{H}(M,L,\mathcal{G})$.}
\end{remark}

Let us now unpack the construction in \S\ref{fgtb} and \S\ref{gaugeaction}, and give an explicit description of how vertex and edge gauge spikes act on $\mathcal{H}(M,L,\mathcal{G})  = \CC \Theta(M,L,\Gc)$. 
Let $\F=(\F^2\colon L^2 \to E\,\,,\,\,\F^1\colon L^1 \to G )$ be a fake-flat 2-gauge configuration; Def. \ref{fakeflatconf}. If $m\in \CC\TcM$, recall {$\widehat m\colon \mathcal{H}(M,L,\mathcal{G}) \to  \mathcal{H}(M,L,\mathcal{G})$ is the  operator, $\F \mapsto m \trrr \F$.} Put $\widehat m(\F)=\big(\widehat m ( \F^2)\colon L^2 \to E\,,\, \widehat m ( \F^1)\colon L^1 \to G\big)$. {Recall \eqref{conventions}:}

\noindent {\bf Action of vertex gauge spikes:} Let $v \in L^0$ and $g \in G$. Let {$\big(\s(t) \ra{t} \tau(t) \big)\in L^1$.} Then:
 $$ \big(\widehat{ U^g_v }(\F^1)\big) ( \s(t) \ra{t} \tau(t) )= \begin{cases}
                                                       g \,\F^1 \big(  \s(t) \ra{t} \tau(t) \big), \textrm{ if } v= \s(t), {\textrm{ and } \s(t) \neq \t(t)};\\
                                                        \F^1 \big(  \t(t) \ra{t} \tau(t)\,  \big) g^{-1}, \textrm{ if } v= \t(t), {\textrm{ and } \s(t) \neq \t(t);}\\
                                                       {g \F^1 \big(  \t(t) \ra{t} \tau(t)\,  \big) g^{-1}, \textrm{ if } v= \t(t)=\sigma(t);}\\
                                                         \F^1 \big(  \s(t) \ra{t} \tau(t)\,  \big), {\textrm{ if } v \neq \sigma(t) \textrm{ and } v \neq \tau(t).} 
                                                      \end{cases}
  $$
Let $P\in L^2$. Let {$x_P\in \bound(P)$} be the base-point of $P$. Then:
$$ \big(\widehat{U^g_v} (\F^2)\big)( P )= \begin{cases}
                                                       g \trr \F^2 \big( P \big), \textrm{ if } v= x_P;\\
                                                         \F^2 \big( P \big), \textrm{ if } v \textrm{ is not the base-point } x_P \textrm{ of }   P.
                                                      \end{cases}$$
\noindent {\bf Action of edge gauge spikes:} Let $(\s(\gamma)\ra{\gamma}  \t(\gamma)) \in L^1$ and $e \in E$. Let $(\s(t) \ra{t} \tau(t)) \in L^1$, then, {cf. \eqref{conventions}:}
 $$ \big(\widehat{ U_\gamma^e} (\F^1)\big) ( \s(t) \ra{t} \tau(t) )= \begin{cases}
                                                       \d(e) \,\F^1 \big(  \s(t) \ra{t} \tau(t) \big), \textrm{ if the edges } (\s(t) \ra{t} \tau(t)) \textrm{ and }  (\s(\gamma)\ra{\gamma}  \t(\gamma)) \textrm{ are the same}; \\\
                                                        \F^1 \big(  \s(t) \ra{t} \tau(t)\,  \big), \textrm{ if }  (\s(t) \ra{t} \tau(t)) \neq (  \s(\gamma)\ra{\gamma}  \t(
                                                        \gamma) ).
                                                      \end{cases}
$$

Let $P\in L^2$. Let $x_P$ be the base-point of $P$.  Some bits of notation are necessary to describe  $\big(\widehat{ U_\gamma^e} (\F^1)\big) ( P).$
Let $\dL^Q(P)=\big(x_P \ra{t_1^{\theta_1}t_2^{\theta_2}...t_n^{\theta_n} }x_P\big)$ be the quantised boundary of $P$; Def.  \ref{qbp}. Put $t_i=(x_i \ra{t_i} y_i)$,  $i=1,\dots,n$.  Given that the attaching map $\psi_P^2\colon S^1 \to M^1$ of the 2-cell $\overline{c_P^2}=P$ is either constant or an embedding, an arbitrary edge $\gamma=(\s(\gamma)\ra{\gamma}  \t(\gamma))\in L^1$ can appear in the list $(x_i \ra{t_i} y_i),$ where $t \in \{1,\dots,n\}$, at most once.

\begin{definition}Let  $\dL^Q(P)=\big(x_P \ra{t_1^{\theta_1}t_2^{\theta_2}...t_n^{\theta_n} }x_P\big)$.
We say that  the edge $\gamma=(\s(\gamma)\ra{\gamma}  \t(\gamma))\in L^1$:
\begin{align*} \textrm{ is not incident to } {\bound(P)}, &\textrm{ if } (\s(\gamma) \ra{\gamma }\t(\gamma)) \textrm { is not in the list } (x_i \ra{t_i} y_i), i \in \{1,\dots,n\};\\
 \textrm{ is positively incident to } {\bound(P)}, &\textrm{ if }  \big( \s(\gamma) \ra{\gamma }\t(\gamma)\big)= (x_i \ra{t_i} y_i)  \textrm{, for some } i \in \{1,\dots ,n\}, \textrm{ and } \theta_i=1;\\
 \textrm{ is negatively incident to } {\bound(P)}, &\textrm{ if }  \big( \s(\gamma) \ra{\gamma }\t(\gamma)\big)= (x_i \ra{t_i} y_i)  \textrm{, for some } i \in \{1,\dots ,n\}, \textrm{ and } \theta_i=-1.
 \end{align*}
 \end{definition}
 Suppose $\psi_P^2\colon S^1 \to M^1$ is an embedding.  Given a vertex $x \in \bound(P)$, with $x \neq x_P$,  there exists a unique quantised path $x_P \ra{p^+(x)} x$ made from the  $(x_i \ra{t_i} y_i)^{\theta^i}$ and contouring $\bound(P)$ in the positive direction (where orientation is provided by the attaching map of $P$).  There is another quantised path  $x_P \ra{p^-(x)} x$ contouring $\bound(P)$ in the negative direction.  We also put $p^+({x_P})=\emptyset_{x_P}$ and $p^{-}({x_P})=\emptyset_{x_P}$.

 We let $g_{p^+(x)}$ and $g_{p^-(x)}$  be the 1D holonomy of $\F^1$ along $p^+(x)$ and $p^{-}(x)$; see Def. \ref{n1}.
And then:
$$\big(\widehat{ U^e_\gamma (\F^2)}\big)( P )= \begin{cases}  \F^2 \big( P \big), \textrm{ if } P \textrm{ attaches along a constant map};\\ 
                                                        \F^2 \big( P \big), \textrm{ if } \s(\gamma) \ra{\gamma }\t(\gamma) \textrm { is not incident to } {\bound(P)};\\
                                                        { \big(g_{p^+(\sigma(\gamma))} \trr e\big)\F^2 \big( P \big), \textrm{ if }  \g \textrm { is positively incident to } {\bound(P)};} \\
                                                        {\F^2 \big( P \big)\,\,\big(g_{p^-(\sigma(\gamma))} \trr e^{-1}\big), \textrm{ if } \g \textrm { is negatively incident to } {\bound(P)}.}
                                                      \end{cases}$$

\begin{example}Consider a 2-lattice decomposition $L$ that close to a plaquette $P$ looks like figure \ref{exampleaction}, below:
\begin{figure}[H]
\centerline{\relabelbox
\epsfysize 4cm
\epsfbox{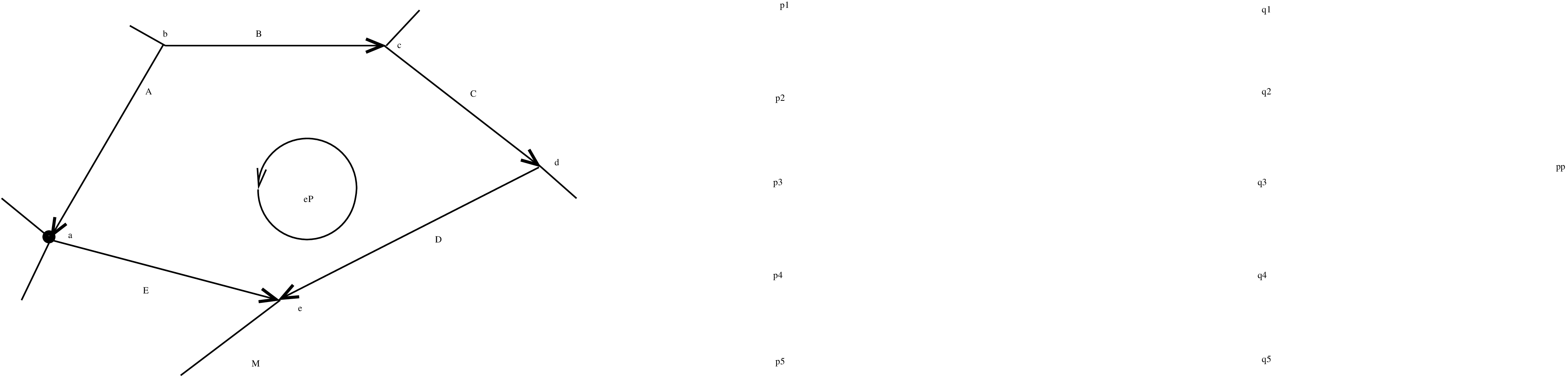}
\relabel{a}{$x_P$}
\relabel{b}{$x_1$}
\relabel{c}{$x_2$}
\relabel{d}{$x_3$}
\relabel{e}{$x_4$}
\relabel{A}{${t_1}$}
\relabel{B}{${t_2}$}
\relabel{C}{${t_3}$}
\relabel{D}{${t_4}$}
\relabel{E}{${t_5}$}
\relabel{eP}{$P$}
\relabel{p1}{$p^+(x_P)=\emptyset_{x_P}$}
\relabel{q1}{$p^-(x_P)=\emptyset_{x_P}$}
\relabel{p2}{$p^+(x_1)=t_5t_4^{-1}t_3^{-1}t_2^{-1}$}
\relabel{q2}{$p^-(x_1)=t_1^{-1}$}
\relabel{p3}{$p^+(x_2)=t_5t_4^{-1}t_3^{-1} $}
\relabel{q3}{$p^-(x_2)=t_1^{-1}t_2$}
\relabel{p4}{$p^+(x_3)=t_5t_4^{-1}$}
\relabel{q4}{$p^-(x_3)=t_1^{-1}t_2t_3$}
\relabel{p5}{$p^+(x_4)=t_5$}
\relabel{q5}{$p^-(x_4)=t_1^{-1}t_2t_3t_4$}
\relabel{M}{$\partial_L^Q(P)=t_5 t_4^{-1} t_3^{-1} t_2^{-1} t_1$}
\endrelabelbox}
\caption{A plaquette $P$ of a 2-lattice decomposition $L$. As indicated, the plaquette attaches counterclockwise. The base-point of $P$ is $x_P$. We also show the quantised paths $p^\pm(x_P),p^\pm(x_1),p^\pm(x_2),p^\pm,(x_3),p^\pm(x_4)$. \label{exampleaction}}
\end{figure}
\noindent The only edge and vertex gauge spikes which have a non-trivial action on the restriction of a fake-flat 2-gauge configuration $\F$ to $P$ are the vertex gauge spikes $\widehat{U_{x_P}^g},\widehat{U_{x_1}^g},\widehat{U_{x_2}^g},\widehat{U_{x_3}^g},\widehat{U_{x_4}^g}$, and the edge gauge spikes  $\widehat{U_{t_1}^e},\widehat{U_{t_2}^e},\widehat{U_{t_3}^e},\widehat{U_{t_4}^e},\widehat{U_{t_5}^e}$, {where $g\in G$ and $e \in E$ are arbitary. The actions} of some of these on a fake-flat 2-gauge configuration at $P$ are  below. (We can also see that fake-flatness at $P$ is preserved in these  examples.)
$$
\centerline{\relabelbox
\epsfysize 4.6cm
\epsfbox{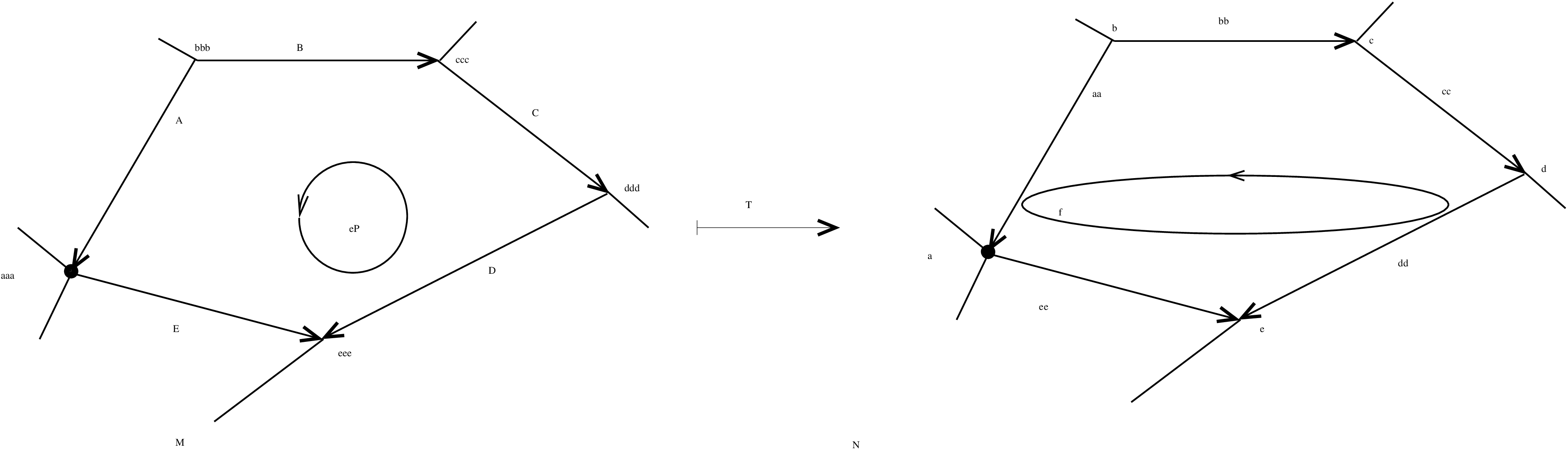}
 \relabel{a}{$x_P$}
  \relabel{aaa}{$x_P$}
 \relabel{A}{${g_1}$}
 \relabel{B}{${g_2}$}
 \relabel{C}{${g_3}$}
 \relabel{D}{${g_4}$}
 \relabel{E}{${g_5}$}
 \relabel{eP}{$e_P$}
 \relabel{M}{$\partial_{\cal G} (e_P)=g_5 g_4^{-1} g_3^{-1} g_2^{-1} g_1$}
 \relabel{aa}{${g_1g^{-1}}$}
 \relabel{bb}{${g_2}$}
 \relabel{cc}{${g_3}$}
 \relabel{dd}{${g_4}$}
 \relabel{ee}{${gg_5}$}
 \relabel{f}{$\quad \quad g \triangleright e_P$}
 \relabel{N}{$\partial_{\cal G} (g \trr e_P)=g g_5 g_4^{-1} g_3^{-1} g_2^{-1} g_1 g^{-1}$}
 \relabel{T}{$\widehat{U_{x_P}^g}$}
\endrelabelbox}
$$


$$
\centerline{\relabelbox
\epsfysize 4.6cm
\epsfbox{ex2.eps}
 \relabel{a}{$x_P$}
  \relabel{aaa}{$x_P$}
  \relabel{b}{$x_1$}
  \relabel{c}{$x_2$}
  \relabel{d}{$x_3$}
  \relabel{e}{$x_4$}
 \relabel{aaa}{$x_P$}
  \relabel{bbb}{$x_1$}
  \relabel{ccc}{$x_2$}
  \relabel{ddd}{$x_3$}
  \relabel{eee}{$x_4$}
  \relabel{A}{${g_1}$}
 \relabel{B}{${g_2}$}
 \relabel{C}{${g_3}$}
 \relabel{D}{${g_4}$}
 \relabel{E}{${g_5}$}
 \relabel{eP}{$e_P$}
 \relabel{M}{$\partial_{\cal G} (e_P)=g_5 g_4^{-1} g_3^{-1} g_2^{-1} g_1$}
 \relabel{aa}{${g g_1}$}
 \relabel{bb}{${g g_2}$}
 \relabel{cc}{${g_3}$}
 \relabel{dd}{${g_4}$}
 \relabel{ee}{${g_5}$}
 \relabel{f}{$ \quad \quad e_P$}
 \relabel{N}{$\partial_{\cal G} ( e_P)= g_5 g_4^{-1} g_3^{-1} (gg_2)^{-1} (gg_1)$}
 \relabel{T}{$\widehat{U_{x_1}^g}$}
\endrelabelbox}
$$


$$
\centerline{\relabelbox
\epsfysize 4.6cm
\epsfbox{ex2.eps}
 \relabel{a}{$x_P$}
  \relabel{aaa}{$x_P$}
 \relabel{A}{${g_1}$}
 \relabel{B}{${g_2}$}
 \relabel{C}{${g_3}$}
 \relabel{D}{${g_4}$}
 \relabel{E}{${g_5}$}
 \relabel{eP}{$e_P$}
 \relabel{M}{$\partial_{\cal G} (e_P)=g_5 g_4^{-1} g_3^{-1} g_2^{-1} g_1$}
 \relabel{aa}{${\d(e)g_1}$}
 \relabel{bb}{${g_2}$}
 \relabel{cc}{${g_3}$}
 \relabel{dd}{${g_4}$}
 \relabel{ee}{${g_5}$}
 \relabel{f}{$\,\,\,\, \big ( g_5g_4^{-1}g_3^{-1}g_2^{-1}\trr e\big)\,  e_P$}
 \relabel{N}{$\partial_{\cal G} ( \big ( g_5g_4^{-1}g_3^{-1}g_2^{-1}\trr e\big)\,\,  e_P)=g_5 g_4^{-1} g_3^{-1} g_2^{-1} \d(e) g_1 $}
 \relabel{T}{$\widehat{U_{t_1}^e}$}
\endrelabelbox}
$$


$$
\centerline{\relabelbox
\epsfysize 4.6cm
\epsfbox{ex2.eps}
 \relabel{a}{$x_P$}
  \relabel{aaa}{$x_P$}
 \relabel{b}{$x_1$}
 \relabel{c}{$x_2$}
 \relabel{d}{$x_3$}
 \relabel{e}{$x_4$}
 \relabel{A}{${g_1}$}
 \relabel{B}{${g_2}$}
 \relabel{C}{${g_3}$}
 \relabel{D}{${g_4}$}
 \relabel{E}{${g_5}$}
 \relabel{eP}{$e_P$}
 \relabel{M}{$\partial_{\cal G} (e_P)=g_5 g_4^{-1} g_3^{-1} g_2^{-1} g_1$}
 \relabel{aa}{${g_1}$}
 \relabel{bb}{${\d(e) g_2}$}
 \relabel{cc}{${g_3}$}
 \relabel{dd}{${g_4}$}
 \relabel{ee}{${g_5}$}
 \relabel{f}{$ \quad \quad e_P\,g_1^{-1} \trr e^{-1}$}
 \relabel{N}{$\partial_{\cal G} (e_P\,\,g_1^{-1} \trr e^{-1})=g_5 g_4^{-1} g_3^{-1} g_2^{-1} \d(e)^{-1}g_1 $}
 \relabel{T}{$\widehat{U_{t_2}^e}$}
\endrelabelbox}
$$


$$
\centerline{\relabelbox
\epsfysize 4.6cm
\epsfbox{ex2.eps}
 \relabel{a}{$x_P$}
  \relabel{aaa}{$x_P$}
 \relabel{A}{${g_1}$}
 \relabel{B}{${g_2}$}
 \relabel{C}{${g_3}$}
 \relabel{D}{${g_4}$}
 \relabel{E}{${g_5}$}
 \relabel{eP}{$e_P$}
 \relabel{M}{$\partial_{\cal G} (e_P)=g_5 g_4^{-1} g_3^{-1} g_2^{-1} g_1$}
 \relabel{aa}{${g_1}$}
 \relabel{bb}{${g_2}$}
 \relabel{cc}{${g_3}$}
 \relabel{dd}{${g_4}$}
 \relabel{ee}{${\d(e)g_5}$}
 \relabel{f}{$\quad \quad e \, e_P$}
 \relabel{N}{$\partial_{\cal G} (e\, e_P)=\d(e) g_5 g_4^{-1} g_3^{-1} g_2^{-1} g_1$}
 \relabel{T}{$\widehat{U_{t_5}^e}$}
\endrelabelbox}
$$
\end{example}

\begin{remark}{{We note that if $L$ is a triangulation of $M$, then the vertex and edge gauge spikes here defined coincide with the vertex and edge gauge transformations appearing in \cite[III-A \& III-B]{BCKMM}.} }\end{remark}
\subsubsection{Vertex operators, edge operators and blob operators}\label{veb}
{Given a set $X$, we put $\# X$ to denote the cardinality of $X$.}
\begin{definition}[Vertex operators $\widehat{ A_v}$ and edge operators $\widehat{B_t}$]
Let $v\in L^0$ be a vertex and $t\in L^1$ be an edge, { of $(M,L)$}. The elements below of  the group algebra $\CC \TcM$  are called vertex and edge operators:
\begin{align} \label{de:AvBt}
   A_v   &\doteq \frac{1}{\# G} \sum_{g \in G}  U^g_v  ,
\qquad\hspace{.65in}
   B_t   \doteq \frac{1}{\# E} \sum_{e \in E}  U^e_t.   
\end{align}
The corresponding operators $\widehat{ A_v},\widehat{B_t}  \colon \mathcal{H}(M,L,\mathcal{G}) \to \mathcal{H}(M,L,\mathcal{G})$ are also called vertex and edge operators.
\end{definition}
Note that the operators $\widehat{ A_v},\widehat{B_t}  \colon \mathcal{H}(M,L,\mathcal{G}) \to \mathcal{H}(M,L,\mathcal{G})$ {are all self-adjoint. This is because, if $v \in L^0$:}
\begin{align*}
  \widehat{A_v}^\dagger=\frac{1}{\# G} \sum_{g \in G}  \widehat{U^g_v}^{\dagger} =\frac{1}{\# G} \sum_{g \in G}  \widehat{U^{g^{-1}}_v}=\frac{1}{\# G} \sum_{g \in G}  \widehat{U^{g}_v}=\widehat{A_v},\end{align*}
and analogously for  $\widehat{B_t}$.

From Lem. \ref{comm1} we have the following.
\begin{lemma}\label{comm2}
Given arbitrary vertices $u,v \in L^0$ and edges $s,t\in L^1$ we have:
\begin{align*}
&A_v A_v=A_v, & B_t B_t = B_t,\\
&[A_v,A_u]=0, & [A_v,B_t]=0, && [B_t,B_s]=0.
\end{align*}
These  relations also hold for $\widehat{ A_v}$ and $\widehat{B_t}$, since  $\CC \TcM$ acts on $\mathcal{H}(M,L,\mathcal{G})$.
\qed
\end{lemma}

\newcommand{\End}{{\mathrm End}}   
{Let $b\in L^3$ be a 3-cell (i.e a blob; Rem. \ref{blob}).  Cf. \S \ref{2flatn},  by definition of 2-lattices, $\bound(b)$ is a subcomplex of $(M,L)$ homeomorphic to $S^2$, with a base-point $x_b=\base(b)$, which is a 0-cell, and an orientation. We can thus consider the 2D holonomy {${\rm Hol}^2_{\base(b)}(\bound(b),\F,L)\in \ker(\dG)\subset E$}, of  $\F \in \PhiM$ along $\bound(b)\cong S^2$.}
\begin{definition}[Blob operator] Let  $b\in L^3$. Let also {$a \in \ker(\dG) \subset E$.}
The diagonal idempotent blob operator
{$C_b^a \colon \mathcal{H}(M,L,\mathcal{G}) \to \mathcal{H}(M,L,\mathcal{G})$}
is given by the formula below, for each basis element $\F \in \PhiM$:
$$
{C_b^a(\F)=\delta\Big({\rm Hol}_{\base(b)}^2\big(\F,\bound(b),L\big),a\Big)\F.}
$$
Here if $e,e' \in E$ we put $\delta(e',e)=1$, if $e=e'$ and $\delta(e,e')=0$, if $e\neq e'.$
\end{definition}
\noindent{{See \cite[III-B]{BCKMM} and Ex. \ref{tet}, for the  explicit form of blob operators when our lattice $L$ is a triangulation of $M$.}}

 Since fake-flat 2-gauge configurations form an orthonormal basis of $\mathcal{H}(M,L,\mathcal{G})$, it is easy to see that each {$C_b^a \colon \mathcal{H}(M,L,\mathcal{G}) \to \mathcal{H}(M,L,\mathcal{G})$} is a self-adjoint operator.

As an immediate application of Thm. \ref{main5}, follows:
\begin{lemma}\label{comm3}
{Let $v\in L^0$, $t \in L^1$ and $b,b'\in L^3$ {(it may be that $b=b'$)}. Let $g\in G$ and $e \in E$. Let $a,a' \in \ker(\dG) \subset E$.} We have:

\begin{align*}
 &[\widehat{U_t^e},C_b^a]=0, && [C_{b'}^{a'},C_b^a]=0, & C_{b}^{a}C_{b}^{a'}=\delta(a,a')C_{b}^{a},\\
 & v \neq \base{(b)} \implies {[\widehat{U_v^g},C_b^a]=0},  && v =\base{(b)} \implies C_b^a\,\widehat{ U_v^g} =\widehat{U_v^g}  \, C_b^{g^{-1} \trr a}.
\end{align*} 
Hence edge gauge-spikes  $\widehat{U_t^e}$ always commute with blob operators $C_b^a$, regardeless of $t$ being an edge in $b$, or not. {A vertex gauge-spike $\widehat{U_v^g}$  commutes with a blob operator $C_b^a$, unless $v$ is the base point $\base(b)$ of $b$.}
\end{lemma}
\subsubsection{{The local operator algebra of higher lattice gauge theory}}\label{loca}
The algebra ${\cal OP}(M,L,\Gc)$, which underpins the construction of the higher Kitaev model in \ref{thkm}, is our proposal for the local operator algebra of higher lattice gauge theory.
\begin{definition}[{Local operator algebra for  higher lattice gauge theory}] \label{loa}{Let $(M,L)$ be a 2-lattice. Let $\Gc=(\dG\colon E \to G,\trr)$ be a crossed module of finite groups. We define the ${\mathbb{C}}$-algebra ${\cal OP}(M,L,\Gc)$ as formally generated by the}
\begin{align*}
 \widehat{U_v^g}, \quad &v \in L^0,\quad g \in G; & \\
 \widehat{U_t^e}, \quad &t =(\sigma(t) \ra{t} \tau(t)\big) \in L^1, \quad  e \in E; \\
C_b^a, \quad &b \in L^3, \quad  a \in \ker(\dG);
\end{align*}
{imposing} the relations appearing in Lem. \ref{comm1} and \ref{comm3}. 
\end{definition}
 Note that  ${\cal OP}(M,L,\Gc)$ is a *-algebra, where: 
\begin{align*}
&\widehat{U_v^g}^\dagger=\widehat{U_v^{g^{-1}}}, && \widehat{U_t^e}^{\dagger}= \widehat{U_t^{e-1}},&& (C_b^a)^{\dagger}=C_b^a.
\end{align*} 
{Given the discussion in \ref{exp} and \ref{veb}, we hence  have a unitary representation of  ${\cal OP}(M,L,\Gc)$ on the Hilbert space $\mathcal{H}(M,L,\mathcal{G})$ of higher lattice gauge theory.}

\subsubsection{{The higher Kitaev model for (3+1)-dimensional topological phases }}\label{thkm}
We now propose a  higher gauge theory version (the ``higher Kitaev
model'') of Kitaev quantum-double model for (2+1)-dimensional
topological phases of matter
\cite{kitaev2003fault,hu2013twisted}. This higher Kitaev model for
(3+1)-dimensional topological phases is formulated for manifolds
$M$, of any dimension, with a 2-lattice decomposition $L$; see
Def. \ref{2-lattice}.
{For a description of higher Kitaev model in
the particular case of triangulated manifolds we refer the reader
to \cite{BCKMM},
{and to \cite{williamson2016hamiltonian}, in a more general context}. Topological phases protected by higher gauge symmetry are also proposed in \cite{C}.} 
\begin{definition}[{Higher} Kitaev model]\label{hkm}
{(Cf. the notation in \ref{veb}).} {Let ${\cal G}=(\dG\colon E \to G,\trr)$ be a finite crossed module of groups. {Let $M$ be a compact topological manifold, of any dimension, with a 2-lattice decomposition $L$.}
Our proposal for a totally solvable (the sum of mutually commuting projection operators) higher lattice gauge theory
Hamiltonian, {which we call the ``higher Kitaev model''}:}
$$
H_L\colon\H(M,L,\Gc)\to \H(M,L,\Gc)
$$
(where $\H(M,L,\Gc)$ is as in Def. \ref{hs}), with respect to the 2-lattice {$(M,L)$} is {(where $1_E$ is the identity of $E$)}:
\begin{equation}\label{hamiltonian}
\begin{split}
 H_L&=\sum_{v \in L^0} \left ({\rm id} -\widehat{ A_v} \right) +\sum_{t \in L^1} \left ( {\rm id}-\widehat{B_t} \right) +  \sum_{b \in L^3} \left ({\rm id} -C_b^{1_E}  \right)\\
&=\sum_{v \in L^0}  { \A_v} +\sum_{t \in L^1} {\B_t} +  \sum_{b \in L^3}  \C_b   \\
&={\cal A} + {\cal B} + \C.
 \end{split}
\end{equation}
 The commutation relations {of} Lem. \ref{comm1}, \ref{comm2} and \ref{comm3}, ensure that, if $u,v \in L^0$, $t,s \in L^1$ and $b,c \in L^3$:
\begin{align}
&\A_v \A_v=\A_v, & \B_t \B_t = \B_t, && \C_b \C_b = \C_b,\nonumber\\
&[\A_v,\A_u]=0,  &[\B_t,\B_s]=0, &&[\C_b,\C_c]=0,  \label{a1}\\ 
& [\A_v,\B_t]=0, & [\A_v,\C_b]=0, && [\B_t,\C_b]=0, \nonumber
\end{align}
{(Observe that these relations also hold for the  $\widehat{ A_v}$, $\widehat{B_t}$ and $C_b^{1_E}$.)}
And moreover we have that:
\begin{align}
& [{\cal A},{\cal B}]=0,  & [{\cal A},\C]=0,&& [{\cal B},\C]=0. \nonumber\\
&{\cal A}^2={\cal A}, &{\cal B}^2={\cal B} &&{\cal C}^2={\cal C}.
\end{align}
Note that by construction each term {${\cal A}_v,{\cal B}_t, {\cal C}_b\colon \H(M,L,\Gc)\to \H(M,L,\Gc)$} is Hermitian, hence so is $H_L$.
\end{definition}
\noindent {Typically $M$ will be a 3-dimensional manifold, and the higher Kitaev model should be considered to be a model for (3+1)-dimensional topological phases \cite{WalkerWang,williamson2016hamiltonian,wan2015twisted,BCKMM,Simon,kong2014braided,SimonPhase,C,D}. The higher Kitaev model also makes sense if $M$ is a surface, but in this case  blob operators $C^{1_E}_b$ will not appear in the model.}

\medskip
{Note that vertex and edge operators, which implement gauge
invariance at vertices and edges of a 2-lattice, and the blob operators, which enforce 2-flatness at a blob, are very different in nature.}

\begin{lemma}\label{defofzeta}
Let $(M,L)$ be a 2-lattice. Let {$\Gc=(\dG \colon E \to G,\trr)$ be a finite crossed module of groups}.
Let  $\F\in \PhiM$.   Let
$$
\zeta_L(\F)=\#\{ b \in L^3: {\rm  Hol}_{\base(b)}^2\big(\F,\bound(b),L\big)\neq 1_E\}\in \mathbb{Z}^+_0.
$$
 Then $\zeta(\F)$ is invariant under full gauge transformations and in
 particular it is invariant under {the action of} vertex and edge gauge
 {operators}. Moreover $\C (\F)=\zeta(\F) \F$.
\end{lemma}
\begin{proof}
 The first bits follow from Thm. \ref{main5}: given $b\in L^3$, then ${\rm  Hol}_{\base(b)}^2\big(\F,\bound(b),L\big)$ is invariant under the action of gauge operators, up to acting by an element of $G$, which acts on $E$ by automorphisms. On the other hand, the fact that $\C (\F)=\zeta(\F) \F$ follows from the definition of {$\C\colon\H(M,L,\Gc)\to \H(M,L,\Gc)$}.\end{proof}

\subsubsection{{Example: higher gauge theory in the 3-sphere}}

Let us give an explicit description of the higher Kitaev model, if the underlying manifold is $S^3$. We  consider two different 2-lattice decompositions of $S^3.$ {Let ${\cal G}=(\d \colon E \to G,\trr)$ be a finite crossed module of groups.}

\medskip

 \noindent {\bf First case: $(S^3,L_0)$}\\ Cf. Ex. \ref{LandL0} and \ref{L0}. Consider the 3-sphere $S^3$ with the lattice decomposition $L_0$, with a unique $0$-cell, no 1-cells, one 2-cell (thus the 2-skeleton is $S^2$) and two blobs, attaching on each side of the 2-sphere. 
 By Ex. \ref{L0}, the Hilbert space $\H(S^3,L_0,\Gc)$ is thus isomorphic to $\mathbb{C}\ker(\partial)$, the  vector space generated by the orthonormal basis $\ker(\partial)\subset E$. The 2D holonomy along the 2-sphere of a fake-flat 2-gauge  configuration associated with $m \in \ker(\partial)$ is $m$ itself:  ${\rm Hol}^2_v(m,S^2,L_0)= m$.

 For this 2-lattice we have no edge operators. The higher Kitaev Hamiltonian $H_{L_0}\colon \H(S^3,L_0,\Gc) \to \H(S^3,L_0,\Gc)$ therefore has the form: 
$H_{L_0}=\A_{L_0} +  \C_{L_0}$, where:
\begin{align}
\A_{L_0} m&
    =m-\frac{1}{\#G}\sum_{a \in G} ( a \trr m), \\
\C_{L_0} m&=m- \delta(m,1_E)\, m. 
\end{align}
\medskip

\noindent {\bf Second case: $(S^3,L_{\mathfrak{g}})$}\\
For a more substantial example, let us give $S^3$ the 
2-lattice decomposition:
$L_{\mathfrak{g}}=(\{ v\}, \{ t \}, \{ P,P'\}, \{b,b' \})$ 
of Ex. \ref{LandL0}. A  2-gauge configuration is given by a $g=g_t \in G$ and a pair $(e=e_P,f=e_{P'})\in E \times E$. The fake flateness condition enforces that $\d(e)=\d(f)=g$.
Therefore, we have:
$$
\H(S^3,L_{\mathfrak{g}},\Gc)
   \; =  \;\; \CC  
\{(g,e,f) \in G \times E \times E \;
               \colon \; \d(e)=g \textrm{ and } \d(f)=g\}.
$$
The 2D holonomy of
a configuration
$\F = (g,e,f)$, along the
2-sphere $S^2 \subset S^3$, based at $v$, is:  
$$
{\rm Hol}^2_v(\F,S^2,L_{\mathfrak{g}})= e^{-1}f\in \ker(\partial).
$$
The vertex and edge gauge spikes on $v$ and $t$, and the blob operators along $b$ and $b'$, have the form:
\begin{align}
 \widehat{U_v^a}(g,e,f)&=\left ( aga^{-1}, a \trr e, a \trr f \right),\\
 \widehat{U_t^k}(g,e,f)&=\left ( \d(k)g, k  e, k f \right),\\
{C^{k'}_b}(g,e,f)&=\delta(k',e^{-1}f)(g,e,f),\\
{C^{k'}_{b'}}(g,e,f) &=\delta(k', e^{-1}f)(g,e,f).
 \end{align} 
{(Cf. \S \ref{exp}.)}
Here $a \in G$, $k\in E$ and $k' \in\ker(\partial) \subset E$. Note that $C^{k'}_b=C^{k'}_{b'}$, for each $k' \in \ker(\partial)$.

The commutation relations of Lem. \ref{comm1}
and \ref{comm3}  
here boil down to:
\begin{align*}
\widehat{U^a_v} \widehat{U^{a'}_v}&=\widehat{U^{aa'}_v}, &\textrm{ where }\;\; a,a'\in G; &&
\\ 
\widehat{U^k_t} \widehat{ U^{l}_t}&=\widehat{U^{kl}_t}, &\textrm{ where }\;\; k,l \in E; &&
\\ 
\widehat{U^k_t}\widehat{ U^a_{v}} &=\widehat{U^a_{v}}  \widehat{ U^{a^{-1} \trr e}_t}, & \textrm{ where } \;\; k \in E, \; a \in G; &&
\\
C_b^{k'} \widehat{U_v^a} &=\widehat{U_v^a}  C_b^{a^{-1} \trr k'}, & \textrm{ where }  \;\; k' \in \ker(\partial), \; a \in G; &&
\\
C_b^{k'} \widehat{U_t^l} &=\widehat{U_t^l}  C_b^{k'}, & \textrm{ where } \;\; l \in E,  \; k' \in \ker(\partial); &&\\
C_b^{k'} C_b^{k''}&=\delta(k',k'')C_b^{k'},&  \textrm{ where } \; k',k'' \in \ker(\partial). &&
\end{align*}
{This gives the local operator algebra ${\cal OP}(S^3,L_{\mathfrak{g}},\Gc)$; see Def. \ref{loa}.}

The  higher Kitaev hamiltonian $H_{L_{\mathfrak{g}}}\colon \H(S^3,L_{\mathfrak{g}},\Gc) \to \H(S^3,L_{\mathfrak{g}},\Gc)$ has the form 
$H_{L_{\mathfrak{g}}}=\A + \B+ \C$, where:
\begin{align*}
\A (g,e,f)&
    =(g,e,f)-\frac{1}{\#G}\sum_{a \in G} (aga^{-1}, a \trr e, a \trr f), 
\\
\B (g,e,f)&=(g,e,f)-\frac{1}{\#E}\sum_{k \in E} (\d(k) g, k e, k f), 
\\
\C (g,e,f)&=(g,e,f)- \delta(e^{-1}f,1_E)\, ( g,e, f). 
\end{align*}
\subsubsection{{Comparison with the Kitaev model}}\label{comparison}
Though constructed in a similar way, the {higher} Kitaev model and the Kitaev model \cite{kitaev2003fault} {(also known as Kitaev quantum double model)} are subtly different constructions. In the following, we will demonstrate that a subspace of the Kitaev model is equivalent to a class of {higher} Kitaev models, while differing in the whole Hilbert space. 

In the language of this paper, the Kitaev model takes as input: a 2-lattice $(M,L)$ and a finite group $G$, realising a lattice model with local operator algebra, {which is at  the base point of each plaquette isomorphic to $\mathcal{D}(G)$,} the quantum double of $G$ \cite{kitaev2003fault}. The Hilbert space $\H_K\big(M,L,G\big)$ of the Kitaev model is the free vector space on the set of gauge configurations $\F^1\colon L^1 \to G$; {see Def. \ref{gconfig}}. Considering the group $G$ as the crossed module, $(1 \to G)$ (see Ex. \ref{gtxm}), it follows:
$$\H_K\big(M,L,G\big)=\H\big (M^1,L,(1 \to G)\big).$$
{Here $M^1$ is the 1-skeleton of $(M,L)$}. Note the use of $M^{1}$ as opposed to $M$, {so that the fake-flatness condition becomes void}. The Kitaev model is defined by the Hamiltonian
$$
H^{K}_{L}=\A+\mathcal{D}:\mathcal{H}_{K}\big(M,L,G\big)\rightarrow \mathcal{H}_{K}\big(M,L,G\big).
$$
Here the operator $\A=\sum_{v \in L^0} (\id - \widehat{A_v})$ is as in \ref{veb}, defined for the crossed module $(1\to G)$, with action on $\mathcal{H}_{K}\big(M,L,G\big)$ {given by} the action on $\H\big (M^1,L,(1 \to G)\big)$. Whereas, $\mathcal{D}=\sum_{P \in L^2}(\id-D^{1_{E}}_{P})$ is {defined from  a new type of self-adjoint operators $D^g_{P}$, which act} on gauge configurations $\F^{1}:L^{1}\rightarrow G$ as follows {(for notation see Def. \ref{holalongcircle}):}
$${D^g_{P}(\F^1)=\delta\Big({\rm Hol}_{\base(P)}^1\big(\F^1,\bound(P),L\big),g\Big)\F^1, \textrm{ where } \base(P) \textrm{ is the basepoint of } P.}$$
\begin{lemma}
	Given $v,v'\in L^{0}$ and $P,P'\in L^{2}$ the following relations hold:
	\begin{align}
	[\A_{v},\A_{v'}]=&0,\qquad \A_{v}\A_{v}=\A_{v};\nonumber\\
	[D^{1_{E}}_{P},D^{1_{E}}_{P'}]=&0,\qquad D^{1_{E}}_{P}D^{1_{E}}_{P}=D^{1_{E}}_{P};\nonumber\\
	[\A_{v},D^{1_{E}}_{P}]=&0.\nonumber
	\end{align}
	Hence $H^{K}_{L}$ is a sum of mutually commuting projection operators.
\end{lemma}

We now compare the Kitaev model with group $G$ to the higher Kitaev model with crossed module $(1 \to G)$, with fixed 2-lattice $(M,L)$. We begin by defining the flat sub-Hilbert space of the Kitaev model $\mathcal{H}^{\text{flat}}_{K}(M,L,G)\subsetneq \mathcal{H}_{K}(M,L,G)$:
$$
\mathcal{H}^{\text{flat}}_{K}(M,L,G)=\{
\F^{1}\in \mathcal{H}_{K}(M,L,G)\quad|\quad \prod_{P\in L^{2}}D^{1_{E}}_{P}(\F^{1})=\F^{1}
\}.
$$
It is straightforward to show
$$
\mathcal{H}^{\text{flat}}_{K}(M,L,G)=\mathcal{H}(M,L,(1\to G))\subsetneq \mathcal{H}_{K}(M,L,G).
$$
This is due to the requirement of fake-flat 2-gauge configurations of $(1\to G)$ on $(M,L)$ being an equivalent condition to requiring $\prod_{P\in L^{2}}D^{1_{E}}_{P}(\F^1)=\F^1$.

For the crossed module $(1\to G)$, both the blob and edge operators act as the identity. In this way the {higher} Kitaev Hamiltonian reduces to:
\begin{align*}
H_{L}=\A \colon &\H\big (M,L,(1 \to G)\big) \to  \H\big (M,L,(1 \to G)\big).
\end{align*}
This model is equivalent to the Kitaev model defined on the flat sub-Hilbert space $\H^{\text{flat}}_{K}\big (M,L,G\big)$:
\begin{align*}
H_{K}=\A+\D\colon&\H^{\text{flat}}_{K}\big (M,L,G\big) \to  \H^{\text{flat}}_{K}\big (M,L,G\big).
\end{align*}
This is because, by definition, the operator $\D$ has trivial action on $\H^{\text{flat}}_{K}\big (M,L,G\big)=\H\big (M,L,(1 \to G)\big)$, while the $\A$ operator has the same action on both Hilbert spaces. In this way we can identify the higher Kitaev model for $(1\to G)$ with the Kitaev model defined on the sub-Hilbert space $\H^{\text{flat}}_{K}\big (M,L,G\big)$. However, the Kitaev model diverges from the higher Kitaev model for $(1\to G)$ outside of the sub-space $\H^{\text{flat}}_{K}\big (M,L,G\big)$ due to the presence of non-fake-flat configurations in $H_{K}\big(M,L,G \big)$.

\subsection{{Ground state degeneracy}}\label{gsd}
 Let $M$ be a compact manifold and {$\Gc=(\d \colon E \to G,\trr)$ be a finite group crossed module}. Let $L$ be a 2-lattice decomposition of $M$. Hence $\H(M,L,\Gc)$ is a finite dimensional Hilbert space, which explicitly depends on the  2-lattice decomposition $L$ of $M$.
In this subsection, we prove that the dimension of the ground
state space $GS(M,L,\Gc)$ of the higher Kitaev model $H_L\colon \H(M,L,\Gc) \to \H(M,L,\Gc)$ in \ref{thkm}  is a topological
{invariant} of $M$, meaning that $\dim GS(M,L,\Gc)$ depends only on $M$ alone. Specifically, we  will show that $GS(M,L,\Gc)$ has a basis in canonical one-to-one correspondence  with the set of homotopy classes of maps from $M$ to the
classifying space $B_\Gc$ of the  crossed module $\Gc$. (Classifying spaces of crossed modules are defined in \cite[\S 2.4]{brown_higgins_sivera} and \cite{brown_classifying,brown_hha,martins_porter,martins_cw_complex}.)
It therefore
follows that the dimension of the ground state {space} $GS(M,L,\Gc)$ is a homotopy
invariant of manifolds, as expected given the relation \cite{BCKMM} of our model
to Yetter's invariant of manifolds \cite{yetter_tqft,porter_tqft}.
Yetter's invariant was proven in \cite{martins_porter,martins_cw_complex} to be a
homotopy invariant of manifolds.

 {This subsection is less self-contained than the remainder of the paper. Cf. Rems. \ref{bh2} and \ref{bh3}. Let $X$ be a CW-complex.
 We use  deep results of Brown and Higgins on the description of the weak homotopy type of the function space $TOP(X,B_{\cal X})$, where ${\cal X}$ is a crossed complex (a generalisation of crossed modules) and  $B_{\cal X}$ is its classifying space; the bits we need can be found in \cite[Thm. A]{brown_classifying} and \cite[Thm. 11.4.19]{brown_higgins_sivera}. Let  $\Pi(X)$ denote the fundamental crossed complex of $X$. The main tool we use is the fact that the weak homotopy type of $TOP(X,B_{\cal X})$, with the $k$-iffication of the compact-open topology {on the space of continuous maps $X \to B_{{\cal X}}$}, is represented by the crossed complex ${\rm CRS}(\Pi(X),{\cal X})$; an explanation of this is in \cite[\S 2.6.1]{martins_porter}. The crossed complex ${\rm CRS}(\Pi(X),{\cal X})$ is made of crossed complex maps $\Pi(X) \to {\cal X}$ and  $n$-fold homotopies, $n\in\N$.}

{For a crossed module $\Gc$,  the underlying groupoid of the crossed complex ${\rm CRS}(\Pi(X),\Gc)$ is the groupoid   ${\rm CRS}_1(\Pi_2(X,X^1,X^0),\Gc)$ of crossed module maps $\Pi_2(X,X^1,X^0) \to \Gc$} and their homotopies, referred to in Rem \ref{bh}, \ref{bh2} and \ref{bh6}; see \cite[\S 7.1.vii and \S 9.3.i]{brown_higgins_sivera}. Combining with Thm \ref{2-flat-conf-theorem}, it hence follows that if $(M,L)$ is a 2-lattice, then the underlying groupoid of ${\rm CRS}(\Pi(X),\Gc)$ is isomorphic to the groupoid  $\Theta^\#_{{\rm flat}}(M,L,\Gc)$ of 2-flat 2-gauge configurations and full gauge transformations between them: see \ref{2flatgroup}.

 {Given a 2-lattice $(M,L)$, the results of \cite{brown_classifying} and \cite[\S 11.4]{brown_higgins_sivera} hence tell us that path-connected components of the function space $TOP(M,B_\Gc)$ -- i.e. homotopy classes of maps $M \to B_\Gc$ -- are in canonical one-to-one correspondence with connected components of the groupoid $\Theta^\#_{{\rm flat}}(M,L,\Gc)$ of {2-flat} 2-gauge configurations and full gauge transformations between them;  see \S\ref{2flatgroup}. {This  will be the main tool used in this subsection.}}

\medskip

{Let us now connect the discussion in the previous paragraphs with the ground state degeneracy of the higher Kitaev model.} We start by looking at the  expression \eqref{hamiltonian} for $H_L\colon \H(M,L,\Gc) \to \H(M,L,\Gc)$:
$$
H_L=\sum_{v \in L^0} \left ({\rm id} -\widehat{ A_v} \right) +\sum_{t \in L^1}
\left ( {\rm id}-\widehat{B_t} \right) +  \sum_{b \in L^3} \left ({\rm id}
-C_b^{1_E}  \right).
$$
 Each of the operators  ${\rm id} -\widehat{ A_v}$, where $v \in L^0$ is a
 vertex;
${\rm id} - \widehat{B_t}$, where $t \in L^1$ is an edge; and
${\rm id}-C_b^{1_E}$, where $b \in L^3$ is a blob, is a Hermitian projector.
All of those projectors commute. We can choose an
 eigenspace decomposition $\H(M,L,\Gc)=\sum_{i}^\perp \H_i(M,L,\Gc)$ {with respect to which}  all
 those projectors are diagonal.

If we apply $\widehat{ A_v}$ or $\widehat{B_t}$ to  $\F\in \Theta(M,L,\Gc)$, we get a
non-zero linear combination of fake-flat 2-gauge configurations with non-negative coefficients.
Let us write $\Hc^+$ for the subset of non-zero
$\RR_{\geq 0}$-linear combinations in {$\Hc=\H(M,L,\Gc) $.}
{Recall the naive vacuum
$\vac$ is given by $\F(x) = 1$ for all cells $x$; see \S\ref{ff2gc} and Ex. \ref{nv}.}
Since $\widehat{ A_v}$ and $\widehat{B_t}$ both take an $\F$ to an element of $\Hc^+$,
and indeed take   an element of $\Hc^+$ to  an element of $\Hc^+$,
we have that
$\Psi_{0} = \prod_v \widehat{ A_v} \prod_t \widehat{B_t} \vac \in \Hc$ is non-zero.  Since $C_b^{1_E} \vac = \vac$, for each $b \in L^3$, it follows that there is an $H_L$-eigenspace
of $\Hc$  (containing $\Psi_{0}$) with eigenvalue 0. {(Note that $\Psi_0$ is in the kernel of all of the ${\rm id} -\widehat{ A_v}$,  ${\rm id} - \widehat{B_t}$ and
 ${\rm  id}-C_b^{1_E}$, where $v\in L^0$, $t \in L^1$ and $b \in L^3$, given the commutation relations in \eqref{a1}.)}

Now projectors have eigenvalues $0$ or $1$,
thus the ground state has energy
zero, meaning: $$GS(M,L,\Gc)=\{\Psi\in \PhiM: H_L \Psi=0\}.$$ And, furthermore,  a vector
 belongs to the ground state $GS(M,L,\Gc)$ if, and only if, it is in the kernel of
 all of the projectors ${\rm id} -\widehat{ A_v}$,  ${\rm id} - \widehat{B_t}$ and
 ${\rm  id}-C_b^{1_E}$, where $v\in L^0$, $t \in L^1$ and $b \in L^3$.

\begin{lemma}\label{tech}
 A state {$\displaystyle\Psi=\sum_{\F \in \Theta(M,L,\Gc)} \lambda_\F
   \F\in \H(M,L,\Gc)$,}
 where $\lambda_\F\in \mathbb{C}$, is in $GS(M,L,\Gc)$  if and only if:
 \begin{enumerate}[(i)]
  \item unless $\F$ is 2-flat then $\lambda_\F=0$; see Def. \ref{2flatconf} for the definition of a 2-flat configuration;
  \item given any $g \in G$, any vertex $v \in L^0$ and any
$\F \in \Theta(M,L,\Gc)$, it holds $\l_{\F}=\l_{\widehat{U_v^g}( \F)}$;
  \item given any $e \in E$, any edge $t \in L^1$ and any
$\F \in \Theta(M,L,\Gc)$, it holds $\l_{\F}=\l_{\widehat{U_t^e}(\F)}$.

  \end{enumerate}
\end{lemma}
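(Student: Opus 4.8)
\emph{Sketch of proof.} The plan is to translate membership in the ground state space into three separate conditions, one for each family of commuting Hermitian projectors appearing in $H_L={\cal A}+{\cal B}+{\cal C}$, and then to match each condition with one of (i)--(iii). As observed just before the lemma, since $H_L$ is a sum of commuting Hermitian projectors its kernel --- the ground state space, of energy $0$ --- is exactly the common kernel of the projectors $\mathrm{id}-A_v$ ($v\in L^0$), $\mathrm{id}-B_t$ ($t\in L^1$) and $\mathrm{id}-C_b^{1_E}$ ($b\in L^2$); equivalently, $\Psi$ is a ground state if and only if $A_v\Psi=\Psi$ for all $v$, $B_t\Psi=\Psi$ for all $t$, and $C_b^{1_E}\Psi=\Psi$ for all $b$. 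I would then prove the equivalences ``$C_b^{1_E}\Psi=\Psi\ \forall b \Leftrightarrow$ (i)'', ``$A_v\Psi=\Psi\ \forall v \Leftrightarrow$ (ii)'' and ``$B_t\Psi=\Psi\ \forall t \Leftrightarrow$ (iii)'' in turn.

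For the blob operators the point is that $C_b^{1_E}$ is diagonal in the configuration basis: $C_b^{1_E}\F=\F$ when $\mathrm{Hol}_{\base(b)}(\F,\d(b))=1_E$, and $C_b^{1_E}\F=0$ otherwise. Hence $C_b^{1_E}\Psi=\sum_{\F}\lambda_\F\,\delta\big(\mathrm{Hol}_{\base(b)}(\F,\d(b)),1_E\big)\,\F$, and comparing coefficients with $\Psi$ shows that $C_b^{1_E}\Psi=\Psi$ is equivalent to $\lambda_\F=0$ for every $\F$ with $\mathrm{Hol}_{\base(b)}(\F,\d(b))\neq 1_E$; ranging over all $b\in L^3$ this is precisely condition (i). (Alternatively one may invoke Lemma~\ref{defofzeta}, giving ${\cal C}\Psi=\sum_\F\lambda_\F\,\zeta(\F)\,\F$, so that ${\cal C}\Psi=0$ iff $\lambda_\F=0$ on every configuration that is not $2$-flat.)

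For the vertex operators the key structural observation is that $g\mapsto U_v^g$ is a left action of $G$ on the set $\Phi(M,L)$ of fake-flat configurations --- this uses $U_v^g U_v^h=U_v^{gh}$ from Lemma~\ref{comm1}(IV), together with the fact that vertex gauge spikes preserve fake-flatness, so that each $U_v^g$ is a bijection of $\Phi(M,L)$ --- and that $A_v=\tfrac{1}{\#G}\sum_{g\in G}U_v^g$ is the associated group-averaging operator. Then $U_v^h A_v=A_v$ for every $h\in G$ (reindex the sum $g\mapsto hg$). Consequently, if $A_v\Psi=\Psi$ then $U_v^h\Psi=U_v^h A_v\Psi=A_v\Psi=\Psi$ for all $h$; writing $U_v^h\Psi=\sum_\F\lambda_\F\,(U_v^h\trr\F)=\sum_{\F'}\lambda_{U_v^{h^{-1}}\trr\F'}\,\F'$ and comparing coefficients with $\Psi$ gives $\lambda_{\F'}=\lambda_{U_v^{h^{-1}}\trr\F'}$ for all $\F'$ and all $h$, which is (ii). Conversely, (ii) is exactly the statement that $U_v^g\Psi=\Psi$ for all $g$ and $v$, whence $A_v\Psi=\tfrac{1}{\#G}\sum_g U_v^g\Psi=\Psi$. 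The argument for ``$B_t\Psi=\Psi\ \forall t \Leftrightarrow$ (iii)'' is word-for-word the same, with $e\mapsto U_t^e$ a left action of $E$ on $\Phi(M,L)$ (again by Lemma~\ref{comm1}(IV)) and $B_t=\tfrac{1}{\#E}\sum_{e\in E}U_t^e$ its averaging operator. Assembling the three equivalences proves the lemma.

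I do not expect any genuine obstacle. The only steps needing a little care are: checking that $g\mapsto U_v^g$ and $e\mapsto U_t^e$ are honest group actions on $\Phi(M,L)$, so that the spike operators are invertible and the reindexings of the sums over $G$ and $E$ are legitimate (this is immediate from Lemma~\ref{comm1}); and keeping track of the inversion $g\mapsto g^{-1}$ when passing from the action on configurations to the induced action on the coefficient function $\F\mapsto\lambda_\F$.
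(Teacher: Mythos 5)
Your proposal is correct and follows essentially the same route as the paper: identify the ground state with the common kernel of the commuting projectors $\mathrm{id}-A_v$, $\mathrm{id}-B_t$, $\mathrm{id}-C_b^{1_E}$, read off (i) from the diagonal blob operators, and deduce (ii), (iii) from the averaging structure of $A_v$, $B_t$ together with the group-action relations of Lemma \ref{comm1}(IV). The only cosmetic difference is that you derive constancy of the coefficients via $U_v^h A_v=A_v$ (so $U_v^h\Psi=\Psi$), whereas the paper compares coefficients directly using $\langle \F,-\rangle$ and the identity $U_v^{gh}\trrr\F=U_v^{g}\trrr(U_v^{h}\trrr\F)$; both are the same averaging argument.
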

\begin{proof}
  First the `only if' part. Cf. the discussion just before the Lemma. In order that $\Psi\in GS(M,L,\Gc)$ it must be that (i) $C_b^{1_E}(\Psi)=\Psi, \forall b \in L^3$;
that  (ii) $\widehat{ A_v} \Psi =  \Psi, \forall v \in L^0$; and that (iii) $\widehat{B_t} \Psi= \Psi, \forall t \in L^1$.
\begin{enumerate}[(i)]
\item
 For each  $b \in L^3$,
$C_b^{1_E}(\Psi)=
  \displaystyle\sum_{\F \in \Theta(M,L,\Gc)}
  \lambda_\F \delta\Big({\rm Hol}_v^2\big(\F,\bound(b)\big),1_E\Big)\F $.
In order that $C_b^{1_E}(\Psi)=\Psi, \forall b \in L^3$, it must be that whenever $\l_\F\neq 0$:
  $\delta\Big({\rm Hol}_v^2\big(\F,\bound(b)\big),1_E\Big)=1, \forall b \in L^3$. Hence $\l_\F\neq 0 \implies \F$ is 2-flat.

\item Suppose that
  $\widehat{ A_v}(\Psi)=\Psi$, for all $v\in L^0$. Then:
  \begin{align*}
    \Psi&=\sum_{\F' \in \Theta(M,L,\Gc)} \lambda_{\F'} \F'
    =\widehat{ A_v}(\Psi) \;
    =\frac{1}{\#G} \sum_{\F' \in \Theta(M,L,\Gc)} \Big (
    \sum_{h \in G} \l_{\F'}  \widehat{U_v^{h}} (\F') \Big).
  \end{align*}
  Now apply $\langle \F, - \rangle$.
  And since $\widehat{U_v^{gh}}(\F)=\widehat{U_v^{g}}  \big(\widehat{ U_v^{h}} (\F)\big)$,
  we have for each $v \in L^0$ and any $g\in G$:
  \begin{align*}
    \l_\F=\frac{1}{\#G}\sum_{h \in G} \l_{\widehat{U_v^{h^{-1}}}(\F)}
    =  \frac{1}{\#G} \sum_{h \in G} \l_{\widehat{U_v^{h^{-1}}}(\widehat{U_v^g}( \F))}=\l_{\widehat{U_v^g}( \F)}.  \end{align*}

\item Analogously, it order that
$\Psi$ be in the kernel of all $(\id-\widehat{B_t})$,
then $\l_{\F}=\l_{\widehat{U_t^e}(\F)}$.
\end{enumerate}
Conversely, if $\Psi$ satisfies (i),(ii) and (iii), then $\Psi$ will be in the kernel of all  operators  ${\rm id} -\widehat{ A_v}$, ${\rm id} - \widehat{B_t}$  and
${\rm id}-{C_b^{1_E}}$, where $v\in L^0,t\in L^1$ and $b \in L^3$. As such $\Psi$ will be in the ground state $GS(M,L,\Gc)$.
 \end{proof}

Let $\fPhi(M,L,\Gc)$ be the set of 2-flat 2-gauge configurations {in $(M,L)$}; Def. \ref{2flatconf}.
We say that $\F$ and $\F'$ in $\fPhi(M,L,\Gc)$ are equivalent ($\F\cong \F'$) if we can go from $\F$ to $\F'$ through acting by  a
sequence of vertex and edge gauge spikes; {see \ref{exp}.}
 {By Lem. \ref{tech}: we hence have:}
 $${GS(M,L,\Gc)=\Big\{\sum_{\F \in \fPhi(M,L,\Gc)} \lambda_\F \F \in \H(M,L,\Gc) \,\,\,\Big{|}\,\,\, \forall \F,\F' \in \fPhi(M,L,\Gc): \F\cong \F' \implies \l_{\F}=\l_{\F'}\Big \}.}$$

The composite of gauge spikes is a full gauge transformation and,
Lem. \ref{comm1} (V),
any full gauge transformation is the composition of gauge spikes; as such to say that $\F$ and $\F'$ 
in $\fPhi(M,L,\Gc)$ are
equivalent is to say that they are connected  by a full gauge transformation. In other words $\F$ and $\F'$ are equivalent if, and only if, they can be connected by {a morphism in the groupoid {$\fPhi^{\sharp}(M,L,\Gc)$},}
of  2-flat {2-gauge} configurations and full gauge transformations between them; see \S\ref{2flatgroup}.

The set of connected components $[\F']$ of the groupoid {$\fPhi^{\sharp}(M,L,\Gc)$} is denoted by
$
\pi_0\big(\fPhi^\sharp(M,L,\Gc)\big).
$ I.e. $\pi_0\big(\fPhi^\sharp(M,L,\Gc)\big)$ is the set of equivalence classes of objects of
$\fPhi^\sharp(M,L,\Gc) $,
where two 2-flat 2-gauge configurations are equivalent
if a full gauge transformation connects the two.
We  have a basis ${\mathsf B}_0(M,L,\Gc)$ for the ground space $GS(M,L,\Gc)$, in one-to-one correspondence with  {$\pi_0\big(\fPhi^\sharp(M,L,\Gc)\big)$, namely:}
\begin{equation} \label{BGSdef}  
{{{\mathsf B}_0(M,L,\Gc) =
\Big\{ \sum_{\F \in \fPhi(M,L,\Gc) \textrm{ such that }  \F \in [\F']} \F\quad  \Big | [\F']\in
  \pi_0\big(\fPhi^\sharp(M,L,\Gc)\big)\Big\}.}}
\end{equation}  

 As explained in
{\cite[Thm. A]{brown_classifying}
or \cite[Thm. 11.4.19]{brown_higgins_sivera}}
(in the general case of crossed complexes),
there is a natural bijection between elements of
$\pi_0\big(\fPhi^\sharp(M,L,\Gc)\big)=\pi_0({\rm CRS}_1\big(\Pi_2(M,M^1,M^0),\Gc)\big)$
and homotopy classes of maps
$M \to B_\Gc$; cf. Rems. \ref{bh2} and \ref{bh6}.
(For an explanation {of this}  in the crossed module case see
\cite{martins_porter,martins_cw_complex}.)
In particular the cardinality of {the set $\pi_0\big(\fPhi^\sharp(M,L,\Gc)\big)$}
does not depend on  $L$. 

\begin{Theorem}{
 Let $M$ be a compact manifold. Let $L$ be a {2-lattice} decomposition
 $M$. Let {$\Gc$ be a finite crossed module}. Consider the {higher Kitaev} Hamiltonian
 $H_L \colon \H(M,L,\Gc) \to \H(M,L,\Gc)$ of {Def. \ref{hkm}.} Then the
 ground state $GS(M,L,\Gc)$ of $H_L$ 
 has a basis in canonical one-to-one correspondence with the set of homotopy
 classes of maps $f\colon M \to B_\Gc$, where $B_\Gc$ is the
 classifying space of the crossed module $\Gc$. Hence
 $\dim GS(M,L,\Gc)$ depends only on the topology of $M$ and not on the chosen 2-lattice decomposition $L$ of $M$.}
\end{Theorem}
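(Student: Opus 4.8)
The plan is to read off the theorem from the structural results already established, the key point being that the basis of the ground state produced in \eqref{BGSdef} is, up to two canonical bijections, exactly $[M,B_\Gc]$. So the first thing I would do is recall that $GS(M,L)$ has the basis ${\mathsf B}_0(M,L)$ whose elements $\sum_{\F\cong\F'}\F$ are indexed by $\pi_0\big(\stackrel{{\rm flat}\,\,\sharp}{\Phi}(M,L)\big)$, the set of full-gauge-transformation classes of $2$-flat (and fake-flat) gauge configurations. It therefore suffices to construct a natural bijection $\pi_0\big(\stackrel{{\rm flat}\,\,\sharp}{\Phi}(M,L)\big)\cong[M,B_\Gc]$ and transport the basis along it.

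For the bijection I would proceed in two moves. First, by Theorem \ref{2-flat-conf-theorem}, the objects of $\stackrel{{\rm flat}\,\,\sharp}{\Phi}(M,L)$ are precisely the crossed module morphisms $\Pi_2(M,M^1,M^0)\to\Gc$ (recall that attaching cells of dimension $\geq 4$ does not change the underlying crossed module, so $\Pi_2(M,M^1,M^0)=\Pi_2(M^3,M^1,M^0)$). Next I would identify the morphisms of the groupoid $\stackrel{{\rm flat}\,\,\sharp}{\Phi}(M,L)$, i.e.\ the full gauge transformations $\F\ra{U}\F'$, with homotopies of crossed module morphisms in the Brown--Higgins sense: the vertex component $u\colon L^0\to G$ and the edge component $\eta\colon L^1\to E$ of $U$ are exactly the homotopy operators of such a homotopy, Lemma \ref{composition} supplies composition, and the vertical reverses of Definition \ref{d4} supply inverses. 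Consequently $\pi_0\big(\stackrel{{\rm flat}\,\,\sharp}{\Phi}(M,L)\big)$ is the set of homotopy classes of crossed module morphisms $\Pi_2(M,M^1,M^0)\to\Gc$, and this set is already visibly independent of $L$.

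To finish, I would invoke Brown's homotopy classification theorem for crossed modules (\cite[Thm.~A]{brown_classifying}, \cite{brown_higgins_sivera}, with the crossed-module case spelled out in \cite{martins_porter,martins_cw_complex}): since $\Pi_2(M,M^1,M^0)$ is the fundamental crossed module of the skeletal filtration of the CW-complex $M$ underlying $(M,L)$, there is a natural bijection between homotopy classes of morphisms $\Pi_2(M,M^1,M^0)\to\Gc$ and homotopy classes of maps $M\to B_\Gc$. Composing this with the two bijections above gives a canonical basis of $GS(M,L)$ labelled by $[M,B_\Gc]$. Finally, $M$ compact (hence admitting a finite CW presentation) and $\Gc$ finite force $[M,B_\Gc]$ to be a finite set, so $\dim GS(M,L)=\#[M,B_\Gc]$; and the right-hand side depends only on the homotopy type of $M$, in particular only on its topology.

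The hard part is the middle move: verifying precisely that ``$\F\cong\F'$'' in the gauge-theoretic sense --- connected by a composite of vertex and edge gauge spikes --- coincides with the homotopy relation on crossed complex morphisms that underlies the classifying-space statement. This is where the $2$-groupoid refinement mentioned in the Remark after Theorem \ref{group-gauge-transf} has to be matched, degree by degree, with the Brown--Higgins homotopy operators; it is the content of the references \cite{martins_porter,martins_cw_complex} rather than something to be reproved here. The remaining ingredients --- the zero-energy bound, commutation of the local projectors, and the decomposition of arbitrary gauge operators into spikes (Lemma \ref{comm1}(V)) --- are already in hand.
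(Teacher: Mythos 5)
Your proposal is correct and follows essentially the same route as the paper: the paper's own proof is simply to compare the basis \eqref{BGSdef}, indexed by $\pi_0\big(\stackrel{{\rm flat}\,\,\sharp}{\Phi}(M,L)\big)$, with the immediately preceding observation that this set is in natural bijection with homotopy classes of maps $M\to B_\Gc$, citing \cite{brown_classifying,brown_higgins_sivera,martins_porter,martins_cw_complex} for that bijection, exactly as you do (via Theorem \ref{2-flat-conf-theorem} and the identification of full gauge transformations with crossed-complex homotopies). Your write-up merely makes explicit the intermediate identifications that the paper delegates to the same references.
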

\proof
Compare the preceding observation with (\ref{BGSdef}).
\qed

\begin{remark}
 It was proven in \cite{BCKMM} that the dimension of the ground state
 $GS(M,L,\Gc)$  coincides with Yetter invariant \cite{yetter_tqft,porter_tqft} $Y(M\times S^1)$ of $M \times S^1$.
We note that the Yetter invariant of $M\times S^1$ is
 not quite the same as $\dim GS(M\times S^1,L,\Gc)$ since $Y(M\times S^1)$ uses finer
 information on the space of functions $M\times S^1\to B_\Gc$ than its
 number of connected components; see \cite{martins_porter}.
\end{remark}
\begin{remark}
We can write $\dim GS(M,L,\Gc)$ as $\dim GS(M,\Gc)$, since it does not
depend on $L$, {and only on $M$. Indeed $\dim GS(M,\Gc)$ depends only on the
homotopy type of $M$ and the weak homotopy type of the crossed module
$\Gc$ \cite{martins_porter}, since the homotopy type  of $B_\Gc$  depends only on the weak homotopy type of $\Gc$.}
\end{remark}

\subsubsection*{Acknowledgement}
We would like to thank a referee for a previous version of this paper for useful comments. {Alex Bullivant, Zoltan K\'ad\'ar, and Paul Martin thank
  EPSRC for funding under Grant EP/I038683/1.

\bibliographystyle{plain}

\bibliography{HGT2.bib}

\end{document}